\newcommand {\ignore} [1] {}
\def\eps{\varepsilon}
\colorlet{shadecolor}{gray!15}
\newtheorem{thm}{Theorem}
\newtheorem{invariant}{Invariant}
\newtheorem{cor}[thm]{Corollary}
\newtheorem{lem}[thm]{Lemma}
\newtheorem{obs}[thm]{Observation}
\newtheorem{cl}[thm]{Claim}
\theoremstyle{definition}
\theoremstyle{remark}
\def\denseformat{
\setlength{\textheight}{9.5in}
\setlength{\textwidth}{6.9in}
\setlength{\evensidemargin}{-0.3in}
\setlength{\oddsidemargin}{-0.3in}
\setlength{\headsep}{10pt}
\setlength{\topmargin}{-0.44in}
\setlength{\columnsep}{0.375in}
\setlength{\itemsep}{0pt}
}
\newtheorem{theorem}{Theorem}[section]
\newtheorem{definition}[theorem]{Definition}
\def\boldhead#1:{\par\vskip 7pt\noindent{\bf #1:}\hskip 10pt}
\def\ithead#1:{\par\vskip 7pt\noindent{\it #1:}\hskip 10pt}
\def\ceil#1{\lceil #1\rceil}
\def\inline#1:{\par\vskip 7pt\noindent{\bf #1:}\hskip 10pt}
\def\midinline#1:{\par\noindent{\bf #1:}\hskip 10pt}
\def\dnsinline#1:{\par\vskip -7pt\noindent{\bf #1:}\hskip 10pt}
\def\ddnsinline#1:{\newline{\bf #1:}\hskip 10pt}
\def\largeinline#1:{\par\vskip 7pt\noindent{\large\bf #1:}\hskip 10pt}
\long\def\commhide #1\commhideend{}
\long\def\commtim #1\commendt{#1}
\long\def\commb #1\commbend{}
\long\def\commedit #1\commeditend{} 
\long\def\commB #1\commBend{}       
\long\def\commex #1\commexend{}     
\long\def\commsiena #1\commsienaend{}  
\long\def\commBI #1\commBIend{}  
\long\def\CProof #1\CQED{}
\def\qed{\mbox{}\hfill $\Box$\\}
\def\blackslug{\hbox{\hskip 1pt \vrule width 4pt height 8pt
    depth 1.5pt \hskip 1pt}}
\def\QED{\quad\blackslug\lower 8.5pt\null\par}
\long\def\PPP#1{\noindent{\bf Proof:}{ #1}{\quad\blackslug\lower 8.5pt\null}}
\long\def\denspar #1\densend
\newif\ifnotesw\noteswtrue
\ifnotesw\marginpar[\hfill\(\top\)]{\(\top\)}\fi}%
\ifnotesw\marginpar[\hfill\(\bot\)]{\(\bot\)}\fi}
\newcommand{\mnote}[1]%
    {\ifnotesw\marginpar%
        [{\scriptsize\it\begin{minipage}[t]{\marginparwidth}
        \raggedleft#1%
                        \end{minipage}}]%
        {\scriptsize\it\begin{minipage}[t]{\marginparwidth}
        \raggedright#1%
                        \end{minipage}}%
    \fi}
\def\MathF{\hbox{\rm I\kern-2pt F}}
\def\MathP{\hbox{\rm I\kern-2pt P}}
\def\MathR{\hbox{\rm I\kern-2pt R}}
\def\MathZ{\hbox{\sf Z\kern-4pt Z}}
\def\MathN{\hbox{\rm I\kern-2pt I\kern-3.1pt N}}
\def\MathC{\hbox{\rm \kern0.7pt\raise0.8pt\hbox{\footnotesize I}
\kern-4.2pt C}}
\def\MathQ{\hbox{\rm I\kern-6pt Q}}
\newsavebox{\ttop}\newsavebox{\bbot}
\def\eps{\epsilon}
\newtheorem{mdresult}[theorem]{Theorem}
\newenvironment{Theorem}{\begin{mdframed}[backgroundcolor=lightgray!40,topline=false,rightline=false,leftline=false,bottomline=false,innertopmargin=2pt]\begin{mdresult}}{\end{mdresult}\end{mdframed}}
\newtheorem{question}{Question}
\newtheorem{mdresult2}[question]{Question}
\newenvironment{Question}{\begin{mdframed}[backgroundcolor=lightgray!40,topline=false,rightline=false,leftline=false,bottomline=false,innertopmargin=2pt]\begin{mdresult2}}{\end{mdresult2}\end{mdframed}}
\title{Dynamic $((1+\eps)\ln n)$-Approximation Algorithms \\for Minimum Set Cover and Dominating Set \footnote{A preliminary version of this paper has appeared in the proceedings of STOC 2023.}}
\author[1]{Shay Solomon\thanks{Co-funded by the European Union (ERC, DynOpt, 101043159). Views and opinions expressed are however those of the author(s) only and do not necessarily reflect those of the European Union or the European Research Council. Neither the European Union nor the granting authority can be held responsible for them. This research was also supported by the Israel Science Foundation (ISF) grant No.1991/1, and by a grant from the United States-Israel Binational Science Foundation (BSF), Jerusalem, Israel, and the United States National Science Foundation (NSF).}}
\author[1]{Amitai Uzrad\thanks{This research was supported by the Israel Science Foundation (ISF) grant No.1991/1.}}
\affil[1]{Tel Aviv University}
\begin{document}

\date{\empty}

\begin{titlepage}
\def\thepage{}
\maketitle

\begin{abstract}
\noindent The minimum set cover (MSC) problem admits two classic algorithms: a \emph{greedy} $\ln n$-approximation and a primal-dual $f$-approximation, where $n$ is the universe size and $f$ is the maximum frequency of an element. Both algorithms are simple and efficient, and remarkably --- one cannot improve these approximations under hardness results by more than a factor of $(1+\eps)$, for any constant $\eps > 0$.

In their pioneering work, Gupta et al.\ [STOC'17] showed that the greedy algorithm can be \emph{dynamized} to achieve $O(\log n)$-approximation with update time $O(f \log n)$.
Building on this result, 
Hjuler et al.\ [STACS'18] dynamized the greedy minimum dominating set (MDS) algorithm,
achieving a similar approximation with update time $O(\Delta \log n)$ (the analog of $O(f \log n)$), albeit for unweighted instances. The approximations of both algorithms, which are the state-of-the-art, exceed the static $\ln n$-approximation by a rather large constant factor. 
In sharp contrast, the current best dynamic \emph{primal-dual} MSC algorithms  
achieve fast update times
together with an approximation that
exceeds the static $f$-approximation by a factor of (at most) $1+\epsilon$, for any $\epsilon > 0$.

This paper aims to bridge the gap between the best approximation factor of the \emph{dynamic} greedy MSC and MDS algorithms
and the \emph{static} $\ln n$ bound. We present dynamic algorithms for \emph{weighted} greedy MSC and MDS with approximation $(1+\epsilon)\ln n$ for any $\epsilon > 0$, while achieving the same update time (ignoring dependencies on $\epsilon$) of the best previous algorithms (with approximation significantly larger than $\ln n$).
Moreover, we prove that the same algorithms achieve $O(\min \{ \log n, \log C \})$ amortized {\em recourse}; the recourse measures the number of changes to the maintained structure per update step, and the cost of each set lies in the range $[\frac{1}{C},1]$.

\end{abstract}
\end{titlepage}

\pagenumbering {arabic}

\newcommand{\DD}{\ensuremath{\mathcal{D}}}
\newcommand{\addrandom}{\ensuremath{\textnormal{\textsf{insert-hyperedge}}}}
\newcommand{\handleedge}{\ensuremath{\textnormal{\textsf{handle-hyperedge}}}}
\newcommand{\FF}{\mathcal{F}}
\renewcommand{\SS}{\mathcal{S}}

\section{Introduction}\label{sec:intro}

\noindent This paper studies the two closely-related problems of weighted {\em minimum set cover} (shortly, MSC or SC) and weighted {\em minimum dominating set} (shortly, MDS or DS) in the standard {\em dynamic} setting. In the static weighted SC problem, we are given a universe $\mathcal{U}$ of $n$ elements 
and a collection of $m$ sets $\mathcal{S} = (S_1,\ldots,S_m)$ whose union equals $\mathcal{U}$, where each set has a \emph{weight} (or \emph{cost}) assigned to it, and the cost of each set lies in the range $[\frac{1}{C},1]$. The goal is to find a sub-collection $\mathcal{S}'$ of $\mathcal{S}$ 
of minimum total weight, such that each element of $\mathcal{U}$ belongs to at least one set in $\mathcal{S}'$. 
As common, we denote by $f$ the maximum {\em frequency} of any element in $\mathcal{U}$, where an element's frequency is the number of sets in $\mathcal{S}$ it belongs to. 
In the static weighted DS problem, we are given an $n$-vertex $m$-edge graph $G = (V,E)$, where each vertex has a weight assigned to it. The goal is to find a subset of vertices $V' \subseteq V$ of minimum total weight, such that for any vertex $v \in V$, either $v \in V'$ or $v$ has a neighbor in $V'$. 

The SC problem is a central NP-hard problem, which admits two classic algorithms: a {\em greedy} $\ln n$-approximation and a primal-dual $f$-approximation.
One cannot achieve approximation $(1-\epsilon) \ln n$ unless P = NP \cite{williamson2011design,dinur2014analytical};
similarly, one cannot achieve approximation $f-\epsilon$ for any fixed $f$ under the unique games conjecture \cite{KR08}.
The DS problem is a special case of the SC problem, 
but it is also equivalent to it in some sense,  
as the SC problem can be efficiently reduced in an approximation-preserving fashion to MDS, thus in particular one cannot achieve for it approximation $(1-\epsilon) \ln n$ unless P = NP. 
The greedy and primal-dual approximation algorithms for set cover have been extremely well-studied in the static setting, and it is fair to say that they are pretty well understood by now.
The harmonic number $(\ln n)$ greedy approximation \cite{Johnson74,Lovasz75,Chvatal79} is a pearl of algorithms; whether or not one can beat the $\ln n$ approximation guarantee is a question that led to numerous breakthrough lower bound works (see, e.g., \cite{Feige,Slavik,Lund,GuptaNew}), where the ``only goal" of some of them was to improve small leading constants of the $\ln n$ bound.  

We note that the greedy algorithm for DS works in the same way as greedy SC: They are basically two incarnations of the same meta-algorithm, and it achieves the same approximation factor of $\ln n$. 
It is only natural to ask whether one can generalize these classic textbook algorithms for the {\em standard} dynamic setting,
where in the SC (respectively, DS) problem, at each update step a single element (resp., edge) is either inserted to or deleted from the universe (resp., graph),
where we start from a universe (resp., graph) with no elements (resp., edges), and the collection $\SS$ of sets (resp., vertex set $V$) is fixed. \footnote{Note that in the SC and DS problems, the roles of $n$ and $m$ swap: While elements in the SC problem basically correspond to edges in DS, $n$ denotes the number of elements 
in SC and $m$ denotes the number of edges in DS.} 
The dynamic setting is not only important on its own right, but may also help in understanding more thoroughly basic properties of the classic static algorithms, and in particular their robustness: {\bf How {\em robust} are the static algorithms against small adversarial input perturbations that occur repeatedly over time?}

There is a growing interest on the SC and DS problems in the dynamic setting in recent years,
where the holy grail is to coincide with the approximation factor of the static algorithm together with a low update time. 
In their pioneering work, Gupta et al.\ \cite{GKKP17} showed that the greedy SC algorithm can be {\em dynamized} to achieve $O(\log n)$-approximation with update time $O(f \log n)$. 
Building on this result, 
Hjuler et al.\ \cite{Hjuler} dynamized the greedy DS algorithm,
achieving a similar approximation with update time $O(\Delta \log n)$ (the analog of $O(f \log n)$), but only for unweighted instances;
here and throughout $\Delta$ stands for the maximum degree in the graph. 
The approximations of both algorithms, which are the state-of-the-art, exceed the static $\ln n$-approximation by a factor larger than 1000.
In sharp contrast, the current best dynamic {\em primal-dual} SC algorithms provide approximations that match or nearly match the static $f$-approximation. The first primal-dual algorithms, by Gupta et al.\ \cite{GKKP17} and independently by Bhattacharya et al.\ \cite{BCH17},
provided an $O(f^3)$-approximation with $O(f^2)$ update time.
These results were subsequently improved by Abboud~et al.\ \cite{abboud} and Bhattacharya et al.\ \cite{bhattacharya2019new}, which achieve update times of $O(\frac{f^2}{\epsilon^5}\log n)$ and $O(\frac{f}{\epsilon^2}\log (Cn))$, respectively, with an approximation of $(1+\epsilon)f$ (for any $\epsilon > 0$). These results were further improved by Bhattacharya et al.\ \cite{Bhattacharya} and Assadi-Solomon \cite{Assadi}, which achieve update times of $O(\frac{f^2}{\epsilon^3} + \frac{f}{\epsilon^2}\log C)$ and $O(f^2)$, respectively, with an approximation of $(1+\epsilon)f$ (for any $\epsilon > 0$) and $f$, respectively. We note that the algorithm of Bhattacharya et al.\ \cite{Bhattacharya} is deterministic and applies to weighted instances, while that of Assadi-Solomon \cite{Assadi} is randomized and works against an oblivious adversary, and moreover it applies only to unweighted instances, but its approximation is $f$ rather than $(1+\epsilon)f$. 
Recently, Bukov et al.\ \cite{bukov2023nearly}  presented a deterministic algorithm and a randomized algorithm that works against an adaptive adversary, both achieving a $((1+\eps)f)$-approximation, with  amortized update times of $O\left(\frac{1}{\epsilon}f\log f + \frac{f}{\epsilon^3} + \frac{f\log C}{\epsilon^2}\right)$ and $O\left(\frac{f}{\epsilon^3}\log^*f  + \frac{f\log C}{\epsilon^3}\right)$, respectively.

\begin{Question} \label{q1}
Is it possible to achieve approximation better than $c \ln n, c > 1000$ (as in \cite{GKKP17, Hjuler}) for dynamic SC and DS,
even for \emph{unweighted} instances, and ideally an approximation approaching $\ln n$, with a low update time?
\end{Question}

Another important quality measure of dynamic algorithms besides their approximation guarantee and update time is the {\em recourse}, which measures the number of changes to the maintained structure per update step; one can optimize the amortized as well as the worst-case recourse bounds, just as with the update time measure. There is a large body of work from recent years on dynamic algorithms with low recourse; see, e.g., \cite{Bernstein, roie,solo}, and the references therein. In the same paper \cite{GKKP17}, Gupta et al.\ presented an $O(\log n)$-approximation algorithm (still for approximation $c \ln n$, for $c > 1000)$ with constant amortized recourse and polynomial update time. 
Gupta and Levin \cite{roie} generalized the results of Gupta et al.\ on set cover 
with bounded amortized recourse for fully dynamic submodular covering problems by introducing a general framework. The framework of \cite{roie} applies to a very general setting, but it is not very relevant to our work, as it does not bound the update time, and it does not lead to an approximation ratio of $(1+\epsilon)\ln $.

A trivial approach to achieve an approximation ratio of $(1+O(\epsilon))$ for \emph{unweighted} SC is to compute the SC, denote its size by $OPT$, and then after $\epsilon \cdot OPT$ update steps recompute it, and repeat in this way. Although this approach yields an amortized recourse bound of $O(\frac{1}{\epsilon})$, it has exponential running time, far from what we are aiming for. An alternative approach, still for \emph{unweighted} SC, would be to compute an approximation for the SC using the greedy algorithm, and then recompute after $\mathtt{poly}(\epsilon) \cdot Greedy$ update steps (here $Greedy$ denotes the obtained SC size), and repeat; a priori it might not be clear, but we can show that this algorithm yields a $(1+O(\epsilon)) \cdot \ln n$ approximation factor. 
However, the update time is at least linear in $n$, still far from what we are aiming for. Even for \emph{unweighted} instances, it is unclear how to achieve a $(1+\epsilon) \cdot \ln n$ approximation factor with low update time and low recourse.

\begin{Question} \label{q2}
Is it possible to achieve approximation better than $c \ln n, c > 1000$ (as in \cite{GKKP17, Hjuler}) for dynamic \emph{unweighted} SC and DS, 
and ideally an approximation approaching $\ln n$, with a low update time \emph{and} constant recourse?
\end{Question}

For \emph{weighted} SC, it is not clear whether one can achieve an approximation ratio approaching $\ln n$ with low update time, let alone with decent recourse. Recall that in the \emph{weighted} setting, we assume that the cost of each set lies in the range $[\frac{1}{C},1]$. If the answer to Question 2 is affirmative, one might be able to naturally generalize for \emph{weighted} instances by blowing up the recourse by a factor of $C$, which is not ideal. 

\begin{Question} \label{q3}
Is it possible to achieve approximation better than $c \ln n, c > 1000$ (as in \cite{GKKP17, Hjuler}) for dynamic \emph{weighted} SC and DS, 
and ideally an approximation approaching $\ln n$, with a low update time \emph{and} low recourse?
\end{Question}

\subsection{Our Contribution} \label{cont} 
We present a dynamic algorithm for {\em weighted} greedy SC and DS with approximation $(1+\epsilon)\ln n$ for any $\epsilon > 0$, while achieving the same update time (ignoring dependencies on $\epsilon$) of the best previous algorithms (with approximation $c \ln n$, for $c > 1000$), and with amortized recourse of $O_{\epsilon}(\min\{\log n, \log C\})$. 
This answers Questions 1, 2 and 3 in the affirmative ($C=1$ in the \emph{unweighted} setting). We summarize the result in the following statement.

\begin{Theorem} \label{tm1}
One can dynamically maintain a $((1+\epsilon)\ln n)$-approximate \emph{weighted} SC and DS with update time $O(\mathtt{poly}(\frac{1}{\epsilon}) f \log n)$ and $O(\mathtt{poly}(\frac{1}{\epsilon}) \Delta \log n)$, respectively, and with an amortized recourse of $O(\mathtt{poly}(\frac{1}{\epsilon})\min\{\log n, \log C\})$ for any $\epsilon > 0$. 
\end{Theorem}

\noindent \textbf{Remark.} In fact, the approximation factor is $((1+\epsilon)\ln \kappa)$, where the size of each set is bounded by $\kappa$. In the DS case, each vertex can dominate up to $\Delta+1$ vertices (its up to $\Delta$ neighbors and itself), thus we get a $((1+\epsilon)\ln (\Delta+1))$ approximation factor. 

\noindent {\bf Remark.} The exact $\epsilon$-dependence in the update time bounds of Theorem \ref{tm1} is $\epsilon^{-5}$, but we did not try to optimize it, aiming for simplicity. It can be reduced to $\epsilon^{-4}$ using a more careful argument. The exact $\epsilon$-dependence in the recourse bound of Theorem \ref{tm1} is $\epsilon^{-4}$.

\subsection{Organization} \label{prel}
In Section \ref{tech} we give a technical overview of the paper, and in Section \ref{s:basic} we present our dynamic algorithms that prove Theorem \ref{tm1}. Section \ref{time} is devoted to the update time analysis, in Section \ref{rcrs} we prove the amortized recourse, and in Section \ref{approx} we prove the approximation factor, all given in Theorem \ref{tm1}.

\section{Preliminaries and Technical Highlights} \label{tech}

\subsection{Preliminaries}

Instead of discussing both DS and SC problems, for the sake of brevity, we shall limit the discussion of this section to the SC problem. Much of the preliminaries and terminology provided next follows 
\cite{GKKP17}. Regarding DS, \cite{Hjuler} adapted the scheme of \cite{GKKP17} for weighted SC to unweighted DS. 

The input is a set system $(\mathcal{U},\mathcal{F})$. 
In dynamic set cover, the input sequence is 
$\bar{\sigma} = \langle \sigma_1, \sigma_2, \ldots \rangle$,
 where request $\sigma_t$ is either $(e_t,+)$ or $(e_t,-)$. 
Denote by $A_t$ the set of \emph{active} elements at time $t$. 
The initial active set of elements is $A_0 = \emptyset$. If $\sigma_t = (e_t,+)$, then $A_t \leftarrow A_{t-1} \cup \{e_t\}$. If $\sigma_t = (e_t,-)$, then $A_t \leftarrow A_{t-1} \setminus \{e_t\}$. 
We need to maintain a feasible set cover $\mathcal{S}_t \subseteq \mathcal{F}$, i.e., the sets in $\mathcal{S}_t$ must cover the set of active elements $A_t$. 
Define $n_t = |A_t|$ and $n = \max_t (n_t)$. The frequency of an element $e \in \mathcal{U}$, denoted by $f_e$, is the number of sets of $\mathcal{F}$ it belongs to; let $f_t = \max_{e \in A_t} f_e$. 
Let $OPT_t$ be the size of the optimal set cover for the set system $(A_t, \mathcal{F})$.
Throughout we shall try to omit the subscript $t$ of the update step when it is clear from the context, to avoid cluttered notation.

For a set $S$ containing elements that it can cover, let $c(S)$ denote the \emph{cost} (or \emph{weight}) of the set. The collection of sets is fixed, and so is the cost of each set, only the elements in each set can dynamically change. Our dynamic set cover algorithm will maintain an {\em extended solution} for the problem at time step $t$, denoted by $\Phi_t$, where an element of $\Phi_t$ is a pair composed of:

\begin{enumerate}
	\item A covering set $S$.
	\item A set of elements $Cov(S)\subseteq S$, which contains the elements that are \emph{covered} by $S$. 
	
\end{enumerate}

\noindent $|Cov(S)|$ denotes the \emph{cardinality} of $Cov(S)$, and we shall also refer to it as the cardinality of the pair. The ratio $\frac{c(S)}{|Cov(S)|}$ gives the average cost per newly covered element, and one wants to minimize it. We consider the \emph{inverse ratio} $\frac{|Cov(S)|}{c(S)}$, and for technical convenience we chose to use the inverse ratio, even though this is not what was done in \cite{GKKP17}. Any element $(S, Cov(S))$ of $\Phi_t$ will be called a \emph{covering pair}. Each activated element in the universe is in exactly one \emph{Cov set}, i.e., in a set $Cov(S)$ of exactly one covering set $S$, and each set can appear only once (or zero times) as the covering set of a pair. The set cover $\mathcal{S}_t$ induced by $\Phi_t$ is composed of all sets that appear as covering of a pair in $\Phi_t$; thus $\mathcal{S}_t$ denotes the set cover of the active elements $A_t$ at update step $t$, for any $t = 0,1,2,\ldots$, and is a valid set cover. Since $A_0$ contains no elements, $\mathcal{S}_0 = \emptyset$.
The covering pairs are placed into \emph{levels} according to their \emph{inverse ratio}: the level $l$ is defined by a range $R_l := [\beta ^l , \beta ^{l+1})$, where $l \ge 0$ and $\beta = (1+\epsilon)$ for $0<\epsilon<0.4$. 
Recall that we assume that the cost of each set lies in the range $[\frac{1}{C},1]$. Thus, since the costs of sets are at most one, the inverse ratios are no smaller than $1$, and only the non-negative levels are relevant for sets in the cover.

\begin{definition} {\ }
\begin{itemize}
\item
If a pair $(S,Cov(S))$ is placed at level $l$, the elements of $Cov(S)$ are said to be covered at level $l$, and $S$ is said to be covering at level $l$. 
\item The level of a covering pair is denoted by $l_{cov}(S)$, where $S$ is the covering set and we have $l_{cov}(S) \geq 0$. If $S$ is not covering, we define $l_{cov}(S)=-1$. 
\item We denote by $l(e)$ the level in which element $e$ is covered at.
\end{itemize}
\end{definition}

\subsection{Technical Highlights - Update Time}
In this high-level technical overview for the update time analysis we mostly focus on the unweighted setting, since it captures many of the challenges that we faced. Of course, the adaptation to the weighted setting also requires some additional efforts, but we omit the details in this discussion. 

\noindent \textbf{Comparison to Previous Work:} In the dynamic partition of \cite{GKKP17}, the range, which we denote here by $R'_l$, is 
$[2^{l},2^{l+10}]$;
our range assignment is thus different from  \cite{Hjuler,GKKP17} in two ways:
First, in our range assignment consecutive ranges differ by a factor of $\beta = 1+\epsilon$ rather than 2 as in \cite{Hjuler,GKKP17};
using a growth rate of $1+\epsilon$ as we do is crucial for achieving the desired approximation of $(1+\epsilon)\ln n$, but it naturally poses some nontrivial challenges. 
Second, in our range assignment the ranges $R_l$ and $R_{l'}$ are disjoint for any distinct levels $l$ and $l'$,
whereas consecutive ranges in the assignment of \cite{Hjuler,GKKP17} overlap to within a factor of $2^{10}$.
This $2^{10}$ factor overlap is crucially used by the arguments of \cite{Hjuler,GKKP17}, as explained below,
to achieve a low update time, namely update times of $O(f \log n)$ and $O(\Delta \log n)$ in \cite{GKKP17} and \cite{Hjuler}, respectively. On the negative side, the approximation guarantee in the
algorithms of \cite{Hjuler, GKKP17} is blown up by this factor of $2^{10}$; to be more precise, the approximation there is at least $2^{10} \log_2 n = (2^{10} \log_2 e) \ln n > 1477 \ln n$. As we explain below, while the leading factor ($\sim 1477$) in the approximation guarantee was not fully optimized in the works of \cite{Hjuler,GKKP17}, it cannot be reduced all the way to $(1+\epsilon)$ or even close to that.

Ideally, each pair $(S,Cov(S))$ will be placed at a level $l, 0 \le l \le \log_{\beta}(n \cdot C)$, such that $\frac{|Cov(S)|}{c(S)} \in [\beta ^{l} , \beta ^{l+1})$.
The natural analog for this in \cite{GKKP17} (and similarly in \cite{Hjuler}) is that
$\frac{|Cov(S)|}{c(S)} \in [2^{l},2^{l+10}]$, and such a set $S$ would be called \emph{clean}.
A key invariant maintained by the algorithms of \cite{Hjuler,GKKP17} is:

\vspace{1mm}
\noindent {\emph{Basic Invariant 1:} (\cite{Hjuler,GKKP17}) All pairs $(S,Cov(S))$ are always clean.
\vspace{0.1mm}

\noindent Another invariant maintained by the algorithms of \cite{Hjuler,GKKP17} is that elements are covered at the highest possible level with respect to the range assignment. To be more precise, we introduce the following definition.
\begin{definition} \label{nj}
Define $N^j(S)$ to be the set of elements $e$ that can be covered by $S$ and are covered currently at level $j$. That is to say, $N^j(S)=\{e \in S : l(e)=j\}$. 
\end{definition}

\noindent Let $S$ be any (not necessarily covering) set. 
Then set $S$ is called {\em positive clean} (in short, PC) if for every $j$, $\frac{|N^j(S)|}{c(S)} \leq 2^{j+10}$
(in the algorithms of \cite{GKKP17} and \cite{Hjuler}).

\vspace{1mm}
\noindent \emph{Basic Invariant 2:} (\cite{Hjuler,GKKP17}) All sets are positive clean at all times.
\vspace{1mm}

\noindent Maintaining these two basic invariants can be carried out in a rather straightforward way, thus leading to the neat algorithms
of \cite{Hjuler,GKKP17}. A ``clean up''   occurs for every pair $(S,Cov(S))$ that ceases to be clean by basically letting it rise to a higher level or fall to a lower one, based on the inverse ratio of $S$, $\frac{|Cov(S)|}{c(S)}$.
A similar clean-up operation occurs whenever a set $S$ ceases to be positive clean, where the level $j$ for which $S$ is not positive clean is computed, 
and the pair $(S, Cov(S))$, for $Cov(S) = N^j(S)$, is  created at a level such that the pair would be clean. 
Thus, beyond some standard bookkeeping work, the update algorithm of \cite{GKKP17} (and similarly in \cite{Hjuler}) is comprised of a (possibly empty) cascade of clean-up operations, which ends whenever all pairs $(S,Cov(S))$ are clean and all sets $S$ are positive clean. 
While it is a priori unclear why the cascade of clean-up operations terminates, let alone yields a fast algorithm,
its efficiency stems from a clever amortized analysis, which hinges on the aforementioned $2^{10}$ factor overlap --- that blows up the approximation guarantee. In more detail, the cascade of clean-up operations triggers rises and falls of covering sets and covered elements.
The main lemma in the amortized analysis of \cite{Hjuler,GKKP17} asserts that the amortized number of level changes of covered elements is 
linear in the number of levels $\log_2 n$ in the hierarchy, namely $O(\log n)$.
At the outset there are no elements and none of the sets are in the set cover.
Each element starts off with $\Theta(\log n)$ fresh tokens in its disposal, which allows it to rise through all the levels of the hierarchy once. 
However, due to element falls, elements may perform arbitrarily many rises. 
The key invariant in the amortized analysis of \cite{Hjuler,GKKP17} is that an element that is covered at level $l$, for any $l$, has $\Theta(\log n - l)$ tokens at its disposal.
Whenever any element $e$ rises, by leaving some $Cov(S)$ set and joining another one at a higher level, 
it leaves behind $O(1)$ tokens to its old set $Cov(S)$. Clearly, the invariant would hold if there were no falls.
The reason falls do not violate the validity of the invariant is due to the aforementioned $2^{10}$ factor overlap --- a fall of a pair $(S,Cov(S))$ occurs only after the {\em great majority} (roughly a $\left(1 - \frac{1}{2^{10}}\right)$-fraction) of its elements have left it to higher Cov sets. 
Since each of the elements that have left $Cov(S)$ gave $O(1)$ tokens to it, this surplus of tokens can be given to the remaining few elements (roughly a $\frac{1}{2^{10}}$-fraction) in it;
a direct calculation in \cite{Hjuler,GKKP17} shows that if $(S, Cov(S))$ falls
down $i$ levels, each of the remaining vertices in $Cov(S)$ will be given at least $i$ fresh tokens, which suffices for satisfying the invariant at its new level.
{\bf While it appears that the constant factor overlap of $2^{10}$ can be reduced to some extent, it must be \emph{inherently larger} than $1+\epsilon$ {\em for any charging argument to hold}}; indeed, the constant must be large enough so that the fraction of remaining elements in a fallen pair will be sufficiently smaller than the fraction of elements that left the pair.

\bigskip

\noindent \textbf{Our Work:} In contrast to \cite{Hjuler,GKKP17}, our goal is to achieve a near-optimal approximation factor of $(1+\epsilon) \ln n$; 
to achieve such an approximation without blowing up the update time, it is impossible to use the
aforementioned $2^{10}$ factor overlap. 
As mentioned, if one were to tweak the algorithms of \cite{Hjuler,GKKP17} by using a $(1+\epsilon)$ factor overlap instead of
$2^{10}$, then
the amortized analysis of \cite{Hjuler,GKKP17} would break. 
Instead, in our range assignment, the ranges are disjoint and they have a growth rate of $(1+\epsilon)$.
To achieve a fast algorithm, we do not aim at maintaining the two aforementioned invariants (as done in \cite{Hjuler,GKKP17}),
but rather allow violations up to a certain extent; thus our algorithm is more ``lazy'' than those of \cite{Hjuler,GKKP17}.
The exact invariants that our algorithm maintains are presented in 
Section \ref{s:basic} (see Invariants \ref{inv1}-\ref{inv3}). 
First we relax {\em Basic Invariant 2} in the following way. Define $N_j(S)$ to be the set of elements $e$ that can be covered by $S$ and are covered currently at a level \emph{lower} than $j$. That is to say, $N_j(S)=\{e \in S : l(e)<j\}$. We will allow $|N_j(S)|$ for each set $S$ to grow slightly beyond $c(S) \cdot \beta^{j}$, by a factor of $\beta$ (see Invariant \ref{inv1}). Once $|N_j(S)|$ surpasses $c(S) \cdot \beta^{j+1}$, we perform a \emph{local rise}, which means creating $Cov(S)$ at level $j+1$ with $N_{j+1}(S)$ as $Cov(S)$. 
While the relaxation to this invariant is ``local'' in the sense that it applies to each set separately, we next relax {\em Basic Invariant 1} 
in a ``global manner'', by roughly allowing up to an $\epsilon$-fraction 
of the \emph{weight of elements} to violate it, where the weight of an element $e$ is $\frac{1}{\beta^j}$, where $j = l(e)$ (for the weighted and unweighted cases). Whenever more than an $\epsilon$-fraction of the weight of elements violate the invariant, the approximation factor may become too poor, hence we perform a {\em partial reset} up to a certain level which we will call \emph{critical} (see Definition \ref{whatisdirty}, following \cite{abboud}), which cleans-up all pairs and sets up to that level. 
The approach of employing periodic resets is standard in dynamic algorithms, and it was used also in the context of
the set cover problem \cite{abboud,bhattacharya2019new,Bhattacharya} for approximation $(1+\epsilon)f$.
Indeed, our approach builds on techniques established in previous works: \cite{abboud,bhattacharya2019new,Bhattacharya} for $(1+\epsilon)f$-approximation
and \cite{Hjuler,GKKP17} for $c \ln n$-approximation for $c > 1000$.
However, since we aim for approximation $(1+\epsilon)\ln n$ rather than $(1+\epsilon)f$ or $c \ln n$ for $c > 1000$, our algorithm deviates significantly from previous work in several crucial places. 
Moreover, the analysis of our algorithm for the update time
introduces several new insights
(refer to Section \ref{time}  for further details).  

Finally, the maintenance of $N_j(S)$ for each set $S$ and level $j$ differs from \cite{Hjuler,GKKP17}, where they maintain $N^j(S)$. It is inspired by the works of \cite{BGS11,Sol16} on dynamic maximal matching.
Specifically, $N_j(S)$ in our definition is the set of elements that can be covered by $S$, which are currently covered at any level \emph{up to} $j-1$, whereas in \cite{Hjuler,GKKP17} 
it is the set of elements that can be covered by $S$, which are currently covered at level \emph{exactly} $j$. While this change appears to be crucial for achieving the desired approximation guarantee of $(1+\epsilon) \cdot \ln n$, it poses additional technical challenges.
More specifically, maintaining the set of elements that can be covered by $S$ which are covered at level exactly $j$ can be done efficiently in the obvious way, 
whereas maintaining the set of elements that can be covered by $S$ which are covered at any level up to $j-1$ (for all sets $S$ and all levels $j$) will naively blow up the update time by a factor of $\log (n)$. Achieving the same update times as in \cite{GKKP17} and \cite{Hjuler}, namely $O(f \log n)$ and $O(\Delta\log n)$ instead of $O(f \log^2 n)$ and $O(\Delta\log^2 n)$, respectively, requires new ideas; 
refer to Section \ref{updatetime} for further detail. 

\subsection{Technical Highlights - Recourse}

\noindent In contrast to \cite{GKKP17}, where they present two separate algorithms, one achieving a low update time and another achieving a low recourse, our goal is to design a single algorithm that achieves both. Proving that the amortized recourse of our algorithm is constant for the \emph{unweighted} setting is a non-trivial task in and of itself. We first observe that if a local rise of a set $S$ to level $j$ occurs due to Invariant \ref{inv1} (where recall that we would take $N_j(S)$ as the new $Cov(S)$) then the elements in the new $Cov(S)$ cannot all arrive from very low levels (much lower than $j$), otherwise $|N_{j'}(S)|$ would be much larger than $\beta^{j'+1}$ for some $j'<j$ (violating Invariant \ref{inv1}). Using this observation, we bound the amortized number of times elements leave their Cov sets at any fixed level $i$ due to local rises by $O_{\epsilon}(1)$. Then, we employ this and use a non-trivial charging argument, to claim that the amortized number of times a set participates in a partial reset as a set in the set cover is $O_{\epsilon}(1)$. Since the only way for a set to leave the set cover is via partial reset, we obtain the amortized recourse bound of $O_{\epsilon}(1)$.

Upon generalizing the recourse bound from the \emph{unweighted} setting to the \emph{weighted} setting, one can do so quite naturally but by blowing it up by a factor of $C$. One of the main challenges in the weighted case is that the size of Cov sets within a given level could vary by a factor of up to $C$. This in turn can cause up to $C$ times more sets participating in partial resets, which can blow up the recourse bound by a factor of $C$. Thus, in order to achieve a $\log C$ blow-up factor, a more intricate analysis is needed, as explained in detail in Section \ref{rcrs}. 

\subsection{Set Cover vs. Dominating Set}

Although our claims are given in SC terminology for the sake of brevity, they can easily and readily be adapted to the DS problem. Following certain claims where the transition to DS might not be as trivial, we will add a remark regarding the adjustments necessary for the DS problem. In dynamic DS, edges are inserted or deleted, and we maintain a feasible set of vertices $D \subseteq V$ called a \emph{dominating set} such that the union of the sets of neighbors of each vertex in $D$ including the vertices in $D$ is equal to $V$. We denote by $\Delta$ the maximum degree of the dynamic graph throughout the update sequence (the analog of $f$). Each vertex $v$ has a cost (or weight), denoted $c(v)$, where $\frac{1}{C} \leq c(v) \leq 1$ for any $v \in V$. Similarly to SC, our dynamic DS algorithm will maintain an extended solution for the problem at time step $t$, denoted by $\Phi_t$, where an element of $\Phi_t$ is a pair composed of a vertex $v$, and a set of vertices $Dom(v) \subseteq N(v)$, where $N(v)$ is the set of neighbors of $v$ including $v$ itself. The rest of the definitions presented previously in this section apply analogously to DS. In particular, the level of a covering pair is denoted by $l_{dom}(v)$, where $v$ is the dominating vertex and we have $l_{dom}(v) \geq 0$. If $v$ is not dominating, $l_{dom}(v) = -1$, and we denote by $l(v)$ the level in which vertex $v$ is dominated at. There are a few main differences worth noting between the dynamic SC and DS problems:

\begin{itemize}
\item In the DS problem, upon an edge deletion the two endpoints of the edge remain in the graph, whereas following an element deletion in the SC problem it is simply deactivated. This difference makes deletions more interesting in DS, if one endpoint dominated the other of a deleted edge.
\item In the DS problem, since $\Delta$ is an upper bound to the degree, each vertex can be dominated by up to $\Delta+1$ vertices (including itself), and each vertex can dominate up to $\Delta+1$ vertices (including itself). In contrast, in the SC problem an element can indeed be covered by only up to $f$ sets (the analog of $\Delta$), but a set $S$ can cover $\Theta(n)$ elements. Thus, due to any dominating level change of a vertex $v$ in the DS problem ($l_{dom}(v)$), we can spend $O(\Delta)$ time to update all neighbors of $v$ of this change. However, in SC, upon a level change of a set we cannot hope to notify all elements in it, since this could take $\Theta(n)$ time.
\item In the DS problem, each vertex has two roles: one as a dominating vertex and another as a dominated vertex. In SC, the sets and elements each have only one role.
\item In DS, following each update step, the \emph{two} endpoints of the edge are affected, whereas in SC only the updated element is affected. 
\item In the SC problem, the sets that can cover some element $e$ are static. Meaning, while $e$ is active, the same (up to $f$) sets can cover it. In contrast, in the DS problem edges are inserted and deleted, which means that the set of vertices that can dominate a given vertex is dynamic. Still the number of vertices that can dominate a vertex $v$ is bounded by $\Delta+1$, but these vertices can change dynamically.
\end{itemize}

\section{Algorithm Description} \label{s:basic}

\subsection{Detailed Preliminaries and Invariants} \label{prel}

We begin by repeating some definitions from Section \ref{tech} in more detail:

\begin{definition} \label{nj}
Define $N_j(S)$ to be the set of elements $e$ that are contained by $S$ and are covered at a level less than $j$. That is to say, $N_j(S)=\{e \in S : l(e)<j\}$. 
\end{definition}

\begin{definition} \label{positive}
Let $S$ be any (not necessarily covering) set. 
\begin{itemize}
\item $S$ is called {\em positive clean} (in short, PC) if
for every $j > l_{cov}(S)$, $\frac{|N_j(S)|}{c(S)}<\beta^{j}$.
\item $S$  
is called {\emph{positive dirty} (in short, PD)
if there exists a level $j > l_{cov}(S)$, such that $\frac{|N_j(S)|}{c(S)}\geq \beta^{j+1}$; in more detail, we say that $S$ is \emph{positive dirty with respect to level $j$} (in short, $j$-PD), if $j > l_{cov}(S)$ and $\frac{|N_j(S)|}{c(S)}\geq \beta^{j+1}$.}
(Set $S$ may be $j$-PD w.r.t.\ multiple levels $j, j > l_{cov}(S)$.)
\end{itemize}
\end{definition}
\noindent Our algorithm will guarantee that there are no PD sets, i.e.,  it will
maintain the following invariant.
\begin{invariant} [no PD sets] \label{inv1}
For every set $S$ and every $j > l_{cov}(S)$, $\frac{|N_j(S)|}{c(S)}<\beta^{j+1}$.
\end{invariant}
\noindent Note that if a set is not PC, it is not necessarily PD, and vice versa, if it is not PD, it is not necessarily PC --- due to the 
buffer of factor $\beta$ between the two definitions.

\begin{definition} \label{negative}
Let $S$ be a covering set.
$S$ is called {\em negative clean} (in short, NC) if $\frac{|Cov(S)|}{c(S)}\geq \beta^{l_{cov}(S)}$.
\end{definition}

\begin{definition} {\ } \label{dirtycounter}
\begin{itemize}
\item For each level $j$, we define a \emph{dirt counter}, denoted by $\mathcal{D}_j$. Every time an element from the \emph{initial} Cov set (when last created) of a set $S$ at level $j$ leaves $Cov(S)$, $\mathcal{D}_j$ is incremented by $\frac{1}{\beta^j}$.
\item For each pair of levels $j$ and $i$ (where $j \leq i$) we define $\mathcal{D}_{j}^{i} := \sum_{q=j}^{i} \mathcal{D}_q$.
\end{itemize}
\end{definition}

\begin{definition} {\ } \label{whatisdirty}
\begin{itemize}
\item Denote by $\mathcal{S}_{j}^i$ the collection of sets that cover at levels between $j$ and $i$ (including both, where $j \leq i$), and denote by $\mathcal{C}_{j}^i$ their total cost. Let $\mathcal{S}_i = \mathcal{S}_{i}^i$ and $\mathcal{C}_i = \mathcal{C}_{i}^i$ 
\item A level $j$ will be called \emph{dirty} if $\mathcal{D}_{j} \geq \frac{\epsilon}{\beta} \cdot \mathcal{C}_{j}$, and \emph{half-dirty} if $\mathcal{D}_{j} \geq \frac{1}{2} \cdot \frac{\epsilon}{\beta} \cdot \mathcal{C}_{j}$.
\item Following \cite{abboud}, we define a level $i$ to be \emph{critical} if for \emph{every} $j$ that satisfies $0 \leq j \leq i$ we have that ($\mathcal{D}_i^i = \mathcal{D}_{i}$):

\begin{equation} \label{eq1}
\mathcal{D}_{j}^i \geq \frac{\epsilon}{\beta} \cdot \mathcal{C}_{j}^i
\end{equation}

\item Similarly, we define a level $i$ to be \emph{half-critical} if for \emph{every} $j$ that satisfies $0 \leq j \leq i$ we have that:

\begin{equation} \label{eq1half}
\mathcal{D}_{j}^i \geq \frac{1}{2} \cdot \frac{\epsilon}{\beta} \cdot \mathcal{C}_{j}^i
\end{equation}

\noindent Notice that a dirty level is also half-dirty, and a critical level is also half-critical.
\item Define $\mathcal{D} := \sum_j \mathcal{D}_j$ and $\mathcal{C} := \sum_j \mathcal{C}_j$ (where the sum is on all levels). The ``system'' will be called {\em dirty} if $\mathcal{D} \ge \frac{\epsilon}{\beta} \cdot \mathcal{C}$.

\end{itemize}
\end{definition}

\begin{invariant} [the system is not dirty] \label{inv2}
$\mathcal{D} < \frac{\epsilon}{\beta} \cdot \mathcal{C}$
\end{invariant}

\noindent A covering set that is both PC and NC is called \emph{clean}; thus for a clean set $S$,
it holds that (1) $\frac{|Cov(S)|}{c(S)}\geq \beta^{l_{cov}(S)}$, and (2) $\frac{|N_j(S)|}{c(S)} <\beta^{j}$, for every $j > l_{cov}(S)$.
Condition (2) in the particular case $j = l_{cov}(S) + 1$ yields $\frac{|N_{l_{cov}(S) +1}(S)|}{c(S)} < \beta^{l_{cov}(S) +1}$;
thus, since all elements in $Cov(S)$ are covered at level $l_{cov}(S)$, $|Cov(S)| \leq |N_{l_{cov}(S) +1}(S)|$.
We conclude that for a clean covering set $S$:
\begin{equation} \label{cleandom}
\beta^{l_{cov}(S)} \le \frac{|Cov(S)|}{c(S)} < \beta^{l_{cov}(S) +1}.
\end{equation}

\noindent For the next invariant, we introduce the following definition.
\begin{definition} \label{nbr}
For any element $e$, let $\mathcal{F}_e$ denote the collection of sets to which $e$ belongs. Notice that $|\mathcal{F}_e| = f_e$.
\end{definition}

\noindent We will maintain the following invariant for every element,
which says that, for any element $e$, there does not exist any covering set which contains $e$ that covers at a level higher than $l(e)$.
\begin{invariant} \label{inv3}
For any element $e$, $l_{cov}(S) \leq l(e)$, for any $S \in \mathcal{F}_e$.
\end{invariant}

\begin{cl} \label{lem1}
If the system is dirty, a critical and half-critical level exist.
\end{cl}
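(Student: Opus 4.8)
The plan is to show that the largest index $i^\star$ for which the ``suffix-dirtiness'' condition holds is in fact critical, and that an analogous choice works for the half-critical case. First I would observe that if the system is dirty, then level $i=0$ vacuously satisfies \eqref{eq1} with $j=0$ only if $\mathcal{D}_0 \geq \frac{\epsilon}{\beta}\mathcal{C}_0$, which need not hold; so instead I would look at the \emph{top} of the hierarchy. Let $L$ be the highest nonempty level. Since the system is dirty we have $\sum_{q=0}^{L}\mathcal{D}_q = \mathcal{D} \geq \frac{\epsilon}{\beta}\mathcal{C} = \frac{\epsilon}{\beta}\sum_{q=0}^{L}\mathcal{C}_q$, i.e. $\mathcal{D}_0^{L} \geq \frac{\epsilon}{\beta}\mathcal{C}_0^{L}$. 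So the condition ``$\mathcal{D}_0^i \geq \frac{\epsilon}{\beta}\mathcal{C}_0^i$'' holds for $i=L$; let $i^\star$ be the \emph{smallest} index $i$ for which $\mathcal{D}_0^i \geq \frac{\epsilon}{\beta}\mathcal{C}_0^i$. I claim $i^\star$ is critical.

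The key step is the following averaging/prefix argument. Fix any $j$ with $0\le j\le i^\star$; I must show $\mathcal{D}_j^{i^\star}\geq \frac{\epsilon}{\beta}\mathcal{C}_j^{i^\star}$. Write $\mathcal{D}_0^{i^\star} = \mathcal{D}_0^{j-1} + \mathcal{D}_j^{i^\star}$ and similarly $\mathcal{C}_0^{i^\star} = \mathcal{C}_0^{j-1} + \mathcal{C}_j^{i^\star}$ (with the convention that the $0^{j-1}$ terms vanish when $j=0$). By minimality of $i^\star$, for every $i < i^\star$ we have $\mathcal{D}_0^i < \frac{\epsilon}{\beta}\mathcal{C}_0^i$; applying this with $i = j-1$ gives $\mathcal{D}_0^{j-1} < \frac{\epsilon}{\beta}\mathcal{C}_0^{j-1}$. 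Subtracting this strict inequality from $\mathcal{D}_0^{i^\star}\geq \frac{\epsilon}{\beta}\mathcal{C}_0^{i^\star}$ yields $\mathcal{D}_j^{i^\star} > \frac{\epsilon}{\beta}\mathcal{C}_j^{i^\star}$, which is what we wanted (in fact strict). Hence $i^\star$ satisfies \eqref{eq1} for all $j\in[0,i^\star]$ and is critical. The half-critical case is identical verbatim, replacing $\frac{\epsilon}{\beta}$ by $\frac{1}{2}\cdot\frac{\epsilon}{\beta}$ throughout: since a dirty system certainly satisfies $\mathcal{D}\geq \frac12\cdot\frac{\epsilon}{\beta}\mathcal{C}$, the set of indices $i$ with $\mathcal{D}_0^i \geq \frac12\cdot\frac{\epsilon}{\beta}\mathcal{C}_0^i$ is nonempty (it contains $L$), and its minimum element is half-critical by the same prefix-subtraction argument. (One may also simply note that the critical level produced above is automatically half-critical, as remarked right after Definition \ref{whatisdirty}, so it suffices to exhibit one critical level.)

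I expect the only real subtlety — more a bookkeeping point than an obstacle — is the edge case $j=0$, where the ``prefix'' $\mathcal{D}_0^{-1}$, $\mathcal{C}_0^{-1}$ should be read as $0$, so that the condition at $j=0$ for the candidate level $i^\star$ is exactly the defining inequality $\mathcal{D}_0^{i^\star}\geq\frac{\epsilon}{\beta}\mathcal{C}_0^{i^\star}$ that we used to pick $i^\star$; this is why picking the \emph{smallest} such $i$ (rather than, say, the largest) is the right move — it guarantees all shorter prefixes fail the inequality, which is precisely what feeds the subtraction step. A secondary point to state cleanly is that $\mathcal{C}_j^{i^\star}\ge 0$ and the sums are finite (only finitely many levels are nonempty at any time, since levels are indexed by $0\le l\le \log_\beta(nC)$), so all the quantities involved are well-defined finite nonnegative reals and the subtraction of inequalities is valid.
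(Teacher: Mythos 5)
Your proof is correct and takes a genuinely different route from the paper's. The paper argues by contrapositive: assuming no critical level exists, it repeatedly applies non-criticality starting from the top level to obtain a greedy decomposition of $\{0,\dots,i_0-1\}$ into disjoint ranges $[i_{j+1}, i_j-1]$ each satisfying $\mathcal{D}_{i_{j+1}}^{i_j-1} < \frac{\epsilon}{\beta}\mathcal{C}_{i_{j+1}}^{i_j-1}$, and then telescopes to conclude $\mathcal{D}<\frac{\epsilon}{\beta}\mathcal{C}$, contradicting dirtiness. You instead argue directly: taking the \emph{smallest} $i^\star$ with $\mathcal{D}_0^{i^\star}\ge\frac{\epsilon}{\beta}\mathcal{C}_0^{i^\star}$, the minimality immediately gives $\mathcal{D}_0^{j-1}<\frac{\epsilon}{\beta}\mathcal{C}_0^{j-1}$ for every $1\le j\le i^\star$, and subtracting yields $\mathcal{D}_j^{i^\star}>\frac{\epsilon}{\beta}\mathcal{C}_j^{i^\star}$ for all such $j$ (the $j=0$ case being the defining inequality itself). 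Both proofs use the same prefix-telescoping identity $\mathcal{D}_0^i=\mathcal{D}_0^{j-1}+\mathcal{D}_j^i$, but yours is constructive --- it exhibits a concrete critical level (in fact the smallest one) rather than deducing existence from a contradiction --- and arguably more economical, avoiding the bookkeeping of the chain $i_0>i_1>\dots>i_q$. One small point to state carefully: $\mathcal{D}=\sum_q\mathcal{D}_q$ and $\mathcal{C}=\sum_q\mathcal{C}_q$ should be taken over all levels up to some $L'$ large enough that $\mathcal{D}_q=\mathcal{C}_q=0$ for $q>L'$ (which exists since there are finitely many levels); after a partial reset one can have $\mathcal{D}_q>0$ at a level $q$ with no currently covering sets, so ``the highest nonempty level'' should be read with this in mind, but this is a bookkeeping choice of upper index and does not affect the argument.
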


\begin{proof}
We will prove that if a critical level does not exist, the system is not dirty. If a critical level does not exist, it means that the highest level, which we will denote by $i_0 - 1$, is not critical, thus there exists some $i_1$ such that $\mathcal{D}_{i_1}^{i_0 - 1} < \frac{\epsilon}{\beta} \cdot \mathcal{C}_{i_1}^{i_0 - 1}$. Now, level $i_1-1$ is not critical either, thus there exists some $i_2$ such that $\mathcal{D}_{i_2}^{i_1-1} < \frac{\epsilon}{\beta} \cdot \mathcal{C}_{i_2}^{i_1-1}$. We continue in the same manner until we reach level $i_q = 0$. Summing up each side of the inequalities yields 
$$\mathcal{D} = \mathcal{D}_{0}^{i_0-1} ~=~ \sum_{j=0}^{q-1} \mathcal{D}_{i_{j+1}}^{i_j-1} ~<~ \frac{\epsilon}{\beta} \cdot \sum_{j=0}^{q-1} \mathcal{C}_{i_{j+1}}^{i_j-1} ~=~ \frac{\epsilon}{\beta} \cdot \mathcal{C}_{0}^{i_0 - 1} ~=~ \frac{\epsilon}{\beta} \cdot \mathcal{C},$$

\noindent meaning the system is not dirty. Clearly any critical level is also half-critical, so the claim holds.
\qed
\end{proof}

\begin{definition} {\ } \label{good}
\begin{itemize}
\item Denote by $\mathcal{E}_{i}$ the number of elements that leave the \emph{initial} Cov set of a set at level $i$, meaning $\mathcal{E}_{i} = \mathcal{D}_i \cdot \beta^i$. 
\item For each pair of levels $j$ and $i$ (where $j \leq i$) we define $\mathcal{E}_{j}^{i} := \sum_{q=j}^{i} \mathcal{E}_q$.
\item Denote by $\mathcal{L}_i$ the number of elements that are covered at level $i$, and let $\mathcal{L}_{j}^{i} := \sum_{q=j}^{i} \mathcal{L}_q$ (where $j \leq i$).

\end{itemize}
\end{definition}

\noindent \textbf{The static greedy algorithm:} In our dynamic algorithm, every once in a while we will perform a \emph{partial reset} (sometimes abbreviated to a \emph{reset}), which means running the {\em static greedy} weighted set cover algorithm, on a subset of $\mathcal{U}$ and $\mathcal{F}$ as an input. More specifically, the static greedy algorithm receives as an input a collection of sets $\tilde{\mathcal{F}} \subseteq \mathcal{F}$, and a collection of elements $\tilde{\mathcal{U}} \subseteq \mathcal{U}$.
$\tilde{\mathcal{U}}$ contains elements that we removed from the Cov sets they were contained in, and we want to ``re-cover" them, more ``optimally" in a sense. When the static algorithm has finished its execution, they must be covered by a set in $\tilde{\mathcal{F}}$. In each iteration, a set $S \in \tilde{\mathcal{F}}$ with a minimum ratio of cost over number of elements in $\tilde{\mathcal{U}}$ it contains is found. Then, we move $S$ from $\tilde{\mathcal{F}}$ to the set cover $\mathcal{S}$, and remove all elements that were in $\tilde{\mathcal{U}}$ (and now covered by $S$, to form $Cov(S)$) from $\tilde{\mathcal{U}}$. We do not re-cover elements outside of $\tilde{\mathcal{U}}$ during this process. This is repeated until $\tilde{\mathcal{U}}$ is empty, thus eventually all elements in the original set $\tilde{\mathcal{U}}$ are covered by sets in the original set $\tilde{\mathcal{F}}$, and thus also by the set cover $\mathcal{S}$. In Section \ref{alg} we describe the partial reset procedure, as employed by our algorithm.

\bigskip

\noindent \textbf{Remark.} In DS, the static greedy algorithm will receive two disjoint sets of vertices as input, denoted $W$ and $B$. The set $W$ contains all vertices that are currently not dominated by the dominating set, and when the static algorithm has finished its execution, they must be dominated by a vertex in the set $B \cup W$. In each iteration, a vertex $v \in B \cup W$ with a minimum ratio of cost over number of neighbors in $W$ is found. Then, we move $v$ from $B \cup W$ to the dominating set, and move all vertices that were in $W$ and are currently dominated by $v$ to $B$. This is repeated until $W$ is empty.

\subsection{Algorithm} \label{alg}
Our update algorithm consists of the basic insertion and deletion algorithms (Algorithms \ref{alg1} and \ref{alg2})
and two procedures (Algorithms \ref{alg3} and \ref{alg4}), which we will refer to as the {\em local rise} and {\em partial reset} procedures.
For each set $S$, we shall maintain the $\rho = O(\log_{\beta} n)$ counters $N_0(S), ..., N_\rho (S)$;
see Section \ref{rel} for details regarding why $\rho = O(\log_{\beta} n)$ and not $\Theta(\log_{\beta} n \cdot C)$.
For efficiency purposes, we shall actually maintain only ($1+\epsilon)$-approximate counters (see Section \ref{updatetime}).
Every time $|Cov(S)|$ decrements (which, as shown below, could happen due to element deletion or local rise), we update $\mathcal{D}_{l_{cov}(S)}$ accordingly. Every time a non-covering set joins the SC (which could happen due to element insertion, local rise, or partial reset) or a covering set leaves the SC (which could happen only due to a partial reset) we update $\mathcal{S}$ and $\mathcal{C}$ accordingly. 
As will become evident later, the maintenance of $\mathcal{D}$, $\mathcal{S}$, $\mathcal{C}$ and all the $\mathcal{D}_j$, $\mathcal{S}_j$, $\mathcal{C}_j$, $\mathcal{E}_j$ and $\mathcal{L}_j$ counters can be done efficiently in a straightforward way.

\paragraph{Insertion and deletion algorithms.}
Upon insertion of element $e$, we add $e$ to $Cov(S)$ where $S = \arg \max_{S'} \{l_{cov}(S') : S' \in \mathcal{F}_e\}$ (breaking ties arbitrarily), in order to maintain Invariant \ref{inv3}. If none of the sets in $\mathcal{F}_e$ are in the set cover, then we choose one with minimum weight, $S$, add it to the set over at level $\lfloor \log_{\beta}(\frac{1}{c(S)}) \rfloor$ (so that Equation (\ref{cleandom}) would hold), and add $e$ to $Cov(S)$. We also update $N_j(S)$ for all $S \in \mathcal{F}_e$ and for all relevant $j$'s. Upon deletion of element $e$, we simply update $Cov(S)$ where $e \in Cov(S)$, and update $N_j(S)$ for all $S \in \mathcal{F}_e$ and for all relevant $j$'s. For DS, upon insertion of edge $(u,v)$, we first update $N_j(u)$ and $N_j(v)$ for all relevant $j$'s. {If there is a violation of Invariant \ref{inv3}, it is   because $l_{dom}(u) > l(v)$ or $l_{dom}(v) > l(u)$ hold ($l_{dom}(v)$ is the analog of $l_{cov}(S)$). In the former (resp., latter) case we remove $v$ (resp., $u$) from the Dom set it belongs to and move it to $Dom(u)$ (resp., $Dom(v)$)}, the analog of $Cov(S)$. As a result we must update all $N_j(w)$'s for each $w$ in $N(v)$ (resp., $N(u)$), and for all relevant $j$'s. Upon deletion of edge $(u,v)$, if neither dominates the other, then we simply update $N_j(u)$ and $N_j(v)$ for all relevant $j$'s. If (without loss of generality) $u$ dominated $v$, then $v$ joins the Dom of its highest-level dominating neighbor; this guarantees that Invariant \ref{inv3} is maintained. As a result we must update all $N_j(w)$'s for each $w \in N(v)$, and for all relevant $j$'s.  
See Algorithms \ref{alg1} and \ref{alg2} for the pseudo-code of the element insertion and deletion algorithms;
the full description of these algorithms relies on the local rise and partial reset procedures,  described next.

\paragraph{Local rise.} Once a set $S$ becomes PD, and in particular $j$-PD for some $j$ (see Definition \ref{positive}), we create $Cov(S)$ at level $j+1$ with $N_{j+1}(S)$ as $Cov(S)$. If $S$ is $j$-PD with respect to multiple levels $j$, we take the highest level $j$. Generally, 
if after any update step there are multiple PD sets (not necessarily with respect to the same level $j$), we begin by creating the pair with the highest level $j$. 
This new creation of $Cov(S)$ is called a \emph{local rise} of $S$ to level $j+1$. For DS, no significant changes are necessary. Once a vertex $v$ is $j$-PD, we create $Dom(v)$ at level $j+1$, with $N_{j+1}(v)$ as $Dom(v)$. See Algorithm \ref{alg3} for the pseudo-code of the local rise procedure.

\begin{obs} \label{obclean}
Following a local rise of set $S$, it is NC and not PD, and Equation (\ref{cleandom}) holds.  
\end{obs}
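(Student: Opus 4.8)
The plan is to read off all three assertions directly from the definitions, using the maximality of the level at which the local rise is triggered. Write $j$ for the highest level with respect to which $S$ is positive dirty at the moment the local rise occurs, so by Definition \ref{positive} we have $j > l_{cov}(S)$ (the \emph{old} covering level, possibly $-1$) and $\frac{|N_j(S)|}{c(S)}\ge \beta^{j+1}$; by the description of the local rise procedure, afterwards $l_{cov}(S)=j+1$ and $Cov(S)=N_{j+1}(S)$, where the latter set is taken with respect to the levels just before the rise. Throughout I will use the trivial inclusion $N_j(S)\subseteq N_{j+1}(S)$ from Definition \ref{nj}, since $l(e)<j$ implies $l(e)<j+1$.

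For NC and the left inequality of (\ref{cleandom}): since $|Cov(S)|=|N_{j+1}(S)|\ge |N_j(S)|\ge c(S)\,\beta^{j+1}$, we get $\frac{|Cov(S)|}{c(S)}\ge \beta^{j+1}=\beta^{l_{cov}(S)}$, which is exactly the NC condition of Definition \ref{negative} and the lower bound in (\ref{cleandom}). For the right inequality of (\ref{cleandom}): because $j$ was chosen as the \emph{highest} positive-dirty level, and because $j+1>j>l_{cov}^{\mathrm{old}}(S)$ makes level $j+1$ eligible in Definition \ref{positive}, the set $S$ is not $(j+1)$-PD, i.e.\ $\frac{|N_{j+1}(S)|}{c(S)}<\beta^{j+2}$; hence $\frac{|Cov(S)|}{c(S)}=\frac{|N_{j+1}(S)|}{c(S)}<\beta^{j+2}=\beta^{l_{cov}(S)+1}$.

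For ``not PD'': I must rule out any level $j'>l_{cov}(S)=j+1$ (so $j'\ge j+2$) with $\frac{|N_{j'}(S)|}{c(S)}\ge\beta^{j'+1}$ after the rise. The key point is that for every $j'\ge j+2$ the counter $N_{j'}(S)$ is unchanged by this local rise: the rise only relocates the elements of $N_{j+1}(S)$ (those with old level $\le j$) up to level $j+1$, and since $j+1<j'$, an element of $S$ has level $<j'$ after the rise exactly when it did before. So $\frac{|N_{j'}(S)|}{c(S)}$ is the same before and after the rise, and again by maximality of $j$ (here $j'>j$ and $j'>l_{cov}^{\mathrm{old}}(S)$) this quantity was $<\beta^{j'+1}$ before, hence remains $<\beta^{j'+1}$ after. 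Thus no eligible level witnesses positive dirtiness, i.e.\ $S$ is not PD.

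The whole argument is bookkeeping with the definitions; the only step requiring care is the ``not PD'' part, where one must check that relocating the elements of $N_{j+1}(S)$ up to level $j+1$ leaves every higher counter $N_{j'}(S)$ with $j'\ge j+2$ intact, so that the maximality of $j$ can be reused verbatim for the post-rise configuration.
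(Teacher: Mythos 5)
Your proof is correct and follows essentially the same approach as the paper: NC and the lower bound of (\ref{cleandom}) come from $S$ being $j$-PD and $N_j(S)\subseteq N_{j+1}(S)$, and the upper bound of (\ref{cleandom}) comes from the maximality of $j$. For the ``not PD'' part you are somewhat more explicit than the paper---you spell out that the counters $N_{j'}(S)$ for $j'\ge j+2$ are unchanged by the local rise (since it only relocates elements from levels $\le j$ to level $j+1$), whereas the paper establishes the pre-rise bound $\frac{|N_l(S)|}{c(S)}<\beta^{l+1}$ for all $l>j$ and leaves the carry-over to the post-rise state implicit; this extra care is a small but genuine improvement in rigor, not a different route.
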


\begin{proof}
{Suppose that $S$ performs a local rise to level $j+1$. Note that since $S$ was $j$-PD prior to the local rise, $\frac{|N_j(S)|}{c(S)}$ was $\geq \beta^{j+1}$. Since $N_{j+1}(S)$ is taken as the new $Cov(S)$, and $N_j(S) \subseteq N_{j+1}(S)$, we get that indeed $\frac{|Cov(S)|}{c(S)} \geq \beta^{j+1}$ following the local rise. On the other hand $\frac{|N_l(S)|}{c(S)} < \beta^{l+1}$ for every $l > j$, 
otherwise
$S$ would also be $l$-PD prior to the local rise, contradicting the fact that the rise of $S$ was to the highest possible level. In particular, $\frac{|N_{j+1}(S)|}{c(S)}$ was $<\beta^{j+2}$, and again we remind that $N_{j+1}(S)$ is taken as the new $Cov(S)$. Thus, we get that following the local rise to $j+1$, $\frac{|Cov(S)|}{c(S)} \geq \beta^{j+1}$, and $\frac{|Cov(S)|}{c(S)} < \beta^{j+2}$.}
\qed
\end{proof}

\begin{obs} \label{obs1}
For any covering set $S$, $\frac{|Cov(S)|}{c(S)} < \beta^{l_{cov}(S) + 2}$.
\end{obs}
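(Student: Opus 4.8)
The plan is to get the bound straight out of Invariant \ref{inv1}, with no case analysis. First I would fix a covering set $S$ and set $l := l_{cov}(S) \ge 0$. The key elementary observation is that every $e \in Cov(S)$ satisfies $e \in S$ and $l(e) = l$: indeed $Cov(S) \subseteq S$ by definition of a covering pair, and each active element lies in exactly one $Cov$ set, so its level $l(e)$ is well-defined and equals the level of that pair, namely $l$. Consequently $Cov(S) \subseteq \{e \in S : l(e) < l+1\} = N_{l+1}(S)$, and so $|Cov(S)| \le |N_{l+1}(S)|$.

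Next I would invoke Invariant \ref{inv1} at level $j = l+1$. Since $j = l+1 > l = l_{cov}(S)$, the invariant applies and gives $\frac{|N_{l+1}(S)|}{c(S)} < \beta^{(l+1)+1} = \beta^{l+2}$. Chaining this with the containment from the previous paragraph yields $\frac{|Cov(S)|}{c(S)} \le \frac{|N_{l+1}(S)|}{c(S)} < \beta^{l_{cov}(S)+2}$, which is exactly the claim.

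This settles the statement at every point at which Invariant \ref{inv1} holds, i.e., between update steps. If one also wants the bound to survive the intermediate states of an update step, I would additionally check the (only) three ways $Cov(S)$ can change: a local rise, where Observation \ref{obclean} already establishes $\frac{|Cov(S)|}{c(S)} < \beta^{l_{cov}(S)+2}$ immediately afterwards; a partial reset, where the static greedy procedure places $(S,Cov(S))$ so that Equation (\ref{cleandom}) holds, which is stronger; and an element insertion, which either creates a new singleton covering set $S$ at level $\lfloor \log_\beta(\tfrac{1}{c(S)}) \rfloor$ (so $\frac{|Cov(S)|}{c(S)} = \tfrac{1}{c(S)} < \beta^{l_{cov}(S)+1}$) or adds one element to an existing $Cov(S)$; in the latter case, the moment $|Cov(S)|$ reaches $\beta^{l_{cov}(S)+2}c(S)$ we already have $|N_{l_{cov}(S)+1}(S)| \ge |Cov(S)| \ge \beta^{l_{cov}(S)+2}c(S)$, so $S$ is $(l_{cov}(S)+1)$-PD and a local rise fires, bringing us back to the first case. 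I do not anticipate a real obstacle here; the one place to be careful is the claim that all elements of $Cov(S)$ sit at level exactly $l_{cov}(S)$ --- this is what makes $Cov(S) \subseteq N_{l_{cov}(S)+1}(S)$ hold and is the hinge of the whole argument.
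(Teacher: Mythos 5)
Your proof is correct and follows essentially the same route as the paper's: both rest on the containment $Cov(S) \subseteq N_{l_{cov}(S)+1}(S)$ together with Invariant~\ref{inv1} (no PD sets). The paper phrases it contrapositively (a ratio $\geq \beta^{l_{cov}(S)+2}$ would make $S$ $(l_{cov}(S)+1)$-PD, forcing a local rise), whereas you apply the invariant directly; the substance is identical, and your extra paragraph on intermediate states is a harmless elaboration beyond what the paper records.
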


\begin{proof}
{If $\frac{|Cov(S)|}{c(S)} \geq \beta^{l_{cov}(S) + 2}$, then since $Cov(S) \subseteq N_{l_{cov}(S)+1}(S)$, it is $(l_{cov}(S)+1)$-PD, and thus it would perform a local rise to level $\geq l_{cov}(S)+2$.}
\qed
\end{proof}

\noindent We will argue later (see Claim \ref{clm0}) that a local rise cannot trigger any violation to Invariant \ref{inv3}.

\subsubsection{Partial Reset}

\noindent When the system is dirty, a half critical level exists by Claim \ref{lem1}, and we shall perform a \emph{partial reset} up to one of the existing half-critical levels, denoted by $i_{crit}$. 
Determining if the system is dirty is straightforward, given the values of $\mathcal{D}$ and $\mathcal{C}$, which can be efficiently maintained in the obvious way. 
In Section \ref{updatetime} we describe how to compute a half-critical level, as well as the entire partial reset procedure, efficiently. 

At the outset, we perform an ``imaginary" reset on the entire initial system (with no elements), where we may set $i_{crit}$ to be the highest level, and initially $Cov(S) = \emptyset$ and for all sets $S$. A reset starts by emptying $Cov(S)$ for any set $S$ such that $l_{cov}(S) \leq i_{crit}$. We transfer all the elements that were taken out of these sets to the set $\tilde{\mathcal{U}}$, and all of the sets $S$ that contain at least one element from $\tilde{\mathcal{U}}$ which satisfy $l_{cov}(S) \leq i_{crit}$ (the emptied out sets plus perhaps some that were not in the set cover) to the set $\tilde{\mathcal{F}}$. \textbf{We reset all of the $\mathcal{D}_i$ and $\mathcal{E}_i$ counters for all $i \leq i_{crit}$}.

\bigskip

\noindent \textbf{Remark.} In DS, a reset starts by emptying $Dom(v)$ for any vertex $v$ such that $l_{dom}(v) \leq i_{crit}$. We transfer every vertex $u$ such that $l(u) \leq i_{crit}$ to $W$, and transfer to the set $B$ all neighbors of vertices in $W$ that dominate at a level up to $i_{crit}$ (including those that do not dominate at all).  

\bigskip

\noindent We apply the static algorithm on $\tilde{\mathcal{U}}$ and $\tilde{\mathcal{F}}$ with the following clarification:

\paragraph{Placement into levels.}
Consider a set $S$ that becomes covering as a result of the static greedy algorithm. $S$ is moved to $\mathcal{S}$, and all elements which are contained in $S$ and were in $\tilde{\mathcal{U}}$ now form $Cov(S)$, which is placed at level $\lfloor \log_{\beta}(\frac{|Cov(S)|}{c(S)}) \rfloor$.

\paragraph{Basic properties of partial reset.} We will argue later (see Claim \ref{clm0}) that a partial reset cannot trigger any violations to Invariant \ref{inv3}. 
We record the following readily verified observation for further usage.

\begin{obs} \label{obs2}
Only elements in $\tilde{\mathcal{U}}$ can change their level due to a partial reset. 
\end{obs}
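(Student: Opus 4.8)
The statement to prove is Observation~\ref{obs2}: \emph{Only elements in $\tilde{\mathcal{U}}$ can change their level due to a partial reset.}

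\medskip

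\noindent\textbf{Plan of proof.} The approach is to track carefully which elements are touched by the partial reset procedure as described in the ``Partial Reset'' paragraph, and to argue that every element whose level potentially changes must have been placed into $\tilde{\mathcal{U}}$. First I would recall the mechanics of the reset: we pick a half-critical level $i_{crit}$; we empty $Cov(S)$ for every covering set $S$ with $l_{cov}(S)\le i_{crit}$; all elements removed from these emptied $Cov$ sets go into $\tilde{\mathcal{U}}$; we form $\tilde{\mathcal{F}}$ from the sets at level $\le i_{crit}$ that contain at least one element of $\tilde{\mathcal{U}}$; and then the static greedy algorithm re-covers exactly the elements of $\tilde{\mathcal{U}}$, assigning each newly-covering set $S$ to level $\lfloor\log_\beta(|Cov(S)|/c(S))\rfloor$. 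The point is that the static greedy invocation only ever modifies $Cov$ sets of sets in $\mathcal{S}$ and only ever re-covers elements that currently sit in $\tilde{\mathcal{U}}$; by construction it ``does not re-cover elements outside of $\tilde{\mathcal{U}}$.''

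\medskip

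\noindent The key steps, in order, would be: (i) Consider an arbitrary element $e$ that is \emph{not} in $\tilde{\mathcal{U}}$ at the start of the static-algorithm phase. Then $e$ was covered, before the reset, by some covering set $S_e$ with $e\in Cov(S_e)$, and either $S_e$ was not emptied (so $l_{cov}(S_e) > i_{crit}$), or $S_e$ was emptied but $e$ was not placed in $\tilde{\mathcal{U}}$ — the latter cannot happen since \emph{all} elements taken out of emptied sets are transferred to $\tilde{\mathcal{U}}$. Hence $l_{cov}(S_e) > i_{crit}$, and $S_e$ is untouched by the ``emptying'' phase. (ii) Since the static greedy algorithm operates only on $\tilde{\mathcal{F}}\subseteq\{S : l_{cov}(S)\le i_{crit}\}$ and only reassigns elements of $\tilde{\mathcal{U}}$, it never removes $e$ from $Cov(S_e)$ nor adds $e$ to any other $Cov$ set; thus $e$ remains in $Cov(S_e)$ throughout. (iii) The level of $S_e$ does not change during the reset: it was not emptied, it is not a newly-covering set, and the only level reassignments performed are to the newly-covering sets produced by greedy (and the counters $\mathcal{D}_i,\mathcal{E}_i$ for $i\le i_{crit}$, which do not affect levels). (iv) Therefore $l(e)=l_{cov}(S_e)$ is the same after the reset as before, proving that $e$'s level is unchanged. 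Since $e\notin\tilde{\mathcal{U}}$ was arbitrary, only elements of $\tilde{\mathcal{U}}$ can change level. One should also note, for completeness, that Invariant~\ref{inv3} combined with the choice of which $Cov$ set each element sits in means $l(e) = l_{cov}(S_e)$ is well-defined, but this is not really needed — the cleanest formulation just says: an element's level equals the level of the unique covering set whose $Cov$ set contains it, and we have shown that set and its level are both unaffected for $e\notin\tilde{\mathcal{U}}$.

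\medskip

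\noindent\textbf{Main obstacle.} This is essentially a bookkeeping observation, so there is no deep obstacle; the only mild subtlety is making sure the argument correctly handles the interaction between the ``emptying'' phase and the static-greedy phase — in particular, that an element outside $\tilde{\mathcal{U}}$ cannot be \emph{re-covered} as a side effect of greedy picking a set $S\in\tilde{\mathcal{F}}$ that happens to also contain $e$. This is ruled out precisely by the stipulation (made when the static greedy algorithm was introduced) that it removes from $\tilde{\mathcal{U}}$ only the elements it covers and ``does not re-cover elements outside of $\tilde{\mathcal{U}}$,'' i.e., when $S$ is moved to $\mathcal{S}$, the new $Cov(S)$ consists only of the elements of $S$ that currently lie in $\tilde{\mathcal{U}}$, not all of $S$. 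Spelling this out carefully is the one place where one must be precise; everything else is immediate from the description of the procedure. For DS the same argument applies verbatim with $W$ in place of $\tilde{\mathcal{U}}$, $B\cup W$ in place of $\tilde{\mathcal{F}}$, $Dom(\cdot)$ in place of $Cov(\cdot)$, and $l_{dom}$ in place of $l_{cov}$.
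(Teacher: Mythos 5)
Your proof is correct and matches the paper's intent: the paper states this observation as ``readily verified'' without giving a proof, and you have simply spelled out the bookkeeping that makes it so. In particular, you correctly identify the one subtlety — that when greedy selects $S\in\tilde{\mathcal{F}}$ it sets $Cov(S) = S\cap\tilde{\mathcal{U}}$ rather than all of $S$, so an element $e\notin\tilde{\mathcal{U}}$ that happens to lie in $S$ is not swept up — and the rest follows since the covering set of such an $e$ has $l_{cov}>i_{crit}$ and is untouched throughout.
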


{
\noindent{\bf Remark.}
We shall view each element in $\tilde{\mathcal{U}}$ as changing its level following a partial reset, even if it happens to remain covered in the same level. This tweak may only increase the number of element level changes; thus by upper bounding the number of element level changes under this tweak, we will bound also the actual number of element level changes.
}

\bigskip

\noindent We make the following two related observations.
\begin{obs} \label{parclean}
Any set $S$ that is covering due to a partial reset is NC and is not PD, and Equation (\ref{cleandom}) holds.  
\end{obs}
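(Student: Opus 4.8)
The plan is to mirror the argument used for the local-rise case in Observation \ref{obclean}, but adapting it to the placement rule of the static greedy procedure. Recall that any set $S$ that becomes covering due to a partial reset has its $Cov(S)$ formed by all elements of $S$ that were in $\tilde{\mathcal{U}}$ at the moment $S$ was selected by the static greedy algorithm, and then $Cov(S)$ is placed at level $l := \lfloor \log_{\beta}(\frac{|Cov(S)|}{c(S)}) \rfloor$. The first step is to verify NC: by the definition of the floor placement we have $\beta^{l} \le \frac{|Cov(S)|}{c(S)} < \beta^{l+1}$, and the left inequality is exactly the condition $\frac{|Cov(S)|}{c(S)} \ge \beta^{l_{cov}(S)}$ required for $S$ to be negative clean (Definition \ref{negative}). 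This also gives the lower half of Equation (\ref{cleandom}), so it remains to establish that $S$ is not PD (which also yields the upper half, via the case $j = l+1$ together with $|Cov(S)| \le |N_{l+1}(S)|$).

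The heart of the matter is to show $S$ is not PD, i.e.\ for every $j > l$ we have $\frac{|N_j(S)|}{c(S)} < \beta^{j+1}$. I would split this according to whether $j$ is at most $i_{crit}$ or above it. For levels $j$ with $l < j \le i_{crit}+1$: after the reset completes, every element of $S$ that was transferred to $\tilde{\mathcal{U}}$ has been re-covered, and by Invariant \ref{inv3} (which, as the text notes, is not violated by the reset — see Claim \ref{clm0}) no such element is now covered at a level strictly below $l_{cov}(S) = l$; moreover any element of $S$ that was \emph{not} in $\tilde{\mathcal{U}}$ was covered before the reset at a level $> i_{crit} \ge j-1$, hence not below level $j$. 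Combining, the only elements of $S$ covered at a level $< j$ are those in $Cov(S)$ itself (all at level $l < j$), so $N_j(S) = Cov(S)$ and $\frac{|N_j(S)|}{c(S)} = \frac{|Cov(S)|}{c(S)} < \beta^{l+1} \le \beta^{j}$, which is well under $\beta^{j+1}$. For levels $j > i_{crit}+1$, again every element of $S$ counted in $N_j(S)$ is covered at some level $< j$; those covered at levels $> i_{crit}$ were untouched by the reset, and for those the claim that $\frac{|N_j(S)|}{c(S)}$ was already below the PD threshold held just before the reset (Invariant \ref{inv1}) and could only have \emph{decreased} for the untouched part — but one must be careful, because elements re-covered by the reset may have landed at levels in the range $(l, i_{crit}]$, all of which are $< j$, potentially inflating $|N_j(S)|$. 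Here I would argue that every element of $\tilde{\mathcal{U}}\cap S$ now sits in some $Cov(S')$ with $l_{cov}(S') \le i_{crit} < j$, so it is still counted in $N_j(S)$; thus $|N_j(S)|$ after the reset equals (elements of $S$ outside $\tilde{\mathcal U}$ covered at level $<j$) plus $|\tilde{\mathcal U}\cap S|$, and the former is exactly the count that was $< c(S)\beta^{j+1}$ before (by Invariant \ref{inv1}, using that the static greedy did not touch those elements, minus the ones that were in $\tilde{\mathcal U}$) — so the bookkeeping has to be done carefully to see that the sum does not exceed $c(S)\beta^{j+1}$.

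The main obstacle I anticipate is precisely this last point: controlling $|N_j(S)|$ for $j$ above the critical level after a reset re-distributes a batch of elements into the low levels. The clean way to handle it is to observe that $N_j(S)$ is counted over \emph{all} elements of $S$ covered below $j$, and that reset only \emph{removes} coverage at low levels and then re-adds it at levels $\le i_{crit} < j$, so the membership of $N_j(S)$ for $j > i_{crit}$ is actually \emph{unchanged} by the reset (an element of $S$ is covered below $j$ before the reset iff it is covered below $j$ after) — hence Invariant \ref{inv1}, which held just before the reset, continues to hold for these levels, and $S$ is not $j$-PD for any $j > i_{crit}$. That reduces everything to the already-handled range $l < j \le i_{crit}+1$, completing the proof. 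I would phrase the final writeup around this observation to keep it short, invoking Claim \ref{clm0} for the non-violation of Invariant \ref{inv3} and the floor-placement inequality for NC and the lower bound in (\ref{cleandom}).
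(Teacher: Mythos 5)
Your overall decomposition matches the paper: you verify NC and the lower half of Equation (\ref{cleandom}) directly from the floor-placement rule, and you split the not-PD argument into $l < j \le i_{crit}+1$ and $j > i_{crit}+1$. Your handling of the second range is correct and closely mirrors the paper's, which also appeals to Observation \ref{obs2} to conclude that $N_j(S)$ is not inflated by the reset above the critical level, so Invariant \ref{inv1} from before the reset carries over.

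However, your treatment of the range $l < j \le i_{crit}+1$ contains a genuine gap. You conclude that $N_j(S) = Cov(S)$ from the two facts that (i) elements of $S$ in $\tilde{\mathcal{U}}$ end up at level $\ge l$, and (ii) elements of $S$ not in $\tilde{\mathcal{U}}$ stay at level $> i_{crit} \ge j-1$. But (i) and (ii) together only show that the elements of $S$ at level $< j$ are all elements of $S \cap \tilde{\mathcal{U}}$ at levels in $[l, j-1]$; they do not show these elements are in $Cov(S)$. Indeed, an element $e \in S \cap \tilde{\mathcal{U}}$ may have been scooped up by a \emph{different} set $S'$ chosen \emph{earlier} in the static greedy with $l \le l_{cov}(S') \le j-1$; such an $e$ is in $N_j(S)$ but not in $Cov(S)$. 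Since the greedy places sets at non-increasing levels, there can easily be several sets at level $l$ (or at levels $l,\dots,j-1$) sharing elements with $S$, so $|N_j(S)|$ can strictly exceed $|Cov(S)|$, and your bound $\frac{|N_j(S)|}{c(S)} = \frac{|Cov(S)|}{c(S)} < \beta^{l+1}$ does not follow.

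The paper's argument for this range is different and avoids the issue by exploiting the greedy's ratio monotonicity rather than equating $N_j(S)$ with $Cov(S)$: if $|N_j(S)| \ge \beta^{j+1}c(S)$ after the reset, then all those elements were in $\tilde{\mathcal{U}}$ and were ultimately covered by sets placed at levels $< j$; hence at the first greedy iteration that places a set at level $< j$, all of them are still uncovered, so the ratio of $S$ at that moment is $\ge \beta^{j+1} > \beta^j$, contradicting the fact that the set chosen at that iteration has ratio $< \beta^j$ and is a greedy maximizer. This shows $|N_j(S)| < \beta^j c(S)$, i.e., $S$ would otherwise have been selected at a higher level — which is what the paper means by ``it would be placed by the partial reset at a level at least $j+1$.'' Your conclusion is true, but you need this greedy-monotonicity argument (or something equivalent) rather than the false claim $N_j(S) = Cov(S)$.
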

\begin{proof}
Let $l_{cov}(S) = l$ be the level of set $S$ following the reset.
We first note that $\beta^{l_{cov}(S)} \leq \frac{|Cov(S)|}{c(S)} \leq \beta^{l_{cov}(S)+1}$ (see \emph{Placement into levels} paragraph). 
To complete the proof, we next argue that $\frac{|N_{j}(S)|}{c(S)} < \beta^{j+1}$ for any $j > l_{cov}(S)$.
Indeed, suppose for contradiction that $\frac{|N_{j}(S)|}{c(S)} \geq \beta^{j+1}$ for some $j > l_{cov}(S)$. 
If $j \le i_{crit} + 1$, that would mean that $S$ has at least $\beta^{j+1} \cdot c(S)$ elements at level $<j$ that it can cover, thus it would be placed by the partial reset at a level at least $j+1$, which is larger than $l_{cov}(S)$, a contradiction.
If on the other hand $j > i_{crit} + 1$, then by Observation \ref{obs2} all elements in $N_{j}(S)$ after the reset were in $N_j(S)$ also prior to it (some might have left it), but that would mean that $S$ was $j$-PD prior to the reset, a contradiction to Invariant \ref{inv1}. 
\qed
\end{proof}

\begin{obs} \label{ob:added}
A set that was in $\tilde{\mathcal{F}}$ at the beginning of the partial reset cannot become covering at a level $l > i_{crit} + 1$ following the reset.
\end{obs}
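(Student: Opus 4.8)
The plan is to argue by contradiction: suppose $S \in \tilde{\mathcal{F}}$ becomes covering at some level $l > i_{crit}+1$ after the reset. First I would recall how a set placed by the static greedy procedure gets its level: by the \emph{Placement into levels} paragraph, $S$ is placed at level $\lfloor \log_{\beta}(\frac{|Cov(S)|}{c(S)}) \rfloor$, where $Cov(S)$ consists precisely of the elements of $\tilde{\mathcal{U}}$ contained in $S$ that were still uncovered when $S$ was selected. Hence $l > i_{crit}+1$ forces $\frac{|Cov(S)|}{c(S)} \ge \beta^{i_{crit}+2}$, so $S$ contains at least $\beta^{i_{crit}+2} \cdot c(S)$ elements of $\tilde{\mathcal{U}}$ (in fact, at the moment of selection — and the full set of $\tilde{\mathcal{U}}$-elements in $S$ at the start of the reset is at least this large).

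Next I would translate this into a statement about $N_{i_{crit}+1}(S)$ just before the reset. Every element placed into $\tilde{\mathcal{U}}$ at the start of the partial reset is, by construction, an element whose covering set had level $\le i_{crit}$; thus immediately before the reset each such element $e$ satisfies $l(e) \le i_{crit} < i_{crit}+1$, i.e.\ $e \in N_{i_{crit}+1}(S)$ whenever $e \in S$. Therefore the $\tilde{\mathcal{U}}$-elements contained in $S$ form a subset of $N_{i_{crit}+1}(S)$ (as measured right before the reset), so $|N_{i_{crit}+1}(S)| \ge \beta^{i_{crit}+2} \cdot c(S)$, which means $\frac{|N_{i_{crit}+1}(S)|}{c(S)} \ge \beta^{(i_{crit}+1)+1}$.

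Finally I would invoke Invariant \ref{inv1} (no PD sets), which must hold at the end of the previous update step, hence at the start of the current partial reset. Since $S \in \tilde{\mathcal{F}}$ and $S$ was not covering at a level exceeding $i_{crit}$ before the reset — indeed sets put into $\tilde{\mathcal{F}}$ either were emptied-out covering sets with $l_{cov}(S) \le i_{crit}$ or were not covering at all, so $l_{cov}(S) \le i_{crit} < i_{crit}+1$ — the level $j := i_{crit}+1$ satisfies $j > l_{cov}(S)$, and the bound $\frac{|N_j(S)|}{c(S)} \ge \beta^{j+1}$ says exactly that $S$ was $j$-PD, contradicting Invariant \ref{inv1}. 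Hence no such $S$ exists.

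The only mild subtlety — and the step I would be most careful about — is the bookkeeping that the elements counted by the static procedure toward $Cov(S)$ are genuinely a subset of $N_{i_{crit}+1}(S)$ \emph{evaluated at the pre-reset configuration}: during the reset, elements of $\tilde{\mathcal{U}}$ are re-covered and may move to various levels, so one must be sure to freeze the comparison at the moment just before the reset and use Observation \ref{obs2} to see that nothing outside $\tilde{\mathcal{U}}$ changed level. With that fixed, the argument is a one-line application of Invariant \ref{inv1}, mirroring the second case in the proof of Observation \ref{parclean}.
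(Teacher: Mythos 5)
Your proof is correct and takes essentially the same approach as the paper's: use the placement rule to get a lower bound on $|Cov(S)|/c(S)$, observe (via Observation \ref{obs2} / the construction of $\tilde{\mathcal{U}}$) that every element of the new $Cov(S)$ was at level $\le i_{crit}$ before the reset, and conclude that $S$ was PD prior to the reset, contradicting Invariant \ref{inv1}. The only cosmetic difference is the choice of witness level: the paper shows $S$ was $(l-1)$-PD, while you show it was $(i_{crit}+1)$-PD — both are valid since $l-1 > i_{crit} \ge l_{cov}(S)$.
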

\begin{proof}
If any set $S$ that was in $\tilde{\mathcal{F}}$ at the beginning of the partial reset becomes covering at a level $l > i_{crit} + 1$ following the reset, then by Observation \ref{obs2} that would mean that $S$ was $(l-1)$-PD prior to the reset.
\qed
\end{proof}

\begin{cl} \label{clm0}
Only a single element level change occurs following a single element update {\em due to Invariant \ref{inv3}}, and it occurs only following an element insertion.
\end{cl}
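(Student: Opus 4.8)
The plan is to bound, over a single update step, the element-level changes that are \emph{forced by Invariant~\ref{inv3}} — that is, made solely to avoid creating or to repair an Invariant~\ref{inv3} violation — as opposed to the level changes performed inside the local rise and partial reset procedures, which are driven by Invariants~\ref{inv1} and~\ref{inv2}. I would prove three facts: (a) a deletion forces no Invariant~\ref{inv3}-change; (b) an insertion forces exactly one, namely the placement of the inserted element; (c) neither a local rise nor a partial reset ever leaves an Invariant~\ref{inv3} violation, so they force no further changes. Combining these across the insertion/deletion routine and the cascade of local rises and partial resets it triggers gives the claim; this also discharges the two forward references made in the text, that a local rise and a partial reset cannot trigger any violation of Invariant~\ref{inv3}.

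\textbf{The update itself.} For a deletion of $e$, the element just leaves the $Cov(S)$ that contained it; no set's $l_{cov}$ rises and no surviving element's covering level drops, so Invariant~\ref{inv3} is untouched and zero changes are forced. For an insertion of $e$: if some set of $\mathcal{F}_e$ is covering, we place $e$ into $Cov(S)$ for the \emph{highest}-level such $S$, which makes Invariant~\ref{inv3} hold for $e$ and changes neither another element's level nor any set's $l_{cov}$ — one forced change. If no set of $\mathcal{F}_e$ is covering, we add the minimum-weight $S\in\mathcal{F}_e$ at level $\ell:=\lfloor\log_\beta(1/c(S))\rfloor$ with $Cov(S)=\{e\}$. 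Here I would invoke Invariant~\ref{inv1} for $S$ in its previous non-covering state ($l_{cov}(S)=-1$), applied at level $j=\ell$: it gives $|N_\ell(S)|< c(S)\beta^{\ell+1}\le\beta<2$, so at most one active element $e'$ of $S$ currently sits below level $\ell$, and $e'$ is the only element that could violate Invariant~\ref{inv3} against the newly covering $S$. If such $e'$ exists, then after $e$ is added and the $N_j(S)$ counters are updated we have $|N_{\ell+1}(S)|\ge|\{e,e'\}|=2>\beta^2\ge c(S)\beta^{\ell+2}$ (using $c(S)\le\beta^{-\ell}$), so $S$ is immediately $(\ell+1)$-PD; the local rise the routine then performs lifts $N_m(S)\supseteq\{e'\}$ (for the target level $m\ge\ell+2$) into $Cov(S)$, so $e'$'s level change is charged to Invariant~\ref{inv1}, not Invariant~\ref{inv3}. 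Hence in every case exactly one element-level change is forced by Invariant~\ref{inv3}, and only on an insertion.

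\textbf{Local rise and partial reset preserve Invariant~\ref{inv3}.} When $S$ rises to level $m=j+1$, its new $Cov(S)=N_m(S)$ is exactly the set of active elements of $S$ that were at levels $\le j<m$, each now placed at level $m=l_{cov}(S)$, while every element of $S$ not moved already had level $\ge m$; by Invariant~\ref{inv3} before the rise every moved element $e^\ast$ had all of its sets covering at level $\le l(e^\ast)\le j<m=$ its new level, and $S$ now covers exactly at that new level, so no moved element (and no element outside $S$, which is untouched) ends with a set above its level — see also Observation~\ref{obclean}. For a partial reset, using Observation~\ref{obs2} (only elements covered at levels $\le i_{crit}$ change level), Observation~\ref{parclean} (each re-covering set satisfies Equation~(\ref{cleandom})), and Observation~\ref{ob:added} (a set of $\tilde{\mathcal{F}}$ re-covers at a level $\le i_{crit}+1$), I would check Invariant~\ref{inv3} for an element $e$ and a set $S\in\mathcal{F}_e$ by cases: if $S$ was untouched ($l_{cov}(S)>i_{crit}$), Invariant~\ref{inv3} before the reset forces $l(e)\ge l_{cov}(S)>i_{crit}$, so $e$ was untouched and the inequality persists; if $S$ was emptied and re-covers at level $l'\le i_{crit}+1$, then every element it now covers sits at level $\ge l'$ (again $\le$ by Equation~(\ref{cleandom}) and the static-greedy placement), so $l_{cov}(S)\le l(e)$. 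Thus a partial reset forces no change either.

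Putting (a)--(c) together establishes the claim. The main obstacle is step~(b) in the case where $\mathcal{F}_e$ has no covering set: one must get the accounting exactly right so that the possible single stray low-level element $e'$ of the freshly added set is shown to be absorbed by the immediately ensuing (Invariant~\ref{inv1}-driven) local rise, rather than counted as a second Invariant~\ref{inv3}-forced change — the inequality $|N_{\ell+1}(S)|\ge 2>\beta^2\ge c(S)\beta^{\ell+2}$ is what makes this work, and it relies on $\beta=1+\epsilon<1.4$. For dominating set the insertion analysis is the same — at most one of $l_{dom}(u)>l(v)$, $l_{dom}(v)>l(u)$ can hold, since each of $l_{dom}(u),l_{dom}(v)$ is at most the corresponding $l(\cdot)$ by Invariant~\ref{inv3} — whereas the deletion of an edge one of whose endpoints dominated the other forces the single re-placement of that endpoint, which we record in a remark.
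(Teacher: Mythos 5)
Your proof follows the same decomposition and essentially the same arguments as the paper's (one forced change on insertion; local rises, partial resets, and deletions force none), and your expansions of the paper's two forward references — that local rises and partial resets preserve Invariant~\ref{inv3} — are in line with what the paper sketches. The one place you genuinely go beyond the paper is the insertion sub-case where no set of $\mathcal{F}_e$ is already covering and a fresh set $S$ is added at level $\ell=\lfloor\log_\beta(1/c(S))\rfloor$: the paper's proof never examines whether this can introduce a violation, whereas you correctly observe, from Invariant~\ref{inv1} applied to the formerly non-covering $S$, that there can be at most one pre-existing element $e'\in S$ below level $\ell$ (and necessarily at level $\ell-1$).

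The flaw is in how you discharge that sub-case. You write that $S$ becomes $(\ell+1)$-PD and that "the local rise the routine then performs lifts $N_m(S)\supseteq\{e'\}$ into $Cov(S)$" — but ChangeLevel$(e)$ is not guaranteed to perform the local rise on $S$: it scans levels from highest to lowest and fires on whichever $S'\in\mathcal{F}_e$ is $j$-PD for the \emph{highest} $j$. A different $T\in\mathcal{F}_e$ can become $j'$-PD at a level $j'$ far above $S$'s highest PD level, since Invariant~\ref{inv1} only guarantees $|N_{j'}(T)|<c(T)\beta^{j'+1}$ before the insertion and adding $e$ increments every such counter by one (whereas $S$, holding just $e$ and $e'$ below level $\ell+1$, is PD only up to about $\log_\beta(2/c(S))$). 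If $T$ wins, the rise moves $e$ into $Cov(T)$, so $Cov(S)$ is emptied, $S$ sits idle at level $\ell$, and now $|N_{\ell+1}(S)|=1<\beta<c(S)\beta^{\ell+2}$, so $S$ is no longer PD and no later step in the cascade lifts $e'$. The pair $(e',S)$ is then left violating Invariant~\ref{inv3} without $e'$ ever changing level — this does not disturb the claim's count, but it contradicts the proof's concluding assertion (both yours and the paper's) that no Invariant~\ref{inv3} violation survives. As written, your argument assumes the local rise lands on $S$; you would need to either show that must be so in this configuration, or handle explicitly the case where another set wins the rise.
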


\begin{proof}
Following an element insertion of $e$, we scan the up-to-$f$ sets which contain $e$, and $e$ joins the covering set at the highest level, to maintain Invariant \ref{inv3}. We consider this as an element level change for $e$. Now, assume by contradiction that a partial reset triggers a violation to Invariant \ref{inv3}. Thus, element $e$ is in $Cov(S_1)$ and not $Cov(S_2)$ after the reset, even though $S_1,S_2 \in \mathcal{F}_e$ and $l_{cov}(S_1) < l_{cov}(S_2)$. We know that $S_1 \in \tilde{\mathcal{F}}$ for this reset. If $S_2 \in \tilde{\mathcal{F}}$ as well, then by description of the algorithm $e$ will join $Cov(S_2)$. If $S_2 \notin \tilde{\mathcal{F}}$, then $l_{cov}(S_2) > i_{crit}$. But since $l(e) \leq i_{crit}$ (it was in $\tilde{\mathcal{U}}$), that means there was a violation to Invariant \ref{inv3} before the reset. Next, it is clear that a local rise cannot trigger a violation to Invariant \ref{inv3}, since when $S$ performs a local rise to level $j$, it takes to $Cov(S)$ all elements it contains which are covered at a level less than $j$. Finally, it is trivial that no such violation can occur due to an element deletion. We conclude that only following an insertion an element changes its level in order to maintain Invariant \ref{inv3} (the inserted element), and no violations to this invariant occur due to resets, local rises or deletions.
\qed
\end{proof}

\noindent \textbf{Remark.} In DS, only a single dominated level change (a vertex changing its dominated level) may occur following a single edge update due to Invariant \ref{inv3}, and it may occur only following an edge insertion. 

\subsubsection{Finding the Relevant Levels} \label{rel}

\noindent In the algorithm we occasionally update $N_j(S)$ for all relevant $j$'s. Naively, the number of levels is $O(\log_{\beta}(n \cdot C))$, however we want to show that for each set $S$, there are only $O(\log_{\beta}n)$ (to be precise $\log_{\beta}n + 3$) levels $j$ which we need to update the $N_j(S)$ counter. For each set $S$, we define a minimum relevant level and maximum relevant level, denoted $j_{min}^S$ and $j_{max}^S$ respectively. Let $j_{min}^S = \lfloor \log_{\beta}(\frac{1}{c(S)}) \rfloor - 1$ and $j_{max}^S = \lceil \log_{\beta}(\frac{n}{c(S)}) \rceil - 1$. We first claim that $S$ cannot be $j$-PD for any $j > j_{max}^S$. Indeed, if it were, then by definition $|N_{j}(S)| \geq \beta^{j+1} \cdot c(S)$. But $\beta^{j+1} \geq \beta^{\log_{\beta}(\frac{n}{c(S)})+1} = \beta \cdot \frac{n}{c(S)}$, meaning that $|N_{j}(S)| \geq \beta \cdot n$, and the number of elements is upper bounded by $n$, so this is a contradiction. Next, we claim that if $S$ is $j$-PD for any $j < j_{min}^S$, it is also $j_{min}^S$-PD, and since we perform local rises from high to low levels, $S$ being $j$-PD is irrelevant (a local rise to $\geq j_{min}^S+1$ would occur). Indeed, if $S$ is $j$-PD, then clearly $|N_j(S)| \geq 1$. But $S$ would be $j_{min}^S$-PD as well, since $|N_{j_{min}^S}(S)| \geq 1$, and $\beta^{j_{min}^S+1} \cdot c(S) \leq \beta^{\log_{\beta}(\frac{1}{c(S)})} \cdot c(S) = 1$, thus $|N_{j_{min}^S}(S)| \geq \beta^{j_{min}^S+1} \cdot c(S)$ and $S$ is $j_{min}^S$-PD. Since $j_{max}^S - j_{min}^S +1 \leq \log_{\beta}n + 3$, we get that for each set $S$ there are only $\log_{\beta}n + 3$ relevant levels.

\begin{algorithm} 
\caption{Insert $(e,\mathcal{F}_e)$} \label{alg1}
\begin{algorithmic} 
\STATE {$S = \arg \max_{S'} \{l_{cov}(S') : S' \in \mathcal{F}_e\}$}
\STATE {$Cov(S) \leftarrow Cov(S) \cup \{e\}$}
\STATE $l(e) \leftarrow l_{cov}(S)$
\STATE ChangeLevel$(e)$
\STATE Check if a partial reset is due, and if so perform one
\end{algorithmic}
\end{algorithm}

\begin{algorithm}
\caption{Delete $(e,\mathcal{F}_e)$} \label{alg2}
\begin{algorithmic}
\STATE {Let $S$ be the set covering $e$.}
\STATE {$Cov(S) \leftarrow Cov(S) \setminus \{e\}$}
\STATE ChangeLevel$(e)$
\STATE Check if a partial reset is due, and if so perform one
\end{algorithmic}
\end{algorithm}

\begin{algorithm}
\caption{LocalRise $(S,j,N_{j+1}(S))$} \label{alg3}
\begin{algorithmic}
\STATE $Cov(S) \leftarrow N_{j+1}(S)$
\STATE Create $Cov(S)$ at level $j+1$
\FOR {$e \in Cov(S)$}
\STATE ChangeLevel$(e)$
\ENDFOR
\STATE  Check if a partial reset is due, and if so perform one
\end{algorithmic}
\end{algorithm}

\begin{algorithm}
\caption{PartialReset $(i_{crit})$} \label{alg4}
\begin{algorithmic}
\STATE $\tilde{\mathcal{U}} \leftarrow \{e:l(e) \leq i_{crit}\}$
\STATE $\tilde{\mathcal{F}} \leftarrow \{S:l_{cov}(S) \leq i_{crit}\}$
\FOR {$S \in \tilde{\mathcal{F}}$}
\STATE $Cov(S) \leftarrow \emptyset$
\STATE $\mathcal{S} \leftarrow \mathcal{S} \setminus \{S\}$
\ENDFOR
\WHILE {$\tilde{\mathcal{U}} \neq \emptyset$}
\STATE Let $S$ be a set in $\tilde{\mathcal{F}}$ such that $\frac{c(S)}{|S \cap \tilde{\mathcal{U}}|}$ is minimized.
\STATE $Cov(S) \leftarrow S \cap \tilde{\mathcal{U}}$ 
\STATE Create $Cov(S)$ at level $\lfloor \log_{\beta}(\frac{|Cov(S)|}{c(S)}) \rfloor$
\STATE $\mathcal{S} \leftarrow \mathcal{S} \cup \{S\}$
\STATE $\tilde{\mathcal{U}} \leftarrow \tilde{\mathcal{U}} \setminus Cov(S)$
\ENDWHILE
\end{algorithmic}
\end{algorithm}

\begin{algorithm} 
\caption{ChangeLevel $(e)$} \label{algcl}
\begin{algorithmic} 
\STATE update relevant counters ($\mathcal{D},\mathcal{L},\mathcal{E}$)
\STATE pd $\leftarrow$ FALSE
\FOR {$j=$ highest relevant level of a set in $\mathcal{F}_e$ \textbf{down to} lowest relevant level of a set in $\mathcal{F}_e$}
\FOR {$S \in \mathcal{F}_e$ and $j$ is a relevant level of $S$}
\STATE $update$ $N_j(S)$
\IF {$S$ is $j$-PD \textbf{and} pd = FALSE}
\STATE $S' \leftarrow S$; $j' \leftarrow j$; pd $\leftarrow$ TRUE 
\ENDIF
\ENDFOR
\ENDFOR
\IF {pd = TRUE}
\STATE LocalRise $(S',j',N_{j'+1}(S'))$
\ENDIF
\end{algorithmic}
\end{algorithm}

\newpage

\noindent For DS, the local rise, partial reset and level change algorithms are completely analogous to the ones for SC. Since the insert and delete algorithms have some differences from their respective SC algorithms, we present the insert and delete algorithms for DS.

\begin{algorithm} 
\caption{Insert $(u,v)$} \label{alg1ds}
\begin{algorithmic} 
\IF {$l_{dom}(u) > l(v)$ \textbf{or} $l_{dom}(v) > l(u)$ (assume WLOG $l_{dom}(u) > l(v)$, cannot be both)}
\STATE {$Dom(z_v) \leftarrow Dom(z_v) \setminus \{v\}$ (assume $v$ belonged to $Dom(z_v)$)}
\STATE {$Dom(u) \leftarrow Dom(u) \cup \{v\}$}
\STATE ChangeLevel$(v)$
\ELSE 
\STATE pd = FALSE
\FOR {$j=\max\{j_{max}^v,j_{max}^u\}$ \textbf{down to} $\min\{j_{min}^v,j_{min}^u\}$}
\FOR {$w \in \{v,u\}$}
\IF {$j$ is a relevant level of $w$}
\STATE update $N_j(w)$
\IF {$w$ is $j$-PD \textbf{and} pd = FALSE}
\STATE $w' \leftarrow w$; $j' \leftarrow j$; pd $\leftarrow$ TRUE 
\ENDIF
\ENDIF
\ENDFOR
\ENDFOR
\IF {pd = TRUE}
\STATE LocalRise $(w',j',N_{j'+1}(w'))$
\ENDIF
\ENDIF
\STATE Check if a partial reset is due, and if so perform one
\end{algorithmic}
\end{algorithm}

\begin{algorithm}
\caption{Delete $(u,v)$} \label{alg2ds}
\begin{algorithmic}
\IF {$v \in Dom(u)$ or $u \in Dom(v)$ (assume WLOG $v \in Dom(u)$)}
\STATE Scan $N(v)$ to find $z$, a highest level dominator (if none exists, $z \leftarrow v$)
\STATE {$Dom(z) \leftarrow Dom(z) \cup \{v\}$} (if $z = v$, $Dom(z) = \{v\}$)
\FOR {$j=j_{max}^u$ \textbf{down to} $l_{dom}(u)+1$}
\STATE update $N_j(u)$
\ENDFOR
\STATE ChangeLevel$(v)$
\ELSE
\FOR {$j=\max\{j_{max}^v,j_{max}^u\}$ \textbf{down to} $\min\{j_{min}^v,j_{min}^u\}$}
\FOR {$w \in \{v,u\}$}
\IF {$j$ is a relevant level of $w$}
\STATE update $N_j(w)$
\ENDIF
\ENDFOR
\ENDFOR
\ENDIF
\STATE Check if a partial reset is due, and if so perform one
\end{algorithmic}
\end{algorithm}

\noindent 

\newpage

\section{Update Time Analysis} \label{time}

\noindent In Section \ref{amorsubsub} we provide an upper bound of $O(\epsilon^{-3} \cdot \ln n)$ to the amortized number of {\em element level changes} in the update sequence, meaning the number of times the procedure \emph{ChangeLevel} has been called (see Algorithms \ref{alg1} through \ref{alg2ds}). We will then bound the number of set level changes as a direct corollary. In Section \ref{omega}, we prove that the upper bound for element level changes is tight. Finally, in Section \ref{updatetime}, we show that we can handle each element level change in $O(\epsilon^{-2} \cdot f)$ amortized time, and that the total amortized update time due to element level changes of $O(\epsilon^{-5} \cdot \ln n \cdot f)$ is the bottleneck of the algorithm.

\subsection{$O_{\epsilon}(\ln n)$ Amortized Element Level Changes} \label{amorsubsub}

\noindent We define a partial reset that some set $S$ \emph{participates} in as a reset where in the beginning of it $S \in \tilde{\mathcal{F}}$. We begin by proving the following two claims:

\begin{cl} \label{clm1}
Following a reset up to level $i_{crit}$ in which $S$ participates in, $\frac{|N_l(S)|}{c(S)}<\beta^l$ for any $l \leq i_{crit}$.
\end{cl}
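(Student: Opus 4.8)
\textbf{Proof plan for Claim \ref{clm1}.}

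The plan is to analyze the state of $S$ immediately after the partial reset up to level $i_{crit}$ that $S$ participates in, and argue that for each level $l \le i_{crit}$ the counter $|N_l(S)|$ cannot be too large relative to $c(S)$. Recall that $N_l(S)$ is the set of elements contained in $S$ that are currently covered at a level strictly below $l$. So fix any $l \le i_{crit}$ and let $e \in N_l(S)$ after the reset; then $l(e) < l \le i_{crit}$, so in particular $e$ was placed into $\tilde{\mathcal{U}}$ at the beginning of this reset (every element with $l(e) \le i_{crit}$ is put in $\tilde{\mathcal{U}}$), and $e$ was subsequently re-covered by the static greedy algorithm at a level below $l$. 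Hence $N_l(S)$ after the reset consists entirely of elements of $\tilde{\mathcal{U}}$ that the static greedy algorithm re-covered at level $< l$.

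The key step is a standard greedy-exchange observation about the static algorithm: when the static greedy algorithm is run on $(\tilde{\mathcal{U}}, \tilde{\mathcal{F}})$, and since $S \in \tilde{\mathcal{F}}$ (because $S$ participates in this reset), at the moment the greedy algorithm covered any particular element $e \in S \cap \tilde{\mathcal{U}}$ at level $j := l(e)$, the set it picked had cost-per-element ratio no larger than $c(S)$ divided by the number of then-uncovered elements of $S$ in $\tilde{\mathcal{U}}$. More usefully, combine this with the placement rule: a set chosen by greedy that covers $e$ at level $j$ was placed at level $j = \lfloor \log_\beta(|Cov|/c) \rfloor$, so its inverse ratio is $< \beta^{j+1}$, hence $\ge \beta^{j+1}$ elements-per-unit-cost was \emph{not} achievable by any remaining set at that point — in particular not by $S$. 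So at the moment $e$ is covered (at level $j \le l-1$), the number of still-uncovered elements of $S \cap \tilde{\mathcal{U}}$ is $< \beta^{j+1} \cdot c(S) \le \beta^{l} \cdot c(S)$. Now observe that every element counted in $N_l(S)$ after the reset was still uncovered in $\tilde{\mathcal{U}}$ at the moment the \emph{first} such element (the one greedy handled earliest, hence at the lowest level among them, call it level $j_0 \le l-1$) was covered; at that moment all of $N_l(S)$ — which is a subset of $S \cap \tilde{\mathcal{U}}$ still uncovered — had size $< \beta^{j_0+1} c(S) \le \beta^l c(S)$. Taking the count of $N_l(S)$ after the reset concludes $|N_l(S)| < \beta^l c(S)$, i.e.\ $\frac{|N_l(S)|}{c(S)} < \beta^l$.

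I expect the main obstacle to be making the "at the moment greedy covered $e$, no remaining set beats ratio $\beta^{j+1}$" step fully rigorous together with the placement-into-levels rule: one has to be careful that the greedy choices are made by minimizing $c(S')/|S'\cap\tilde{\mathcal U}|$ over the \emph{current} residual $\tilde{\mathcal U}$, and that $S$ itself is a candidate set at every step until $S\cap\tilde{\mathcal U}$ becomes empty, so the residual size of $S\cap\tilde{\mathcal U}$ only decreases over the course of the run; combining monotonicity of the residual with the level-placement bound $\beta^{j+1}$ on the picked set's inverse ratio is what forces all of $N_l(S)$ to fit under $\beta^l c(S)$. A secondary subtlety is accounting for elements of $S$ that were never in $\tilde{\mathcal U}$ (covered above $i_{crit}$) — but these have $l(e) > i_{crit} \ge l-1$, so they don't belong to $N_l(S)$ and can be ignored. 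One also needs that $(1+\epsilon)$-approximate counters do not break the argument, but since the claim concerns the true $|N_l(S)|$ right after a reset (where $N_l(S)$ is recomputed honestly), this is not an issue here, or it is absorbed into the constants as elsewhere in the paper.
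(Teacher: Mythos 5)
Your argument is correct and is essentially the same as the paper's, which (more tersely) observes that if $|N_l(S)| \ge c(S)\beta^l$ after the reset then at the first greedy iteration touching any of those elements, $S$ itself was an available candidate with inverse ratio $\ge \beta^l$, so the greedy could not have placed that element below level $l$ — a contradiction; your write-up spells out the residual-monotonicity and level-placement steps that the paper's one-line ``$S$ would have taken all elements in $N_l(S)$'' compresses. One small slip, harmless here: since the greedy's chosen inverse ratios are non-increasing over iterations, the first element of $N_l(S)$ that greedy handles sits at the \emph{highest} level among them, not the lowest; your argument only uses $j_0 \le l-1$, which holds regardless.
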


\begin{proof}
Assume by contradiction that $|N_l(S)| \geq c(S) \cdot \beta^l$ for some $l \leq i_{crit}$. Since $l \leq i_{crit}$, all elements in $N_l(S)$ right after the reset were part of the reset in the set $\tilde{\mathcal{U}}$, by Observation \ref{obs2}. Since $S$ participated in the reset, $S$ was in the set $\tilde{\mathcal{F}}$. Thus, $S$ would have taken all elements in $N_l(S)$ to $Cov(S)$ at level at least $l$ during the partial reset, therefore all of them would be covered at level $\geq l$ and we would have $|N_l(S)|=0$ following the reset, a contradiction.
\qed
\end{proof}

\begin{cl} \label{clm2}
If the level of an element $e$ drops due to a partial reset up to level $i_{crit}$, then $i_{crit} \geq l(e)$ (where $l(e)$ is the level before the drop). Thus $|N_l(S)|$ cannot grow as a result of a partial reset up to level $i_{crit} \leq l-1$.
\end{cl}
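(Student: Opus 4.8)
\textbf{Proof plan for Claim \ref{clm2}.} The plan is to derive both parts directly from Observation \ref{obs2}, which already tells us that only elements of $\tilde{\mathcal{U}}$ can change level during a partial reset. First I would recall that when a partial reset up to level $i_{crit}$ is performed, the set $\tilde{\mathcal{U}}$ is exactly $\{e : l(e) \le i_{crit}\}$, where here $l(e)$ denotes the level of $e$ immediately before the reset (see the PartialReset procedure, Algorithm \ref{alg4}). Hence, if an element $e$ has its level changed — in particular dropped — as a consequence of this reset, then by Observation \ref{obs2} it must have belonged to $\tilde{\mathcal{U}}$, and therefore its pre-reset level satisfies $l(e) \le i_{crit}$, i.e.\ $i_{crit} \ge l(e)$. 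This is the first assertion.

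For the second assertion I would argue by contraposition. Note that a partial reset does not activate or deactivate any universe element, so the only way the quantity $|N_l(S)| = |\{e \in S : l(e) < l\}|$ can increase across a reset is if some element $e \in S$ that was covered at level $\ge l$ before the reset ends up covered at level $< l$ after the reset; that is, $e$'s level must drop (from $\ge l$ to $< l$). By the first part, such a drop caused by a reset up to level $i_{crit}$ forces $i_{crit} \ge l(e) \ge l$. Consequently, if the reset is only up to a level $i_{crit} \le l-1$, no such element drop is possible, so $|N_l(S)|$ cannot grow.

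I expect the only point requiring care — rather than a genuine obstacle — is making explicit that during a reset the active element set is frozen (so $|N_l(S)|$ changes purely through level changes of elements contained in $S$, not through insertions), and that "the level of $e$ drops below $l$" is precisely the event that contributes a new element to $N_l(S)$. Both are immediate from the definitions of $N_l(S)$ and of the reset procedure, so the write-up should be short once these are stated.
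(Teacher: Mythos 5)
Your proof is correct and follows essentially the same route as the paper's: both arguments rest on Observation \ref{obs2} combined with the fact that $\tilde{\mathcal{U}}$ consists precisely of the elements at level at most $i_{crit}$, and both deduce the second assertion by noting that growth of $|N_l(S)|$ would require an element at level $\ge l$ to drop below $l$. The only cosmetic difference is that the paper phrases the first part contrapositively (elements with $l(e) > i_{crit}$ are outside $\tilde{\mathcal{U}}$ and hence cannot move), while you argue it directly.
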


\begin{proof}
Consider a partial reset up to level $i_{crit}$, and let $e$ be any element such that $l(e) > i_{crit}$. Then $e$ will not be in $\tilde{\mathcal{U}}$ in such a reset, thus by Observation \ref{obs2} it is not possible for $e$ to change its level due to this reset. For $|N_l(S)|$ to grow as a result of a reset with $i_{crit} \leq l-1$, an element $e$ which can be covered by $S$ that satisfies $l(e) \geq l > i_{crit}$ must reduce its level below $l$, but we have just proved that this is impossible.
\qed
\end{proof}

\noindent We make the following simple observation, which relies on Claim \ref{clm2}.

\begin{obs} \label{obs3}

The level of an element $e$ can drop only due to a partial reset up to level $i_{crit} \geq l(e)$.

\end{obs}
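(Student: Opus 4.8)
The plan is to derive Observation~\ref{obs3} by combining Claim~\ref{clm2} with a complete enumeration of the events that can alter the level $l(e)$ of an already-covered element $e$. Scanning Algorithms~\ref{alg1}--\ref{alg4} together with \emph{ChangeLevel}, the only operations that modify any element's level are: (i) an element insertion, (ii) a local rise, and (iii) a partial reset. An element deletion merely removes the deleted element from its $Cov$ set and decrements $|Cov(S)|$; it does not change the level of any element that remains active (any local rise or partial reset it subsequently triggers is accounted for under (ii) or (iii)).

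First I would dispose of insertions and local rises, showing neither can produce a \emph{drop}. Upon inserting $e$, the algorithm places $e$ into $Cov(S)$ for the covering $S \in \mathcal{F}_e$ of maximum level (or into a freshly created pair), so $e$ acquires a level for the first time --- this is not a decrease of a pre-existing level --- and no other element is affected. Upon a local rise of $S$ to level $j+1$, the new $Cov(S)$ equals $N_{j+1}(S)$, i.e.\ exactly the elements of $S$ currently covered at a level $<j+1$; each such element moves from a level $\le j$ up to level $j+1$, a strict increase, while all elements outside $N_{j+1}(S)$ keep their level (those that leave some other $Cov(S')$ change only the cardinality and cleanliness status of $S'$, not their own level). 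Hence a level drop of $e$ can be caused only by a partial reset.

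Finally, for a partial reset up to level $i_{crit}$, Observation~\ref{obs2} guarantees that only elements of $\tilde{\mathcal{U}} = \{e : l(e) \le i_{crit}\}$ can change level, and Claim~\ref{clm2} states precisely that if such an element's level drops then $i_{crit} \ge l(e)$, where $l(e)$ is its level before the drop --- which is exactly the assertion of Observation~\ref{obs3}. I do not expect a genuine obstacle here: the one point needing care is to confirm that the side-effects of a local rise (elements departing other $Cov$ sets, a set possibly ceasing to be NC, and the partial-reset check at the end of Algorithm~\ref{alg3}) never move an element \emph{down} a level; the triggered reset falls under case (iii), and the remaining effects touch only set cardinalities and cleanliness, not element levels. (In the DS variant there is one additional source of drops --- a vertex $v$ re-joining a lower-level dominator upon deletion of the edge to its current dominator --- contributing at most one drop per edge deletion, to be handled separately where the DS analysis departs from the SC one.)
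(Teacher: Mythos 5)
Your proof is correct and follows essentially the same route as the paper's (which simply asserts that insertion, deletion, and local rise cannot cause a drop, then invokes Claim~\ref{clm2}); you merely spell out the case analysis that the paper leaves implicit, and your aside about the DS variant correctly anticipates the extra drop mechanism there.
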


\begin{proof}
The level of an element $e$ cannot drop due to an insertion, deletion or local rise. By Claim \ref{clm2}, the only way for it to drop due to a reset is if $i_{crit} \geq l(e)$.
\qed
\end{proof}

\noindent The following observation relies on Observation \ref{obs2}.

\begin{obs} \label{nochange}
$|N_{l}(S)|$ cannot change as a result of a partial reset in which $S$ did not participate in, for any $l$.
\end{obs}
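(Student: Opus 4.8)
The statement to prove is Observation~\ref{nochange}: $|N_l(S)|$ cannot change as a result of a partial reset in which $S$ did not participate, for any $l$.

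\textbf{Plan.} The proof should be a short direct argument invoking Observation~\ref{obs2}, which says that only elements in $\tilde{\mathcal{U}}$ can change their level due to a partial reset. The key point is that $|N_l(S)|$ depends only on which elements of $S$ are covered at levels below $l$; so the only way $|N_l(S)|$ can change during a partial reset is if some element $e \in S$ changes its level during that reset. First I would fix the partial reset in question, say up to level $i_{crit}$, and recall that $S \notin \tilde{\mathcal{F}}$ since $S$ did not participate; by the definition of $\tilde{\mathcal{F}}$ (the sets with $l_{cov}(S) \le i_{crit}$ that contain an element placed in $\tilde{\mathcal{U}}$), this means $S$ contains no element of $\tilde{\mathcal{U}}$, i.e., every element $e \in S$ has $l(e) > i_{crit}$.

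Then I would argue: suppose for contradiction that $|N_l(S)|$ changes. Since $N_l(S) = \{e \in S : l(e) < l\}$ and the set $S$ itself is static (the collection of sets and their membership is fixed, only coverage levels change), a change in $|N_l(S)|$ forces some element $e \in S$ to change its level $l(e)$ during this reset. By Observation~\ref{obs2}, such an $e$ must lie in $\tilde{\mathcal{U}}$, hence $l(e) \le i_{crit}$ before the reset. But we just established every element of $S$ has $l(e) > i_{crit}$, a contradiction. Therefore $|N_l(S)|$ is unchanged. I would note this holds for every level $l$ simultaneously, since the argument does not depend on $l$.

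\textbf{Main obstacle.} There is no real mathematical obstacle here — the observation is essentially a bookkeeping consequence of Observation~\ref{obs2}. The only subtlety worth being careful about is pinning down precisely why $S \notin \tilde{\mathcal{F}}$ implies $S \cap \tilde{\mathcal{U}} = \emptyset$: one must use the algorithm's definition of $\tilde{\mathcal{F}}$ as all sets containing at least one element of $\tilde{\mathcal{U}}$ whose covering level is at most $i_{crit}$. A set with $l_{cov}(S) > i_{crit}$ is trivially not in $\tilde{\mathcal{F}}$, but it also cannot contain an element of $\tilde{\mathcal{U}}$: any $e \in S$ with $l(e) \le i_{crit}$ would, together with $l_{cov}(S) > i_{crit} \ge l(e)$, violate Invariant~\ref{inv3}, which holds before the reset. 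So in either case ($l_{cov}(S) \le i_{crit}$ but $S$ not chosen, or $l_{cov}(S) > i_{crit}$) we get $S \cap \tilde{\mathcal{U}} = \emptyset$, and the contradiction goes through cleanly. I would fold this remark into the proof to keep it self-contained.
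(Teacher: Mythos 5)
Your proof is correct and follows essentially the same route as the paper: $S \notin \tilde{\mathcal{F}}$ implies $S \cap \tilde{\mathcal{U}} = \emptyset$, then Observation~\ref{obs2} gives that no element of $S$ changes level, so $|N_l(S)|$ is unchanged for every $l$. The extra remark you add --- that the case $l_{cov}(S) > i_{crit}$ also forces $S \cap \tilde{\mathcal{U}} = \emptyset$ via Invariant~\ref{inv3} --- is a useful clarification that the paper leaves implicit, but it does not change the underlying argument.
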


\begin{proof}
If $S$ did not participate in a reset, it means that $S \notin \tilde{\mathcal{F}}$, implying that $S$ does not contain any element in $\tilde{\mathcal{U}}$. By Observation \ref{obs2}, only elements in $\tilde{\mathcal{U}}$ can change their level due to a reset. Therefore, none of the elements contained in $S$ change their level due to the reset, so indeed $|N_l(S)|$ does not change for any $l$.
\qed
\end{proof}

\begin{lem} \label{first}
The amortized number of element level changes due to local rises to level $j$ is $O(\epsilon^{-1})$, for any $j$. 
\end{lem}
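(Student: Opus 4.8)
The plan is to charge each element level change caused by a local rise to level $j$ against a fresh unit of dirt contributed to some dirt counter $\mathcal{D}_{j'}$ with $j' < j$, and then invoke Invariant \ref{inv2} to bound the total dirt in the system. The key structural fact is exactly the observation hinted at in the Technical Highlights: when a set $S$ performs a local rise to level $j$, the new $Cov(S) = N_j(S)$ has $\frac{|N_j(S)|}{c(S)} \ge \beta^{j+1}$, but because $S$ was \emph{not} PD with respect to any lower level (local rises are done highest-first, and any violation at a lower level would have been a $j''$-PD condition that Invariant \ref{inv1} forbids before the update — here I would use that $S$ was PC-ish at levels just below, i.e. $\frac{|N_{j'}(S)|}{c(S)} < \beta^{j'+1}$ for $j' < j$), the elements newly joining $Cov(S)$ at level $j$ cannot mostly come from levels far below $j$. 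Concretely, at least a constant fraction (something like a $(1 - \frac{1}{\beta})$ or $\frac{\epsilon}{1+\epsilon}$ fraction, after doing the arithmetic with $\beta = 1+\epsilon$) of the elements in the new $Cov(S)$ must have had level in the window $\{j - O(1/\epsilon), \ldots, j-1\}$ — roughly $\log_\beta(\beta) = 1$ level's worth, i.e. $N_j(S) \setminus N_{j-1}(S)$ already accounts for the bulk since $|N_{j-1}(S)| < \beta^j c(S) \le \frac{1}{\beta}|N_j(S)|$.

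The second ingredient is that whenever an element $e$ leaves the initial $Cov(S)$ of a set $S$ at level $i$ (which is what a local rise of $e$ away from level $i$ does — $e$ leaves its current covering set $S$, whose initial $Cov(S)$ was created at level $i = l(e)$, to join a higher set's $N_{j+1}$), the counter $\mathcal{D}_i$ is incremented by $\frac{1}{\beta^i}$ by Definition \ref{dirtycounter}. So I would set up the following accounting. Fix level $j$. Each element level change due to a local rise to level $j$ corresponds to one element $e$ moving from its old level $i = l(e) < j$ up to level $j$; this move (i) increments $\mathcal{D}_i$ by $\beta^{-i}$, contributing $\mathcal{E}_i$ (the count version, Definition \ref{good}) by one. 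By the structural fact above, among the elements involved in a single local rise to level $j$, all but a $\frac{1}{\beta}$-fraction came from level exactly in $[j-1, j-1]$... let me be careful: $N_j(S) \setminus N_{j-1}(S)$ are the elements at level exactly $j-1$, and $|N_j(S)| - |N_{j-1}(S)| \ge |N_j(S)|(1 - \frac{1}{\beta})$ since $|N_{j-1}(S)| < \beta^j c(S) \le \frac{1}{\beta}\beta^{j+1}c(S) \le \frac{1}{\beta}|N_j(S)|$ using the lower bound $|N_j(S)| \ge \beta^{j+1}c(S)$ from PD-ness. Hence a $(1-\frac1\beta) = \frac{\epsilon}{1+\epsilon}$ fraction of the moved elements came from level exactly $j-1$. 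Each such element, when it was placed at level $j-1$ (in its initial $Cov$ set there), and now leaving it, bumps $\mathcal{D}_{j-1}$. So the number of level changes into level $j$ via local rises is at most $\frac{\beta}{\epsilon}$ times the number of elements that left their level-$(j-1)$ initial $Cov$ sets — but that count is exactly $\mathcal{E}_{j-1} = \mathcal{D}_{j-1}\beta^{j-1}$ summed over the history.

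So the plan concretely: let $X_j$ = total number of element level changes due to local rises to level $j$ over the whole update sequence; I want $X_j = O(\epsilon^{-1}) \cdot (\text{number of updates})$, i.e. amortized $O(\epsilon^{-1})$. By the fraction argument, $X_j \le \frac{1+\epsilon}{\epsilon} \cdot (\text{total number of elements that ever left a level-}(j-1)\text{ initial Cov set})$. Now I need a bound on the latter. Here the cleanest route is to observe that an element contributes to $\mathcal{E}_{j-1}$ (leaves a level-$(j-1)$ initial $Cov$ set) only after it has \emph{entered} a level-$(j-1)$ $Cov$ set, and each entry into a level is either (a) the element's insertion, (b) an Invariant \ref{inv3} move (only on insertion, by Claim \ref{clm0}), (c) a local rise bringing it there, or (d) a partial reset placing it there. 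Entries of types (a),(b) are $O(1)$ per update. For (c) and (d) I'd want a global potential / the not-dirty invariant. Actually the slicker finish: the total lifetime dirt ever added to $\mathcal{D}_{j-1}$ is bounded because every time $\mathcal{D}$ crosses the dirty threshold a partial reset fires and resets $\mathcal{D}_i$ for $i \le i_{crit}$; between resets, $\mathcal{D} < \frac{\epsilon}{\beta}\mathcal{C} \le \frac{\epsilon}{\beta}(\text{current cover cost})$, and the cover cost is controlled by the number of updates since each update adds $O(1)$ cost. Summing the dirt reset at each partial reset telescopes against the number of updates (each update increases $\mathcal{C}$ by at most $1$, hence over $U$ updates the total dirt ever generated is $O(\epsilon U)$, so $\sum_j \mathcal{E}_j = \sum_j \mathcal{D}_j \beta^j$... hmm, the $\beta^j$ weighting is annoying — better to keep everything in the $\mathcal{D}$ (weighted) units: $X_j \cdot \beta^{-j} \le \frac{\beta}{\epsilon}\cdot(\text{total }\mathcal{D}_{j-1}\text{ mass ever}) \cdot \beta \cdot \beta^{-j}$, and the total $\mathcal{D}$-mass ever generated is $O(\epsilon U)$, but distributing it across the single level $j-1$ and then reconverting via $\beta^{j}$ versus $\beta^{j-1}$ loses only a factor $\beta$). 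This gives amortized $X_j = O(\epsilon^{-1})$ per update, which is the claim.

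The main obstacle I anticipate is making the bookkeeping about "initial $Cov$ set" airtight: Definition \ref{dirtycounter} only counts departures from the $Cov$ set \emph{as last created}, so a $\mathcal{D}_{j-1}$ increment happens at most once per element per "creation epoch" of the set at level $j-1$. I need to argue that the elements moved in a local-rise-to-$j$ that came from level $j-1$ are precisely elements leaving some set $S'$ whose current $Cov(S')$ at level $j-1$ is its latest incarnation, so their departure does increment $\mathcal{D}_{j-1}$ — and crucially that no single $\mathcal{D}_{j-1}$ increment gets reused to pay for two different level-$j$ local rises, which follows since once an element leaves a set at level $j-1$ it's no longer in $N_{j}(S)$-counted-from-level-$j-1$ for that set. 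A secondary subtlety is the handling of elements at level $< j-1$ (the $\frac1\beta$-fraction we didn't charge): I will charge those recursively or simply absorb them, since the fraction from level $\le j-2$ is at most $\frac1\beta|N_j(S)|$ and by the same PD chain $|N_{j-1}(S)| \ge$ (previous PD threshold)... actually the clean way is: the elements from levels $\le j-2$ in the new $Cov$ are at most $|N_{j-1}(S)| < \beta^{j}c(S)$, and these will have incremented $\mathcal{D}_{<j-1}$ when they \emph{arrived at} level $j-1$ or below — but that double-counts across levels, so instead I'd just note $|N_{j-1}(S)| < \frac1\beta|N_j(S)|$ means the level-$(j-1)$ elements alone are $\ge (1-\frac1\beta)|N_j(S)| = \frac{\epsilon}{1+\epsilon}|N_j(S)|$, so charging only those and eating the $\frac{1+\epsilon}{\epsilon}$ blow-up suffices. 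I'd present the argument in that order: structural lemma on the fraction, then the charging to $\mathcal{D}_{j-1}$, then the global bound on total $\mathcal{D}$-mass via Invariant \ref{inv2} and the partial-reset resets, concluding amortized $O(\epsilon^{-1})$.
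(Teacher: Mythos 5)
The central structural claim in your argument is inverted, and the error traces directly to a misreading of the PD condition. Suppose $S$ performs a local rise to level $j$. By Algorithm~\ref{alg3}, the trigger is that $S$ is $(j-1)$-PD, which by Definition~\ref{positive} means $|N_{j-1}(S)| \geq \beta^{j}\cdot c(S)$ --- the opposite of your assumption $|N_{j-1}(S)| < \beta^{j}\cdot c(S)$. Similarly, because the algorithm always rises to the \emph{highest} PD level, $S$ is not $j$-PD, so $|N_j(S)| < \beta^{j+1}\cdot c(S)$ --- again the opposite of your assumed $|N_j(S)| \geq \beta^{j+1}\cdot c(S)$. (You seem to be applying Invariant~\ref{inv1} to level $j-1$ at the one instant it is transiently violated, and reading the PD-ness at the wrong index.) With the correct inequalities, the elements at level exactly $j-1$ number $|N_j(S)| - |N_{j-1}(S)| < \beta^{j+1}c(S) - \beta^j c(S)$, which is \emph{at most} a $1-\frac{1}{\beta} = \frac{\epsilon}{1+\epsilon}$ fraction of the new $Cov(S)$, and could be zero; it is the elements at levels $\leq j-2$ that form at least a $\frac{1}{\beta}$ fraction. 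So the majority of the newly covered elements come from ``far below'', not from level $j-1$, and your plan to charge every level change to a fresh $\mathcal{D}_{j-1}$ increment (accepting a $\frac{\beta}{\epsilon}$ blow-up) cannot account for most of them. The secondary difficulty you flagged --- absorbing the leftover $\frac{1}{\beta}$-fraction --- is in fact the entire problem.

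The paper's proof avoids dirt counters altogether and charges to \emph{insertions}. Writing $l = j-1$, the key fact is Claim~\ref{clm1}: after any partial reset in which $S$ participated up to a level $i_{crit}\ge l$, we have $|N_l(S)| < \beta^l c(S)$. Between such a reset (time $t$) and the local rise (time $t'$), $|N_l(S)|$ must climb from below $\beta^l c(S)$ to at least $\beta^{l+1}c(S)$, a gain of $> \epsilon\beta^l c(S)$; by Claim~\ref{clm2} and Observation~\ref{nochange} none of this growth can come from resets in $(t,t')$, so it is due to freshly inserted elements, each carrying $\frac{\beta^2}{\epsilon}$ tokens dedicated to level $j$. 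A further argument (Observation~\ref{obs3}) shows these tokens cannot already have been spent on an earlier rise to level $j$ within $(t,t')$. Since $|Cov(S)| < \beta^{l+2}c(S)$ after the rise (Observation~\ref{obclean}), the $\epsilon\beta^l c(S)\cdot\frac{\beta^2}{\epsilon} = \beta^{l+2}c(S)$ available tokens suffice. Your other worry --- bounding the total dirt mass via Invariant~\ref{inv2} --- is genuinely circular here: total dirt generated is driven by elements leaving Cov sets, which in turn is driven by the very local-rise level changes the lemma is trying to bound, so one cannot bootstrap the dirt bound without an independent handle like the insertion-charging the paper uses.
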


\begin{proof}
Let $l = j-1$. We use a token scheme, where each element insertion gives $c_{\epsilon}$ tokens ($c_{\epsilon}$ will be taken as $\frac{\beta^2}{\epsilon}$ later).  
Once an element changes the level in which it is covered to level $j = l+1$ due to a local rise, it pays one token. Thus, we will prove the lemma by showing that the system does not run out of tokens.

Consider some arbitrary local rise of set $S$ to level $l+1$. Denote the time step of this local rise by $t'$. At time step $t'$, we have $\frac{|N_l(S)|}{c(S)} \geq \beta^{l+1}$. Denote by $t$ the time step of the last time before $t'$ in which a reset up to level $i_{crit} \geq l$ in which $S$ participated in, has occurred; recall that there is an imaginary reset on the entire system at time $0$, and so $t$ is well-defined. By Claim \ref{clm1}, we conclude that between time steps $t$ and $t'$, $S$ has obtained more than $c(S) \cdot (\beta^{l+1} - \beta^{l}) = c(S) \cdot \epsilon \cdot \beta^{l}$ new low-level (lower than $l$) elements which it can cover. By Observation \ref{obclean}, following the local rise $|Cov(S)| <c(S) \cdot \beta^{l+2}$. Thus, by setting $c_{\epsilon}=\frac{\beta^2}{\epsilon}$, if we could show that each of the new low-level elements in $S$ still has its $c_{\epsilon}$ tokens, we could redistribute them such that each element in the new $Cov(S)$ would have a token for the local rise.

Consider such an arbitrary low-level element denoted by $e$. The only way for $e$ to join $N_l(S)$ (thereby incrementing $|N_l(S)|$) is either by insertion of element $e$ or a partial reset. However, since all partial resets in the time interval $(t,t')$ are either up to level $\leq l-1$ or $S$ did not participate in them (by definition of $t$ and $t'$), Claim \ref{clm2} and Observation \ref{nochange} respectively imply that $e$ cannot be a new low-level element contained in $S$ due to a partial reset. We conclude that each of the new low-level elements are a result of an element insertion. Notice that each insertion can raise $|N_{l}(S)|$ by only one, and each new low-level element due to such an insertion arrives with a fresh set of tokens. Now, we claim that at time $t'$, $e$ still has its $c_{\epsilon}$ tokens (and did not spend any on some prior local rise). Recall that it received these tokens in an insertion in the {\em time interval $(t,t')$}, i.e., sometime after $t$ and before $t'$. Had $e$ not had $c_{\epsilon}$ tokens at time $t'$, it would mean that sometime in the time interval $(t,t')$, $e$ was covered at level $l+1$. For $e$ to be part of a local rise at $t'$ to level $l+1$ though, the level of $e$ must have dropped sometime before $t'$, in a reset up to level $\geq l+1$ by Observation \ref{obs3}. This reset would have to be up to level at least $l+1$, and since $e$ would be in $\tilde{\mathcal{U}}$ and it is contained in $S$ then $S$ would have to participate in this reset, a contradiction to the definition of $t$. Thus $e$ still has the $c_{\epsilon}$ tokens that it received.

Thus, for a local rise of $S$ to level $l+1$, we have overall acquired at least $c(S) \cdot c_{\epsilon} \cdot (\epsilon \cdot \beta^{l})$ tokens. By Observation \ref{obclean}, $|Cov(S)| <c(S) \cdot \beta^{l+2}$. By setting $c_{\epsilon}=\frac{\beta^2}{\epsilon}$, we would have enough tokens for all of $Cov(S)$. We conclude that if each element insertion gives $\frac{\beta^2}{\epsilon}$ tokens, we can pay one token for each element level change due to local rises to level $j = l+1$ in the sequence. 
\qed
\end{proof}

\noindent \textbf{Remark.} In DS, each edge update will give $c_{\epsilon}$ tokens to each of the two endpoints of the deleted/inserted edge. Here too, between time steps $t$ and $t'$ the vertex $v$ (which is performing the local rise) has obtained more than $c(v) \cdot \epsilon \cdot \beta^l$ new low-level neighbors. Denote one of them by $u$. As in SC, $u$ cannot join $N_l(v)$ due to a partial reset. However, it can join due to an edge insertion \emph{or deletion}. Insertion is trivial, if $u$ and $v$ were not neighbors and then edge $(u,v)$ was inserted. For deletion, if $u$ was dominated by some $x$ at a level at least $l$, and then edge $(u,x)$ was deleted, then $u$ will be dominated by its next highest dominating neighbor, which could be lower than $l$, and as a result it would join $N_l(v)$. Since we can pay tokens for each insertion and deletion, the lemma holds.

\begin{lem} \label{lem2}
Denote by $K$ the total number of insertions and deletions in the sequence. Then there are $\leq [\frac{\beta^2}{\epsilon} \cdot O(\log_{\beta}n)] \cdot K$ element level changes due to local rises in the sequence.
\end{lem}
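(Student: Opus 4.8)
The plan is to combine Lemma \ref{first} with a summation over the relevant levels. Lemma \ref{first} already gives the amortized bound: for any fixed level $j$, the number of element level changes due to local rises \emph{to that particular level} $j$ is at most $\frac{\beta^2}{\epsilon} \cdot K$ (each of the $K$ updates deposits $\frac{\beta^2}{\epsilon}$ tokens, and one token is paid per such level change, as per the token scheme in the proof of Lemma \ref{first}). So the remaining task is simply to argue that only $O(\log_\beta n)$ distinct levels $j$ can ever be the target of a local rise, and then sum.

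First I would recall the range-of-relevant-levels analysis from Section \ref{rel}: for each set $S$, a local rise of $S$ can only occur to a level in the interval $[j_{min}^S + 1, j_{max}^S + 1]$, because $S$ can be $j$-PD only for $j \in [j_{min}^S, j_{max}^S]$ (a set cannot be $j$-PD for $j > j_{max}^S$ since that would force more than $n$ coverable elements, and being $j$-PD for $j < j_{min}^S$ is subsumed by being $j_{min}^S$-PD, so a local rise would instead go to level $\ge j_{min}^S + 1$). Since $j_{max}^S - j_{min}^S + 1 \le \log_\beta n + 3$, each \emph{individual} set contributes local rises only to a window of $O(\log_\beta n)$ levels. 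However, different sets have different windows $[j_{min}^S, j_{max}^S]$ (shifted by $\log_\beta \frac{1}{c(S)}$), so across all sets the union of possible target levels need not be $O(\log_\beta n)$ in general — unless we also use that only the non-negative levels are relevant and that the inverse ratios $\frac{|Cov(S)|}{c(S)}$ are bounded by $\beta \cdot n / c(S) \le \beta n C$. The cleaner route, matching the $O(\log_\beta n)$ bound claimed, is to charge each local-rise level change of an element $e$ to the \emph{set} $S$ performing the rise and note that $e$'s level is constrained: I would instead sum Lemma \ref{first} over the $O(\log_\beta n)$ levels relevant to whichever set performs the rise — more carefully, partition all local rises by the set $S$ involved, and for each set the target level lies in a window of width $\log_\beta n + 3$, and within each such window-level the total (over the whole sequence) is $\le \frac{\beta^2}{\epsilon} K$ by Lemma \ref{first}. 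Summing the per-level bound over the at most $\log_\beta n + 3$ relevant target levels of any set gives the stated $\bigl[\frac{\beta^2}{\epsilon} \cdot O(\log_\beta n)\bigr] \cdot K$.

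The one subtlety — and the step I expect needs the most care — is making sure the summation is legitimate: Lemma \ref{first}'s token scheme is stated \emph{per level $j$}, and a naive sum over all $O(\log_\beta(nC))$ levels would give an extra $\log C$ factor rather than the claimed $\log n$. The resolution is that the token argument in Lemma \ref{first} is really indexed by the \emph{low level} $l = j-1$ from which the fresh low-level elements arrive, and for a given set $S$ only $O(\log_\beta n)$ such $(l, S)$-windows are ever active (by the Section \ref{rel} argument), while tokens held by an element are never double-counted across different target levels because once an element pays a token for a rise to level $l+1$ it is at level $l+1$, and the argument in Lemma \ref{first} shows it cannot have spent that token earlier. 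So I would phrase the final count as: total local-rise element level changes $= \sum_{j} (\text{changes due to local rises to level } j)$, where $j$ ranges only over the $O(\log_\beta n)$ levels that are relevant target levels, each term bounded by $\frac{\beta^2}{\epsilon} K$ via Lemma \ref{first}, yielding the claim.
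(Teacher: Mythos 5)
Your high-level plan — combine Lemma~\ref{first} with a $O(\log_\beta n)$-fold multiplicity via a labeled token scheme — is the right shape, and you correctly flag the real danger: in the weighted case the set of levels that ever serve as a local-rise target can have size $\Theta(\log_\beta(nC))$, so one cannot simply sum Lemma~\ref{first}'s global per-level bound $\frac{\beta^2}{\epsilon}K$ over every level that is ever a target. But the resolution you offer does not close this gap. Partitioning local rises by the set $S$ performing the rise and noting that each set's own window $[j_{\min}^S, j_{\max}^S]$ has width $O(\log_\beta n)$ does not help: the per-level bound of Lemma~\ref{first} is a single global budget shared by \emph{all} sets, not something you may re-spend once per set, and the union of the per-set windows over all $S$ is exactly the $\Theta(\log_\beta(nC))$-size set you were trying to avoid. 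Your final summation, ``$\sum_j$, where $j$ ranges over the $O(\log_\beta n)$ relevant target levels,'' therefore relies on a premise (a fixed global pool of $O(\log_\beta n)$ target levels) that you had just acknowledged is false.

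The paper's fix is to anchor the window at the \emph{element} rather than the set. For each element $e$, let $S_1$ be the \emph{minimum-cost} set in $\mathcal{F}_e$; the proof argues that $e$ can only be part of a local rise to a level in the interval $\left[\lfloor\log_\beta(1/c(S_1))\rfloor - 1,\; \lfloor\log_\beta(n/c(S_1))\rfloor\right]$, of width $O(\log_\beta n)$. The upper end is because no set containing $e$ can cover more than $n$ elements, and $S_1$ (having minimum cost) gives the largest upper bound among $\mathcal{F}_e$. The lower end uses the processing order: if $e$ ever sat low enough that some set in $\mathcal{F}_e$ could lift it via a rise to a level below $\lfloor\log_\beta(1/c(S_1))\rfloor - 1$, then $S_1$ would already be PD at level $\lfloor\log_\beta(1/c(S_1))\rfloor - 2$ (since $\beta^{\lfloor\log_\beta(1/c(S_1))\rfloor - 1}\cdot c(S_1) < 1$), and because the algorithm always takes the highest PD level first, $S_1$'s rise preempts the lower one. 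With this per-element window, each insertion of $e$ hands out $\frac{\beta^2}{\epsilon}$ tokens labeled $j$ for each of $e$'s $O(\log_\beta n)$ relevant levels $j$, Lemma~\ref{first}'s charging argument is run separately per label, and the bound follows with no $\log C$ factor. The missing idea in your write-up is precisely this per-element window anchored at $S_1$ — it is what replaces the unbounded union of per-set windows with a uniform $O(\log_\beta n)$-wide window per element.
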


\begin{proof}
Using a similar token scheme, each element insertion gives $\frac{\beta^2}{\epsilon} \cdot O(\log_{\beta} n)$ tokens. We claim that there are only $O(\log_{\beta} n)$ relevant levels for $e$, where we define a relevant level for $e$ as a level that $e$ can be part of a local rise to. Denote by $S_1$ the set in $\mathcal{F}_e$ with minimum cost. The highest level which $e$ can climb to via local rise is clearly $\lfloor \log_{\beta} (\frac{n}{c(S_1)}) \rfloor$ (no more than $n$ elements). The lowest level which $e$ can climb to via local rise is $\lfloor \log_{\beta} (\frac{1}{c(S_1)}) \rfloor - 1$. Indeed, if we assume that a local rise to a lower level, $j$, in which $e$ is part of, can occur, then clearly $|N_{\lfloor \log_{\beta} (\frac{1}{c(S_1)}) \rfloor - 2}(S_1)| \geq 1$. But $\beta^{\lfloor \log_{\beta} (\frac{1}{c(S_1)}) \rfloor - 1} \cdot c(S_1) \leq \beta^{-1} < 1$, so we get that $|N_{\lfloor \log_{\beta} (\frac{1}{c(S_1)}) \rfloor - 2}(S_1)| > \beta^{\lfloor \log_{\beta} (\frac{1}{c(S_1)}) \rfloor - 1} \cdot c(S_1)$, so $S_1$ is $(\lfloor \log_{\beta} (\frac{1}{c(S_1)}) \rfloor - 2)$-PD, and since we perform local rises from high to low levels, we will have a local rise to $\lfloor \log_{\beta} (\frac{1}{c(S_1)}) \rfloor - 1$ and not to $j$. Thus, there are $\lfloor \log_{\beta} (\frac{n}{c(S_1)}) \rfloor - \lfloor \log_{\beta} (\frac{1}{c(S_1)}) \rfloor + 2 = O(\log_{\beta}n)$ relevant levels for $e$. 

Therefore, we can target $\frac{\beta^2}{\epsilon}$ separate tokens for each of the $O(\log_{\beta} n)$ relevant levels for $e$. When an element $e$ is inserted, then the system receives $\frac{\beta^2}{\epsilon}$ tokens labelled $j$ for each $j$ that is a relevant level for $e$. By Lemma \ref{first}, once an element changes the level in which it is covered due to a local rise, it can pay one token (with the corresponding label), and the lemma holds.
\qed
\end{proof}

\noindent \textbf{Remark.} In DS, we want to give each of the two endpoints of the inserted/deleted edge, denoted by $v$ and $u$, $\frac{\beta^2}{\epsilon} \cdot O(\log_{\beta} n)$ tokens. Consider the vertex $v$, and denote by $v_1$ its neighbor with minimum cost. For the same reason as in SC, the highest level which $v$ can climb to via local rise is clearly $\lfloor \log_{\beta} (\frac{n}{c(v_1)}) \rfloor$ ($n$ vertices). The lowest level which $v$ can climb to via local rise is $\lfloor \log_{\beta} (\frac{1}{c(v_1)}) \rfloor - 1$. However, the neighbors of $v$ dynamically change, as opposed to SC where the same sets can cover element $e$ while it is activated. Notice though that the costs are static, thus if $v_1$ is no longer the minimum cost neighbor of $v$, and now $v_1'$ is, then we either deleted edge $(v_1,v)$ or inserted edge $(v_1',v)$. Either way $v$ is an endpoint of an updated edge and as a result receives another batch of $\frac{\beta^2}{\epsilon} \cdot O(\log_{\beta} n)$ tokens, which is now enough to climb from $\lfloor \log_{\beta} (\frac{1}{c(v_1')}) \rfloor - 1$ to $\lfloor \log_{\beta} (\frac{n}{c(v_1')}) \rfloor$.

\bigskip

\noindent The next lemma shows that we have tokens also for element level changes due to partial resets. 
To be more specific, it shows that we have a fresh token for each of the elements entering the reset in the set $\tilde{\mathcal{U}}$; although some of them may not change their covered level as a result of a partial reset, we count every such element in $\tilde{\mathcal{U}}$ as changing its covered level by the partial reset (see the remark following Observation \ref{obs2}). We will raise the number of tokens we hand out per each element update (with respect to the one used in the proof of Lemma \ref{lem2}) to achieve that.

\begin{lem} \label{lem3}
Denote by $K$ the total number of insertions and deletions in the sequence. Then there are $\leq [\frac{\beta^2}{\epsilon}(1+\frac{2\beta^3}{\epsilon})\cdot O(\log_{\beta}n) + \frac{2\beta^3}{\epsilon}] \cdot K$ element level changes due to local rises and partial resets in the sequence.
\end{lem}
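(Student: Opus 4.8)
The idea is to extend the token scheme of Lemma~\ref{lem2}: we hand out more tokens per element update so that, beyond paying one token for every element level change caused by a local rise (exactly as in Lemma~\ref{lem2}), there are enough tokens left over to pay one token for every element level change caused by a partial reset. Recall, by the remark following Observation~\ref{obs2}, that a single partial reset performed up to a half-critical level $i_{crit}$ (such a level exists whenever the system is dirty, by Claim~\ref{lem1}) is charged $|\tilde{\mathcal{U}}| = \mathcal{L}_0^{i_{crit}}$ element level changes --- one for each element covered at a level $\le i_{crit}$. So the whole task reduces to bounding $\sum_{\text{resets}}\mathcal{L}_0^{i_{crit}}$, summed over all partial resets in the sequence.

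\textbf{Step 1 (cost of one reset in terms of the dirt it accumulated).} Fix a partial reset up to a half-critical level $i_{crit}$. By Observation~\ref{obs1}, every covering set $S$ with $l_{cov}(S)=l$ satisfies $|Cov(S)|<c(S)\beta^{l+2}$, and summing over such sets gives $\mathcal{L}_l<\beta^{l+2}\mathcal{C}_l$ for every $l$, hence $\mathcal{L}_0^{i_{crit}}<\beta^2\sum_{l=0}^{i_{crit}}\beta^l\mathcal{C}_l$. Writing $\beta^l=1+\epsilon\sum_{k=0}^{l-1}\beta^k$ (since $\beta-1=\epsilon$) and swapping the order of summation re-expresses $\sum_l\beta^l\mathcal{C}_l$ as a nonnegative combination of the cumulative costs $\mathcal{C}_j^{i_{crit}}$, $0\le j\le i_{crit}$. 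The half-critical inequalities $\mathcal{D}_j^{i_{crit}}\ge\frac{\epsilon}{2\beta}\mathcal{C}_j^{i_{crit}}$ then bound each $\mathcal{C}_j^{i_{crit}}$ by $\frac{2\beta}{\epsilon}\mathcal{D}_j^{i_{crit}}=\frac{2\beta}{\epsilon}\sum_{q=j}^{i_{crit}}\mathcal{E}_q/\beta^q$, and one further swap of summation order collapses the geometric sums $\sum_{k\le q}\beta^k<\beta^{q+1}/\epsilon$ against the weights $1/\beta^q$, so that all the $\beta$-powers telescope and we are left (with a careful choice of constants) with
\[
\mathcal{L}_0^{i_{crit}} < \frac{2\beta^3}{\epsilon}\,\mathcal{E}_0^{i_{crit}},
\]
where $\mathcal{E}_0^{i_{crit}}=\sum_{q=0}^{i_{crit}}\mathcal{E}_q$ denotes the value of these counters at the moment of the reset, i.e.\ the number of elements that have left the \emph{initial} $Cov$ set of some set covering at a level $\le i_{crit}$ since these counters were last reset.

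\textbf{Step 2 (charging the dirt back).} Every increment of an $\mathcal{E}_q$ counter records an element leaving an initial $Cov$ set, and the only events that move an element out of its current $Cov$ set are element deletions, local rises, and partial resets. But a partial reset up to level $i_{crit}$ empties only $Cov$ sets at levels $\le i_{crit}$ and then zeroes exactly the counters $\mathcal{D}_i,\mathcal{E}_i$ for $i\le i_{crit}$, so the $\mathcal{E}$-increments it causes are immediately wiped (and re-covered elements landing at level $i_{crit}+1$, by Observation~\ref{ob:added}, create no uncleared increment there); hence every $\mathcal{E}$-increment that is ever ``consumed'' by a partial reset --- counted in its $\mathcal{E}_0^{i_{crit}}$ and then cleared --- is caused by an element deletion or by a local-rise element level change. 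Moreover, an $\mathcal{E}$-increment at level $q$ persists until, and is cleared by, the first partial reset up to a level $\ge q$ after it, so it is consumed by at most one reset. Summing the bound of Step~1 over all partial resets,
\[
\sum_{\text{resets}}\mathcal{L}_0^{i_{crit}} < \frac{2\beta^3}{\epsilon}\Big(\#\{\text{deletions}\}+\#\{\text{local-rise element level changes}\}\Big) \le \frac{2\beta^3}{\epsilon}\Big(K+\tfrac{\beta^2}{\epsilon}O(\log_\beta n)K\Big),
\]
the last inequality by Lemma~\ref{lem2}.

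\textbf{Step 3 and main obstacle.} Adding the Lemma~\ref{lem2} bound on local-rise level changes gives a total of at most $\big[\frac{\beta^2}{\epsilon}(1+\frac{2\beta^3}{\epsilon})O(\log_\beta n)+\frac{2\beta^3}{\epsilon}\big]K$ element level changes due to local rises and partial resets, which is the claim; equivalently, handing each element update $\frac{\beta^2}{\epsilon}(1+\frac{2\beta^3}{\epsilon})$ tokens for each of its $O(\log_\beta n)$ relevant levels plus $\frac{2\beta^3}{\epsilon}$ extra tokens suffices, and the DS case is identical with edge updates handing the tokens to both endpoints (as in the DS remarks following Lemmas~\ref{first} and~\ref{lem2}). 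The delicate part is Step~1: the chain Observation~\ref{obs1} $\to$ $\beta$-expansion $\to$ half-critical bounds $\to$ geometric-sum collapse must be carried out so that the nested summations and the $\beta$-powers telescope into the single factor $\frac{2\beta^3}{\epsilon}$ rather than leaking an extra $O(\log_\beta n)$ --- which a naive term-by-term bound of $\mathcal{L}_l$ against $\mathcal{E}_l^{i_{crit}}$ would do. The secondary subtlety is the bookkeeping of Step~2: making sure the $\mathcal{E}$-increments produced by a reset are wiped by the \emph{same} reset and that no increment is consumed by two resets, both of which hold because a reset up to $i_{crit}$ clears precisely the counters indexed $\le i_{crit}$.
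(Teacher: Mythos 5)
Your proof is correct, and the overall strategy — a token scheme charging element level changes in partial resets to earlier leave-events (deletions and local-rise departures), using the half-critical condition to ensure the dirt at levels $\le i_{crit}$ is large enough to pay for $\mathcal{L}_0^{i_{crit}}$ — is the same as the paper's. Where you differ is in how you establish the key quantitative fact. The paper runs an explicit downward induction on $j$ from $i_{crit}$ to $0$, tracking ``overhead spare'' tokens $X$ and $Y$ level by level; you instead prove the clean inequality $\mathcal{L}_0^{i_{crit}}<\frac{2\beta^3}{\epsilon}\mathcal{E}_0^{i_{crit}}$ directly, via the identity $\beta^l=1+\epsilon\sum_{k<l}\beta^k$ and two swaps of summation order, after which everything telescopes to $\frac{2\beta}{\epsilon}\sum_q\mathcal{D}_q\beta^q=\frac{2\beta}{\epsilon}\mathcal{E}_0^{i_{crit}}$. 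This is mathematically the same content (the induction and the summation swap are dual forms of the same manipulation), but your version dispenses with the stateful bookkeeping of saved tokens and is arguably easier to verify; it also makes explicit an intermediate bound ($\mathcal{L}_0^{i_{crit}}$ vs.\ $\mathcal{E}_0^{i_{crit}}$) that the paper leaves implicit inside the token ledger. Your Step~2 (each $\mathcal{E}$-increment is caused by a deletion or a local rise, never by a reset, and is consumed by at most one reset since resets zero precisely the counters $\mathcal{D}_i,\mathcal{E}_i$ for $i\le i_{crit}$) is the same charging argument the paper carries out via ``leaving elements leave tokens behind at their level.'' The constants come out identically, yielding exactly the stated bound.
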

\begin{proof}
We use a token scheme similar to that used in the proof of Lemma \ref{lem2}. Each element insertion gives $\frac{\beta^2}{\epsilon}(1+\frac{2\beta^3}{\epsilon})\cdot O(\log_{\beta}n)$ tokens, and each element deletion gives $\frac{2\beta^3}{\epsilon}$ tokens. Once an element changes the level in which it is covered due to a local rise or a partial reset, it pays one token. Thus, we will prove the lemma by showing that the system does not run out of tokens.

By Lemma \ref{lem2}, $\frac{\beta^2}{\epsilon} \cdot O(\log_{\beta}n)$ tokens per insertion would be enough to deal with element level changes due to local rises, meaning that every element that changes its covered level due to a local rise can afford to pay one token. By multiplying this by $(1+\frac{2\beta^3}{\epsilon})$, we have that each element that changes its covered level due to a local rise can pay $(1+\frac{2\beta^3}{\epsilon})$ tokens. Each element $e$ which is part of a local rise to $Cov(S)$ will spend one token to change levels, and will leave behind the additional $\frac{2\beta^3}{\epsilon}$ tokens for the level of its previous Cov set, denoted $l_{cov}(S')$. Upon deletion of element $e$, where say $e$ was in $Cov(S')$, then $e$ gives the $\frac{2\beta^3}{\epsilon}$ tokens to $l_{cov}(S')$. Notice that due to an insertion, an element cannot leave its covering set. We conclude that each element $e$ that leaves its Cov set $S'$ leaves $\frac{2\beta^3}{\epsilon}$ tokens for its previous level, $l_{cov}(S')$, and our goal is to show that given a reset up to level $i_{crit}$, the levels up to $i_{crit}$ have obtained altogether at least $\mathcal{L}_0^{i_{crit}}$ tokens (see definition \ref{good}), so that we can redistribute the tokens such that each element up to level $i_{crit}$ receives one token. 

Recall that once an element $e$ leaves its covering set $S'$ at a level $i$, we increment $\mathcal{D}_i$ by $\frac{1}{\beta^i}$ and $\mathcal{E}_i$ by $1$. Since $i_{crit}$ is half-critical, for every $j \leq i_{crit}$ we have $\mathcal{D}_j^{i_{crit}} \geq \frac{1}{2} \cdot \frac{\epsilon}{\beta} \cdot \mathcal{C}_j^{i_{crit}}$. We argue inductively on $j$, for any $j = i_{crit}, i_{crit} - 1, \ldots, 0$ that we can pay one token for each element covered at a level between $j$ and $i_{crit}$, by using \emph{only} tokens obtained from elements that left levels between $j$ and $i_{crit}$ that were necessary to raise $\mathcal{D}_{j}^{i_{crit}}$ to \emph{exactly} $\frac{1}{2} \cdot \frac{\epsilon}{\beta} \cdot \mathcal{C}_{j}^{i_{crit}}$. 

For the basis $j=i_{crit}$, since level $i_{crit}$ is half-critical (and thus half-dirty as well), $\mathcal{D}_{i_{crit}} \geq \frac{1}{2} \cdot \frac{\epsilon}{\beta} \cdot \mathcal{C}_{i_{crit}}$.
Since each leaving element gives $\frac{2\beta^3}{\epsilon}$ tokens, level $i_{crit}$ has obtained $\mathcal{D}_{i_{crit}} \cdot \beta^{i_{crit}} \cdot \frac{2\beta^3}{\epsilon}$ tokens. And since level $i_{crit}$ is half-dirty, it has obtained $\geq \frac{1}{2} \cdot \frac{\epsilon}{\beta} \cdot \mathcal{C}_{i_{crit}} \cdot \beta^{i_{crit}} \cdot \frac{2\beta^3}{\epsilon} = \mathcal{C}_{i_{crit}} \cdot \beta^{i_{crit}+2}$ tokens. By Observation \ref{obs1}, for any set $S \in \mathcal{S}_{i_{crit}}$, we know that $|Cov(S)| < c(S) \cdot \beta^{i_{crit}+2}$. Thus: 
$$\mathcal{L}_{i_{crit}} = \sum_{S \in \mathcal{S}_{i_{crit}}} |Cov(S)| < \sum_{S \in \mathcal{S}_{i_{crit}}} c(S) \cdot \beta^{i_{crit}+2} = \mathcal{C}_{i_{crit}} \cdot \beta^{i_{crit}+2}$$

\noindent Therefore, level $i_{crit}$ has obtained more tokens than the number of elements covered at level $i_{crit}$, $\mathcal{L}_{i_{crit}}$, by using the tokens obtained by \emph{only} $\frac{1}{2} \cdot \frac{\epsilon}{\beta} \cdot \mathcal{C}_{i_{crit}} \cdot \beta^{i_{crit}}$ elements leaving the level, and we save the rest. The basis of the induction holds.

For the induction step, assume the claim holds inductively for some $1 \leq j \leq i_{crit}$, and  prove it for $j-1$. Since level $i_{crit}$ is half-critical, $\mathcal{D}_{j}^{i_{crit}} \geq \frac{1}{2} \cdot \frac{\epsilon}{\beta} \cdot \mathcal{C}_{j}^{i_{crit}}$. Assume that $\mathcal{D}_j^{i_{crit}} = \frac{1}{2} \cdot \frac{\epsilon}{\beta} \cdot \mathcal{C}_j^{i_{crit}} + X$, for some $X \geq 0$. By the induction hypothesis, we have not yet used the tokens from leaving elements that caused the overhead spare of $X$ for $\mathcal{D}_j^{i_{crit}}$. These elements left from levels between $j$ and $i_{crit}$, thus the number of leaving elements necessary to cause this overhead of $X$ is at least $X \cdot \beta^j$, meaning the overhead of $X$ gives at least $X \cdot \beta^j \cdot \frac{2\beta^3}{\epsilon}$ additional tokens, which have not been used yet. In addition, the tokens obtained from elements leaving level $j-1$ have not been used yet. Since $\mathcal{D}_{j-1}^{i_{crit}} \geq \frac{1}{2} \cdot \frac{\epsilon}{\beta} \cdot \mathcal{C}_{j-1}^{i_{crit}}$, we can write:
$$\mathcal{D}_{j}^{i_{crit}} + \mathcal{D}_{j-1} = \frac{1}{2} \cdot \frac{\epsilon}{\beta} \cdot (\mathcal{C}_{j}^{i_{crit}} + \mathcal{C}_{j-1}) + Y,$$

\noindent where $Y \geq 0$, and we want to prove that the tokens obtained from this overhead of $Y$ are not needed for level $j-1$ (and we can save them for later). Since $\mathcal{D}_j^{i_{crit}} = \frac{1}{2} \cdot \frac{\epsilon}{\beta} \cdot \mathcal{C}_j^{i_{crit}} + X$, we get that:
$$\frac{1}{2} \cdot \frac{\epsilon}{\beta} \cdot \mathcal{C}_j^{i_{crit}} + X + \mathcal{D}_{j-1} = \frac{1}{2} \cdot \frac{\epsilon}{\beta} \cdot (\mathcal{C}_{j}^{i_{crit}} + \mathcal{C}_{j-1}) + Y,$$

\noindent and thus:
$$\mathcal{D}_{j-1} = \frac{1}{2} \cdot \frac{\epsilon}{\beta} \cdot \mathcal{C}_{j-1} - X + Y$$

\noindent Level $j-1$ has obtained $\mathcal{D}_{j-1} \cdot \beta^{j-1} \cdot \frac{2\beta^3}{\epsilon}$ tokens. Thus, it obtained $(\frac{1}{2} \cdot \frac{\epsilon}{\beta} \cdot \mathcal{C}_{j-1} - X + Y) \cdot \beta^{j-1} \cdot \frac{2\beta^3}{\epsilon}$ tokens, meaning $\beta^{j+1} \cdot \mathcal{C}_{j-1} - \frac{2\beta^{j+2}}{\epsilon} \cdot X + \frac{2\beta^{j+2}}{\epsilon} \cdot Y$ tokens. Since the overhead of $X$ gave us $\geq \frac{2\beta^{j+3}}{\epsilon} \cdot X$ tokens, we have a total of $\geq \beta^{j+1} \cdot \mathcal{C}_{j-1} + \frac{2\beta^{j+2}}{\epsilon} \cdot Y$ tokens. By Observation \ref{obs1}, for any set $S \in \mathcal{S}_{j-1}$, we know that $|Cov(S)| < c(S) \cdot \beta^{j+1}$. Thus: 

$$\mathcal{L}_{j-1} = \sum_{S \in \mathcal{S}_{j-1}} |Cov(S)| < \sum_{S \in \mathcal{S}_{j-1}} c(S) \cdot \beta^{j+1} = \mathcal{C}_{j-1} \cdot \beta^{j+1}$$

\noindent Therefore, we have enough tokens to distribute one for each element covered at level $j-1$, by using only the tokens obtained by elements leaving levels $j-1$ to $i_{crit}$ that were necessary to raise $\mathcal{D}_{j-1}^{i_{crit}}$ to \emph{exactly} $\frac{1}{2} \cdot \frac{\epsilon}{\beta} \cdot \mathcal{C}_{j-1}^{i_{crit}}$, without using the overhead of $Y$. The induction step holds.

Applying the induction claim in the particular case $j = 0$, we conclude that we can pay all elements covered up to level $i_{crit}$ (one token each) by using tokens from elements that left levels between $0$ and $i_{crit}$ that were necessary to raise $\mathcal{D}_{0}^{i_{crit}}$ to $\frac{1}{2} \cdot \frac{\epsilon}{\beta} \cdot \mathcal{C}_{0}^{i_{crit}}$.
By Equation (\ref{eq1half}), since level $i_{crit}$ is half-critical, $\mathcal{D}_{0}^{i_{crit}} \geq \frac{1}{2} \cdot \frac{\epsilon}{\beta} \cdot \mathcal{C}_{0}^{i_{crit}}$, thus we have enough tokens, and the lemma holds. 
\qed
\end{proof}

\begin{cor} \label{cor1}

The amortized number of element level changes per update step is at most $\frac{\beta^2}{\epsilon}(1+\frac{2\beta^3}{\epsilon})\cdot O(\log_{\beta}n) + \frac{2\beta^3}{\epsilon} + 1 = O(\epsilon^{-3}\cdot \ln n) = O_{\epsilon}(\ln n)$.

\end{cor}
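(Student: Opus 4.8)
The plan is to tally every invocation of the \emph{ChangeLevel} procedure (Algorithm \ref{algcl}) across the update sequence and add up the contributions from its three — and, by Claim \ref{clm0}, only three — sources. Over a sequence with $K$ total insertions and deletions, Algorithms \ref{alg1} and \ref{alg2} (and their DS analogues) make exactly one top-level call to \emph{ChangeLevel} per update step, accounting for $K$ such calls; by Claim \ref{clm0} this is the only place an element changes its level in order to restore Invariant \ref{inv3}, and a local rise or partial reset never cascades into a further Invariant \ref{inv3} violation. Every remaining element level change is caused either by a \emph{local rise} (the \emph{ChangeLevel} calls in the for-loop of Algorithm \ref{alg3}) or by a \emph{partial reset} (where, following the remark after Observation \ref{obs2}, we charge one level change to each element entering the reset in $\tilde{\mathcal{U}}$, even if its level does not actually change).

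First I would invoke Lemma \ref{lem3}, which bounds the number of element level changes due to local rises and partial resets over the whole sequence by $\bigl[\frac{\beta^2}{\epsilon}(1+\frac{2\beta^3}{\epsilon})\cdot O(\log_{\beta}n) + \frac{2\beta^3}{\epsilon}\bigr]\cdot K$. Adding the $K$ top-level calls accounted for above yields a total of at most $\bigl[\frac{\beta^2}{\epsilon}(1+\frac{2\beta^3}{\epsilon})\cdot O(\log_{\beta}n) + \frac{2\beta^3}{\epsilon} + 1\bigr]\cdot K$ element level changes over the sequence, so dividing by $K$ gives the claimed amortized bound per update step.

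Finally I would simplify the expression asymptotically. Since $\beta = 1+\epsilon$ with $0<\epsilon<0.4$ we have $\beta = \Theta(1)$, hence $\frac{\beta^2}{\epsilon}\cdot\frac{2\beta^3}{\epsilon} = \Theta(\epsilon^{-2})$; moreover $\log_{\beta}n = \frac{\ln n}{\ln(1+\epsilon)} = \Theta(\epsilon^{-1}\ln n)$ in this range of $\epsilon$. Multiplying, the leading term is $\Theta(\epsilon^{-3}\ln n)$, which dominates the additive $\frac{2\beta^3}{\epsilon}+1 = O(\epsilon^{-1})$, so the amortized number of element level changes per update step is $O(\epsilon^{-3}\ln n) = O_{\epsilon}(\ln n)$.

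This corollary is essentially a bookkeeping consequence of Lemma \ref{lem3}, so there is no real obstacle; the only point requiring care is verifying that the three sources of \emph{ChangeLevel} calls (direct handling of an update, local rises, partial resets) are exhaustive and disjoint — in particular, that Claim \ref{clm0} rules out any extra level change coming from a local rise or reset inducing a new Invariant \ref{inv3} violation, and that the ``phantom'' level changes charged to the elements of $\tilde{\mathcal{U}}$ are exactly the partial-reset term already subsumed in Lemma \ref{lem3}, so nothing is double-counted.
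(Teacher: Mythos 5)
Your proof is correct and follows the same route as the paper: decompose element level changes into the three exhaustive sources (direct insertion/deletion, local rises, partial resets), invoke Lemma \ref{lem3} for the latter two, add the $K$ direct calls, and divide by $K$. The extra care you take in invoking Claim \ref{clm0} to rule out hidden level changes, and in spelling out the $\log_{\beta}n = \Theta(\epsilon^{-1}\ln n)$ simplification, is a slightly more explicit version of what the paper asserts tersely, but the argument is the same.
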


\begin{proof}
Denote by $K$ the total number of insertions and deletions in the sequence. There are three ways in which an element can change its level. The first is local rises, the second is partial resets, and the third is when the element is inserted/deleted. The bound provided by Lemma \ref{lem3} covers element level changes due to local rises and partial resets. Clearly, only one element level change can occur upon insertion/deletion - the updated element. It follows that the total number of dominated level changes in the sequence is no greater than $\leq [\frac{\beta^2}{\epsilon}(1+\frac{2\beta^3}{\epsilon})\cdot O(\log_{\beta}n) + \frac{2\beta^3}{\epsilon} + 1] \cdot K$, which completes the proof. 
\qed
\end{proof}

\noindent \textbf{Remark.} In DS, upon deletion of edge $(u,v)$ only one vertex can change its dominated level - if $u$ dominated $v$ and now $v$ becomes dominated by its next highest-level dominating neighbor. Upon insertion of $(u,v)$, again only one vertex can change its dominated level, if $v$ joins $Dom(u)$. 

\bigskip

\noindent Now that we have bounded the number of element level changes, we are ready to bound the number of set level changes, via the following simple observation.
\begin{obs} \label{clm3}

The number of set level changes throughout the whole sequence is at most twice the number of element level changes.

\end{obs}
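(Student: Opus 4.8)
The plan is to account for every set level change by associating it with one or more element level changes. Recall that there are exactly three operations in the algorithm that change the level of a covering set: (i) a set \emph{joins} the cover (either upon an element insertion, at level $\lfloor \log_{\beta}(1/c(S))\rfloor$, or during a local rise, or during a partial reset); (ii) a set \emph{leaves} the cover, which by the algorithm description happens only during a partial reset; and (iii) there is no ``in-place'' level change — a covering set never changes its level without first leaving and then (possibly) rejoining. So the total number of set level changes equals (number of set joins) $+$ (number of set leaves).

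First I would bound the number of set joins by the number of element level changes. Every time a set $S$ becomes covering, a fresh $Cov(S)$ is created, and $Cov(S)$ is nonempty: upon an insertion it contains the newly inserted element $e$ (which just underwent an element level change via \emph{ChangeLevel}); upon a local rise it contains $N_{j+1}(S)\supseteq N_j(S)$, which is nonempty since $S$ was $j$-PD (so $|N_j(S)|\ge c(S)\beta^{j+1}\ge 1$), and every element of this new $Cov(S)$ has \emph{ChangeLevel} invoked on it (see Algorithm \ref{alg3}); upon a partial reset, $S$ becomes covering only if $S\cap\tilde{\mathcal U}\neq\emptyset$, and we count every element of $\tilde{\mathcal U}$ as undergoing a level change (see the remark following Observation \ref{obs2}). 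Hence each set join can be charged to (at least) one distinct element level change occurring at that same time step — distinct because the element level changes of a given join are of elements now in $Cov(S)$, and at a single time step an element is placed into at most one new $Cov$ set.

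Second I would bound the number of set leaves. A set leaves the cover only in a partial reset, and when it does, all the elements of its (nonempty) $Cov(S)$ are removed and placed into $\tilde{\mathcal U}$, each of which is counted as an element level change by the partial reset. So each set leave can also be charged to a distinct element level change. Combining the two charges: each set level change (a join or a leave) is charged to a distinct element level change, but since a join and a subsequent leave of the \emph{same} $Cov$ set might both want to charge the same element, I allow each element level change to absorb up to two units of charge. This yields that the number of set level changes is at most twice the number of element level changes, as claimed.

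The main obstacle I anticipate is making the ``distinctness'' of the charging airtight, especially within a single partial reset, where many sets leave and many sets join simultaneously and the same element may be removed from one $Cov$ set and inserted into another. The clean way to handle this is to charge a set \emph{leave} to the level change of an element \emph{leaving} that set's old $Cov$ set, and a set \emph{join} to the level change of an element \emph{entering} that set's new $Cov$ set; since in a partial reset the elements of $\tilde{\mathcal U}$ are each counted as changing level exactly once, and each such element belongs to exactly one old $Cov$ set and exactly one new $Cov$ set, the ``leave'' charges and the ``join'' charges each form a function from sets to elements that is at most $2$-to-$1$ overall (at most one leave-charge and at most one join-charge per element), giving the factor of two.
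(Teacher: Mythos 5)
Your plan follows the same charging strategy as the paper's proof (assign each set level change to an element level change, argue each element level change absorbs at most two charges), but as written it contains a genuine gap around \emph{idle sets}.

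In your step bounding set leaves you assert that when $S$ leaves the cover in a partial reset, ``all the elements of its \emph{(nonempty)} $Cov(S)$ are removed and placed into $\tilde{\mathcal U}$.'' The parenthetical is false. A covering set may have $Cov(S)=\emptyset$ --- an \emph{idle set}, explicitly discussed in Section~\ref{updatetime} --- because elements can trickle out of $Cov(S)$ one at a time via local rises or deletions without $S$ ever changing level. When a partial reset removes such an idle $S$ (and $S$ is not re-assigned any elements), $S$ undergoes a level change ($l_{cov}(S)\to -1$) with \emph{no} concurrent element level change to absorb the charge: your scheme, which pairs each set level change with an element level change ``occurring at that same time step,'' has nothing to pay with. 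The paper's proof explicitly addresses this case by reaching back in time: ``if its new level after the reset is $-1$\ldots at least one element left $Cov(S)$ from the time it was previously created at a level $>-1$ to this point,'' and the set leave is assigned to that earlier departing element's level change. Your proof needs this retroactive assignment; once it is added, the ``at most one leave-charge and at most one join-charge per element level change'' accounting still goes through, because each element level change corresponds to departure from at most one old $Cov$ set and arrival into at most one new $Cov$ set.

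A smaller imprecision: your item (iii), ``a covering set never changes its level without first leaving and then (possibly) rejoining,'' is literally false --- a local rise changes $l_{cov}(S)$ of an already-covering set in place. Since your item (i) lists local rises among the ``joins,'' you presumably mean ``join'' as ``a new $Cov(S)$ is created at a level $\ge 0$,'' which is a valid decomposition; but the phrase ``joins the cover'' and (iii) as stated contradict this. Re-stating the decomposition as ``every set level change is either a $Cov(S)$ creation or a permanent emptying of $Cov(S)$'' makes the bookkeeping unambiguous and, together with the retroactive leave-charge above, brings your argument into line with the paper's.
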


\begin{proof}
Consider a set $S$. We first argue that any element that joins or leaves $Cov(S)$, changes its level. Indeed, an element $e$ can join/leave $Cov(S)$ in one of three ways: a local rise, a partial reset, and an insertion/deletion. A local rise necessarily changes the level of all the elements that join $Cov(S)$. Insertion/deletion of $e$ also changes its level. Following a partial reset where $e$ joins $Cov(S)$, $e$ might remain in the same level it was before the reset. Recall though (see the remark following Observation \ref{obs2}) that we view any element that enters a partial reset in $\tilde{\mathcal{U}}$ as changing its level. 
Importantly, viewing such elements as changing their levels, even when they do not, does not change the validity of the proof of Lemma \ref{lem3} and Corollary \ref{cor1}--- we can actually pay for every such element to "artificially" change its level. Thus, we conclude that any element that joins or leaves $Cov(S)$, changes its level. Next, we claim that between any two consecutive level changes of the set $S$ (including the latter one), at least one element leaves/joins $Cov(S)$. Indeed, if $S$ changes its level due to a local rise, at least one element joins the new $Cov(S)$. If $S$ changes its level due to a partial reset, then if its new level is $>-1$, at least one element joins $Cov(S)$. If its new level after the reset is $-1$, meaning $Cov(S) = \emptyset$, then at least one element left $Cov(S)$ from the time it was previously created at a level $>-1$ to this point. We conclude that we can ``assign" to each level change of $S$ a specific element level change that occurred during or before this set level change. Since each element level change can be assigned to up to two set level changes (the set which it leaves its Cov set and the set which it joins its Cov set), the observation follows.
\qed
\end{proof}

\noindent We conclude that the amortized number of level changes, both for elements and for sets, is $O(\epsilon^{-3}\cdot \ln n) = O_{\epsilon}(\ln n)$.

\subsection{$\Omega_{\epsilon}(\ln n)$ Amortized Element Level Changes} \label{omega}

\noindent In this section we show that the $O_{\epsilon}(\ln n)$ upper bound on the amortized number of element level changes is tight, meaning that exists such an input sequence of insertions and deletions of elements such that the amortized number of element level changes is $\Omega_{\epsilon}(\ln n)$. For the sake of simplicity and intuition we will consider $\beta = \sqrt{2}$ for this example, and we note that one can easily generalize for any $1 < \beta < 2$. 

The example is incremental, and assume that $n$ is a power of $2$, namely $n = 2^q$ for some natural $q \geq 2$. In the example all sets have a unit cost. We insert all $n$ elements in \emph{batches} of four, where the first four elements are denoted by the batch $B_1$ (a collection of elements which contains four elements), the second four by $B_2$, and so on until $B_{\frac{n}{4}}$. There are $(\frac{n}{2} - 1)$ sets, and each element can be covered by exactly $(\log_2(n) - 1)$ sets. All four elements within a certain batch of inserted elements can be covered by the same sets. The batch $B_i$ can be covered by all sets $S_j$ such that $j = i, \frac{n}{4} + \lfloor \frac{i+1}{2} \rfloor, \frac{3n}{8} + \lfloor \frac{i+3}{4} \rfloor, \ldots, \frac{n}{2} - 1$. An example for $n=32$ can be seen in Figure \ref{fig1}.

We begin by inserting the four elements of $B_1$. If the first inserted element joins $Cov(S_1)$ (arbitrary choice), then after the four insertions $S_1$ will cover all four elements at level $4$ (local rise after the second and fourth insertions to levels $2$ and $4$ respectively). Next, we insert the four elements of $B_2$. Again, if the first one joins $Cov(S_2)$ (arbitrary choice), then after the third insertion $S_2$ covers the three elements at level $2$. The fourth inserted element joins $Cov(S_2)$ as well, but then immediately a local rise occurs, since $|N_5(S_{\frac{n}{4} + 1})| \geq \beta^{6}$ (in the example given in Figure \ref{fig1}, $|N_5(S_9)| = 8 = \beta^{6}$). So following these eight insertions, $S_{\frac{n}{4} + 1}$ covers all eight elements at level $6$. Notice that the elements in $B_2$ were not covered at level $4$, and all four jumped from level $2$ to level $6$. We continue with $B_3$ and $B_4$ in the same manner. This time, after the insertion of the fourth element in $B_4$, we have $|N_5(S_{\frac{3n}{8} + 1})| \geq \beta^{8}$ (in the example given in Figure \ref{fig1}, $|N_7(S_{13})| = 16 = \beta^{8}$). So following these sixteen insertions, $S_{\frac{3n}{8} + 1}$ covers all sixteen elements at level $8$. Notice that the elements in $B_3$ were not covered at level $6$, and all four jumped from level $4$ to level $8$. In addition, the elements in $B_4$ were not covered at level $4$ or $6$, and all four jumped from level $2$ to level $8$. This process continues in the same manner until all elements are inserted.

Consider the bit representation of $i$ for each batch $B_i$. All batches $B_i$ in which the LSB of $i$ is $1$ (half of the batches), the four elements in $B_i$ are covered at level $4$ at some point. All batches $B_i$ in which the two LSB's of $i$ represent a decimal number between $1$ and $2$ (half of the batches), the four elements in $B_i$ are covered at level $6$ at some point. All batches $B_i$ in which the three LSB's of $i$ represent a decimal number between $1$ and $4$ (half of the batches), the four elements in $B_i$ are covered at level $8$ at some point. All batches $B_i$ in which the four LSB's of $i$ represent a decimal number between $1$ and $8$ (half of the batches), the four elements in $B_i$ are covered at level $10$ at some point, and so on. Overall there are $\frac{n}{4}$ batches. The highest level is $(\log_{\sqrt{2}}n) = (2 \cdot \log_2 n)$. The relevant levels are $4$ to $(2 \cdot \log_2 n)$, in increments of $2$. Meaning, there are $\frac{2 \cdot \log_2 n - 4}{2} + 1 = (\log_2 n - 1)$ levels, where for each level, all of the elements in half of the batches ($\frac{n}{2}$ elements) at some point were covered at that level. Therefore, the total number of element level changes is at least $\frac{n}{2} \cdot (\log_2 n - 1)$. Since the total number of update steps $k$ is clearly equal to $n$, we get that there are at least $\frac{k}{2} \cdot (\log_2 n - 1) = \Omega(k \cdot \ln n)$ element level changes, meaning $\Omega(\ln n)$ amortized element level changes.

\begin{figure}
  \center{\includegraphics[scale=0.35]{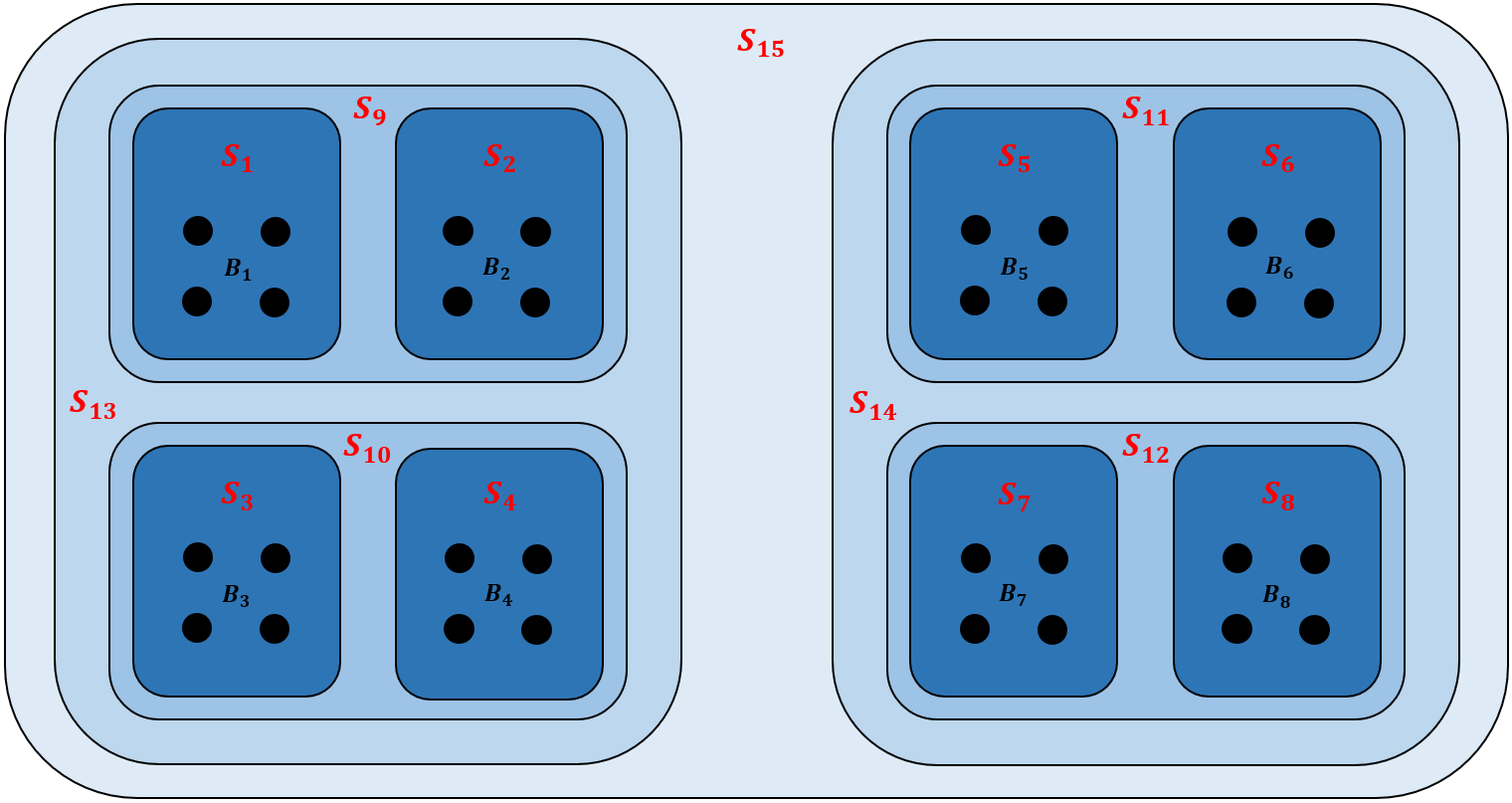}}
  \caption{\label{fig1} Element assignments to sets for $n=32$.}
\end{figure}

For the DS problem, the example is somewhat similar but decremental, and can be seen in Figure \ref{fig2}. We consider a bipartite graph, where on one side we have $2^q$ \emph{batches} of vertices, denoted by $B_1, B_2, \ldots, B_{2^q}$ and on the other side $2^{q+1}-1$ vertices denoted by $v_1,v_2,\ldots,v_{2^{q+1}-1}$. Just as in SC, the vertices within a certain batch will have the same set of neighbors. The batch $B_i$ is connected to all vertices $v_j$ such that $j = i, 2^q + \lfloor \frac{i+1}{2} \rfloor, \frac{3}{2}\cdot 2^q + \lfloor \frac{i+3}{4} \rfloor, \ldots, 2^{q+1} - 1$. In DS, since a vertex can dominate itself, the sizes of the batches are different. Consider batch $B_i$ and the binary representation of $i$. Denote by $x_i$ the least significant bit in $i$ that is $1$, where the LSB is $x_i=0$. Then the size of $B_i$ is $x_i+3$. For example, for odd $i$ the size of $B_i$ is three. For $i=2,6,10,\ldots$ the size of $B_i$ is four. For $i=4,12,20,\ldots$ the size of $B_i$ is five, and so on. We get that there are $2^{q+2}-1$ vertices in the batches. Thus, $n=3 \cdot 2^{q+1} - 2$, and there are $m = (2^{q+2}-1) \cdot (q+1)$ edges, meaning $m = \Theta(n\log n)$. Again for the sake of simplicity and intuition we will consider $\beta = \sqrt{2}$ for this example, and we note that one can easily generalize for any $1 < \beta < 2$. 

Initially, $v_{2^{q+1}-1}$ (the ``root" which is connected to all batches) dominates all vertices in the batches and itself. In addition, all vertices $v_j$ dominate only themselves. We begin by deleting all edges connecting $v_{2^{q+1}-1}$ to the vertices in $B_1$. After each deletion, the undominated vertex in $B_1$ chooses to be dominated by $v_1$ (arbitrary choice). Once all three edges are deleted, $v_1$ dominates itself and the three vertices in $B_1$ and performs a local rise to level $4$. We continue by deleting all edges connecting $v_{2^{q+1}-1}$ to vertices in $B_2, B_3, \ldots$ in order. Once three of the four edges to $B_2$ are deleted, $v_2$ performs a local rise to level $4$ and dominates itself and the three vertices in $B_2$. Once the fourth edge is deleted, the fourth vertex changes to level $4$ as well (dominated by $v_2$), but now $v_{2^q+1}$ has eight ($\beta^6$) neighbors at level less than $5$, thus all eight vertices ($v_{2^q+1}$, the three in $B_1$ and the four in $B_2$) join the local rise of $v_{2^q+1}$ to level $6$. Similarly, after the deletion of the edge connecting $v_{2^{q+1}-1}$ to the last vertex in $B_4$, $v_{\frac{3}{2} \cdot 2^q+1}$ dominates all vertices in $B_i$ ($i=[1,4]$) at level $8$, and so on. Eventually, after deleting all edges adjacent to $v_{2^{q+1}-1}$, we get that $v_{2^{q+1}-3}$ and $v_{2^{q+1}-2}$ will each dominate $2^{q+1}$ and $2^{q+1}+1$ vertices respectively (including themselves), at level $2q+2$. We get that there are $\Theta(2^q)$ deletions, and $\Theta(q \cdot 2^q)$ dominated level changes (similarly to the SC analysis), meaning $\log k$ amortized level changes, after the $k= \Theta(2^q)$ deletions. 

Now, $v_{2^{q+1}-3}$ is the new ``root" of batches $B_1$ through $B_{2^{q-1}}$ and $v_{2^{q+1}-2}$ is the new ``root" of batches $B_{2^{q-1}+1}$ through $B_{2^q}$. Thus, we can continue the same process as before, meaning deleting all edges connecting $v_{2^{q+1}-3}$ to batches $B_i$ (for $1 \leq i \leq 2^{q-1}$). The vertices in these batches will slowly climb up the levels, until about half are dominated by $v_{2^{q+1}-7}$ and the other half are dominated by $v_{2^{q+1}-6}$, and this process continues. Similarly, we delete all edges connecting $v_{2^{q+1}-2}$ to batches $B_i$ (for $2^{q-1}+1 \leq i \leq 2^q$). The vertices in these batches will slowly climb up the levels, until about half are dominated by $v_{2^{q+1}-5}$ and the other half are dominated by $v_{2^{q+1}-4}$, and this process continues. In the first iteration (where we only had one ``root", $v_{2^{q+1}-1}$) all vertices in the batches climbed up the levels two by two until they reached $2q+2$. In the second iteration (two ``roots") all vertices in the batches climbed up the levels two by two until they reached $2q$. In the third iteration all vertices in the batches climb up to $2q-2$, and so on. Following the last iteration, the vertices in the batches dominate themselves at level $0$. Therefore, at each iteration each vertex in the batches climbs on average $\Omega(q)$ levels. Since there are $\Theta(q)$ iterations, each vertex in the batches has $\Omega(q^2)$ dominated level changes. Thus, there is a total of $\Omega(q^2 \cdot 2^q)$ dominated level changes. The total number of deletions is the number of initial edges, $(2^{q+2}-1) \cdot (q+1) = O(q \cdot 2^q)$. Therefore, the amortized number of dominated level changes is $\Omega(q)$, which is $\Omega(\ln n)$.

\begin{figure}
  \center{\includegraphics[scale=0.42]{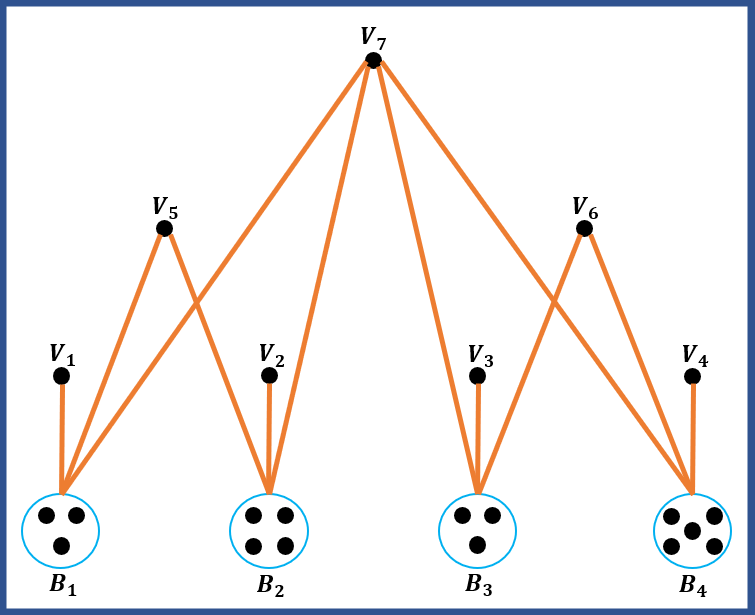}}
  \caption{\label{fig2} Graph for $q=2$.}
\end{figure}

\subsection{From Set and Element Level Changes to Update Time} \label{updatetime}
In this section we demonstrate that the amortized update time of the algorithm exceeds the amortized number of level changes by a factor of $O_\eps(f)$, yielding the required update time. For each set $S$ (with cost $c(S)$), our algorithm will maintain:

\begin{itemize}
\item The elements in $S$ maintained as a doubly-linked list. 
\item The level $l_{cov}(S)$ of $S$, where we remind that $l_{cov}(S) = -1$ if $Cov(S) = \emptyset$.
\item The Cov set $Cov(S)$ of $S$ (which is $\emptyset$ if $S$ isn't covering and nonempty otherwise), maintained as a doubly-linked list.
\item The counters $N_j(S)$ and auxiliary counters $N^j(S)$ and $C_S$ as described below. 
\end{itemize}
\noindent For each element $e$, our algorithm will maintain:
\begin{itemize}
\item The up to $f$ sets that can cover $e$, $\mathcal{F}_e$, maintained as a doubly-linked list.
\item The level $l(e)$ of $e$.
\end{itemize}

\noindent In addition, for each level $j$, our algorithm will maintain doubly-linked lists of $\mathcal{L}_j$, $\mathcal{D}_j$, $\mathcal{C}_j$, $\mathcal{E}_j$ and $\mathcal{S}_j$. We also maintain $\mathcal{D}$, $\mathcal{C}$ and $\mathcal{S}$ as doubly-linked lists. The maintenance of all these lists can be efficiently carried out in the obvious way, and is required for an efficient implementation of the partial reset procedure, given at the end of this section.

\bigskip

\noindent {\bf Remark.} All the aforementioned doubly-linked lists should also hold appropriate pointers to elements, to allow for insertions and deletions of elements in constant time.

\bigskip

\noindent We have bounded the amortized number of level changes, both for elements and for sets, by $O(\epsilon^{-3}\cdot \ln n) = O_{\epsilon}(\ln n)$. The more important of the two is the number of element level changes. Every time there is an element level change by the algorithm, say of element $e$, we will scan all (at most $f$) sets in $\mathcal{F}_e$ to update the required data structures, as described below; that would require amortized update time of $O_{\epsilon}(f \cdot \ln n)$. We do so since it is crucial that every set $S$ will have up-to-date information concerning the level of the elements that it could cover.
On the other hand, following a set level change, we will not scan {\em all} elements that the set can cover, since there is no need to keep up-to-date information concerning this level change. More importantly, we cannot afford to scan all elements in a set, since this could be $\Omega(n)$.
Instead, we will scan only the elements in the old and new Cov sets of $S$ (for updating the respective Cov sets) which requires time linear in the sizes of the Cov sets - recalling that we view all these elements as changing their level, scanning them and updating the respective Cov sets requires time linear in the number of element level changes.

\begin{obs} \label{upper}
The total runtime due to set level changes is upper bounded by that due to element level changes.
\end{obs}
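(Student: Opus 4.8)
The plan is to bound the total time spent on set level changes by $O(1)$ times the number of element level changes, and then to observe that each element level change by itself already costs $\Omega(1)$ time, so the latter subsumes the former.

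First I would fix the cost of handling one set level change. Suppose a set $S$ passes from covering at level $l_{\mathrm{old}}$ with Cov set $Cov_{\mathrm{old}}(S)$ to covering at level $l_{\mathrm{new}}$ with Cov set $Cov_{\mathrm{new}}(S)$, where either of these may be the non-covering state (empty Cov set, level $-1$). Using the data structures described above, the work consists of updating $l_{cov}(S)$, updating the (constantly many) affected global counters among $\mathcal{D}_j,\mathcal{S}_j,\mathcal{C}_j$ and performing the check for whether a partial reset is now due, and rebuilding the doubly-linked list $Cov(S)$ --- which, as stressed in the paragraph above, takes time $O(|Cov_{\mathrm{old}}(S)|+|Cov_{\mathrm{new}}(S)|)$, since we scan only the old and new Cov sets and never all of $S$. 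The $\mathcal{F}_e$-scans performed for the elements entering the new Cov set are the per-element-level-change cost and are booked separately.

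Next I would sum over all set level changes in the update sequence, obtaining a bound of $O\!\left(\sum |Cov_{\mathrm{old}}(S)| + \sum |Cov_{\mathrm{new}}(S)|\right)$, where the sums range over all set level changes; the constant additive term per set level change is absorbed, because --- as shown in the proof of Observation~\ref{clm3} --- every set level change has at least one element joining or leaving $Cov(S)$, so $|Cov_{\mathrm{old}}(S)|+|Cov_{\mathrm{new}}(S)|\ge 1$. Each unit counted by $\sum |Cov_{\mathrm{old}}(S)|$ records one element leaving some Cov set; by the proof of Observation~\ref{clm3}, together with the convention following Observation~\ref{obs2} that an element entering $\tilde{\mathcal U}$ in a partial reset is deemed to change its level, every such leave-event is an element level change, and since an element lies in at most one Cov set at any time, each element level change is charged to at most one leave-event. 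Hence $\sum |Cov_{\mathrm{old}}(S)|$ is at most the total number of element level changes; the same argument applied to joins bounds $\sum |Cov_{\mathrm{new}}(S)|$. Therefore the total runtime due to set level changes is $O(\text{total number of element level changes})$.

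Finally, each element level change costs at least $\Omega(1)$ time --- at a minimum the algorithm scans the nonempty list $\mathcal{F}_e$ and updates the corresponding structures --- so the total runtime due to element level changes is $\Omega(\text{total number of element level changes})$, which dominates the bound just obtained. I do not expect a genuine obstacle here: the only delicate point, already emphasized above, is that handling a set level change must be carried out in time linear in the two Cov-set sizes (and not, say, linear in $|S|$ or in the number of levels), and everything else is either $O(1)$ global bookkeeping or is re-attributed to the element level changes it produces.
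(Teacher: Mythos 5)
Your proof is correct and follows essentially the same reasoning as the paper's justification, which is given informally in the paragraph immediately preceding Observation~\ref{upper}: the per-set-level-change work is linear in the sizes of the old and new Cov sets, and every element therein is (by the convention after Observation~\ref{obs2}) counted as undergoing an element level change. One small inaccuracy: the proof of Observation~\ref{clm3} shows that at least one element joins or leaves $Cov(S)$ \emph{between} consecutive level changes of $S$, not necessarily at the instant of the level change itself (e.g., a set can become idle via a deletion and later be dropped from $\mathcal{S}$ with both old and new Cov sets empty), so to absorb the $O(1)$ bookkeeping per set level change you should instead invoke the \emph{conclusion} of Observation~\ref{clm3}, namely that the number of set level changes is at most twice the number of element level changes.
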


\noindent Given Observation \ref{upper}, we will mostly focus on the runtime incurred due to element level changes, including the runtime due to local rises and partial resets. 

\subsubsection{Maintaining the $N_{j}(S)$ Counters}
Our goal now is to prove that the \emph{ChangeLevel} procedure (see Algorithm \ref{algcl}) can be implemented in $O(\epsilon^{-2} \cdot f)$ amortized time. 
For every set $S$ we maintain a counter $C_S$, which increments every time an element in $S$ changes its level. 
(We shall reset this counter every $\Theta(n^2)$ steps, so that it does not occupy more than $O(\ln n)$ bits, regardless of the length of the update sequence). We also maintain $\log_{\beta}(n)+3$ (see Section \ref{rel}) additional counters for each set $S$, denoted $N^j(S)$ (for each level $j$ that is a relevant level of $S$), which represents the number of elements in $S$ that are at level {\em exactly} $j$. Upon a level change of an element $e$ from $j'$ to $j$, we will update three counters for each $S \in \mathcal{F}_e$: $C_S$ (increment by one), $N^{j'}(S)$ (decrement by one if $j'$ is a relevant level of $S$) and $N^{j}(S)$ (increment by one if $j$ is a relevant level of $S$). Since $|\mathcal{F}_e| \leq f$, each element level change takes $O(f)$ time. Since the amortized number of element level changes is $O_{\epsilon}(\ln n)$ (see Corollary \ref{cor1}), the amortized update time will be $O_{\epsilon}(f \cdot \ln n)$.

However, this by itself does not suffice, since we need to maintain the $N_j(S)$ counters. Keeping the $N_j(S)$ counters 
(for all relevant levels) up-to-date following every update step would naively give rise to an amortized update time of  $O_{\epsilon}(f \cdot \ln^2 n)$, which would be too costly. Instead, we will maintain {\em approximate} estimations of these counters, allowing an additive error of $\epsilon \cdot \beta^j$ for $|N_j(S)|$. Such an additive error is negligible, and has no significant effect on the analysis detailed in the previous sections. (In more detail, one has to use a buffer of size $\beta^2$ instead of $\beta$ in the definitions of dirty in order to absorb these additive errors, which will cause the overall approximation guarantee to increase by a multiplicative factor of $(1+O(\epsilon))$. We can bring down the approximation factor to the same value as before by a straightforward scaling argument.) 
{The approximation factor argument provided in Section \ref{approx} takes into account the additive error to the counters $|N_j(S)|$.} 
Since each increment of $C_S$ could change $|N_j(S)|$ (for any $j$) by only one, it suffices to update $N_j(S)$ after $\epsilon \cdot \beta^j$ increments to $C_S$. 
We partition all of the relevant levels into (roughly) equal sized sets, and denote the sets $Z_1,Z_2,\ldots$. For simplicity assume that the relevant levels are from $0$ to $\log_{\beta}n+2$ (where we can simply add $j_{min}^S$ to each level so that we obtain the relevant levels). In $Z_1$ we have all the levels $j$ such that $0 \leq j \leq \lfloor \log_{\beta}(\frac{2}{\epsilon}) \rfloor $. In $Z_i$ (for any $i>1$) we have all the levels $j$ such that $\lfloor \log_{\beta}(\frac{2^{i-1}}{\epsilon}) \rfloor + 1 \leq j \leq \lfloor \log_{\beta}(\frac{2^i}{\epsilon}) \rfloor $. The size of the first set is about (give or take one due to the floor value) $\log_{\beta}(\frac{2}{\epsilon})$. The sizes of all other sets are about ($\pm 1$) $\log_{\beta}(\frac{2^{i}}{\epsilon}) - \log_{\beta}(\frac{2^{i-1}}{\epsilon}) = \log_{\beta}(2)$. 
We consider the binary representation of $C_S$, and associate the LSB to the set $Z_1$, the second LSB is associated to $Z_2$, etc.
We analyze the incrementation of $C_S$ via the standard amortized analysis of binary counter incrementation,
which implies not only a constant amortized number of bit flips per increment step, but also the two properties ({\bf P1} and {\bf P2}) provided next. 
In particular, for each bit $i$ that flips, we update all the $N_j(S)$ counters such that $j \in Z_i$; we describe below how to efficiently carry out this update of the $N_j(S)$ counters.
\begin{enumerate}
	\item \textbf{P1:} Once we update $N_j(S)$ for some $j$, we also update $N_{j'}(S)$ for any level $j'<j$.
	\item \textbf{P2:} For any $Z_i$, we update $N_j(S)$ for any $j \in Z_i$ every $2^{i-1}$ increments of $C_S$.
\end{enumerate}

\noindent Since $j > \log_{\beta}(\frac{2^{i-1}}{\epsilon})$ for any $i > 1$ and $j \in Z_i$, we get that $\epsilon \cdot \beta^j > 2^{i-1}$. From {\bf P2}, since we update this counter $N_j(S)$ every $2^{i-1}$ increments of $C_S$, we update it at least once every $\epsilon \cdot \beta^j$ increments of $C_S$, implying that the additive error is at most $\epsilon \cdot \beta^j$, as required.
(For $i=1$, we update $N_j(S)$ for $j \in Z_1$ every increment of $C_S$, so there is no additive error whatsoever in this case.) 
Since the amortized number of bit flips is constant, and each $Z_i$ is of size $\leq \log_{\beta}(\frac{2}{\epsilon}) < \frac{4}{\epsilon^2}$, the amortized number of updates to all the $N_j(S)$ counters per increment of $C_S$ is $O(\epsilon^{-2})$.
To conclude, upon each element level change, we increment/decrement at most $3f$ counters, $f$ of which are the $C_S$ counters, and the amortized number of updates to $N_j(S)$ that they trigger is $O(\epsilon^{-2})$. Thus, the amortized cost required by each element level change is $O(\epsilon^{-2} \cdot f)$, disregarding the time needed for computing the $N_j(S)$ counters.

Finally, we describe how to update the $N_j(S)$ counters. Recall that we always have the exact up-to-date values of all the $N^j(S)$ counters. Notice that $N_j(S) = N_{j-1}(S) + N^{j-1}(S)$, for any $j \ge 1$. From \textbf{P1}, we know that upon each update of $N_j(S)$, we update all previous ones $N_{j-1}(S),\ldots, N_1(S)$ as well. Thus, we can first update $N_1(S) = N^0(S)$ (or $N_{j_{min}^S+1}(S) = N^{j_{min}^S}(S)$). Then, iteratively compute $N_2(S) = N_{1}(S) + N^{1}(S), N_3(S) = N_{2}(S) + N^{2}(S), \ldots, N_k(S) = N_{k-1}(S) + N^{k-1}(S)$, where $k$ is the highest level in $Z_q$ and $q$ is the MSB of $C_S$ that has flipped. By {\bf P1}, by updating the $N_j(S)$ values in this way consecutively from index $1$ through $k$, we have guaranteed that all the $N_j(S)$ values until index $k$ are up-to-date. The total running time for computing $N_j(S)$ in this way
over all $j = 1, \ldots, k$ is $O(k)$, hence we get a constant amortized cost. See Algorithm \ref{alg5} for the pseudo-code of the update procedure of the $N_j(S)$ counters.

\begin{algorithm}
\caption{UpdateLevelChange $(e, j', j)$ (following an element level change of $e$ from $j'$ to $j$)} \label{alg5}
\begin{algorithmic}
\FOR {$S \in \mathcal{F}_e$}
\STATE increment $C_S$ (and consider the corresponding binary vector)
\STATE decrement $N^{j'}(S)$ if $j'$ is a relevant level of $S$
\STATE increment $N^j(S)$ if $j$ is a relevant level of $S$
\STATE Let $q$ be most significant bit of $C_S$ that flips
\FOR {$i=1$ to $q$}
\FOR {$j \in Z_i$ (in increasing order of j)}
\STATE $N_j(S) = N_{j-1}(S)+N^{j-1}(S)$
\ENDFOR
\ENDFOR
\ENDFOR
\end{algorithmic}
\end{algorithm}

\noindent It follows that the overall \emph{amortized} cost of each element level change is $O(\epsilon^{-2} \cdot f)$.
{As mentioned, the amortized cost due to a set level change is dominated by that of an element level change.}
We have shown (Corollary \ref{cor1}) that the amortized number of all level changes per update step is $O(\epsilon^{-3} \cdot \ln n)$,
hence we conclude with the following corollary:

\begin{cor} \label{finalrt}
The amortized update time due to element level changes is $O(\eps^{-5} \cdot f \cdot \ln n) = O_\eps(f \cdot \ln  n)$.
\end{cor}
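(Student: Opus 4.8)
The plan is to combine the two bounds that the preceding subsections have already established: the amortized \emph{number} of element level changes per update step, and the amortized \emph{cost} of handling one such change. By Corollary~\ref{cor1} the former is $O(\epsilon^{-3}\ln n)$, so it suffices to argue that the latter is $O(\epsilon^{-2} f)$; multiplying, the amortized update time attributable to element level changes is $O(\epsilon^{-3}\ln n)\cdot O(\epsilon^{-2} f) = O(\epsilon^{-5} f\ln n)$, which is $O_\epsilon(f\ln n)$.

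For the per-change cost I would unwind the data structure of this subsection. When element $e$ moves from level $j'$ to level $j$, we iterate over the at most $f$ sets $S\in\mathcal{F}_e$; for each we increment $C_S$ and, when the levels are relevant to $S$, decrement the exact counter $N^{j'}(S)$ and increment $N^{j}(S)$ — constant work. The only nontrivial contribution is refreshing the approximate counters $N_j(S)$ that this increment of $C_S$ forces. Here I would invoke the standard amortized analysis of binary-counter incrementation: the amortized number of bit flips per increment is $O(1)$, each flipped bit $i$ triggers a refresh of the block $Z_i$ of relevant levels, whose size is $O(\epsilon^{-2})$ (crudely, $\log_{\beta}(2/\epsilon) < 4/\epsilon^2$), and by properties \textbf{P1} and \textbf{P2} the refresh of all affected blocks is carried out by a single prefix-sum sweep $N_j(S) = N_{j-1}(S)+N^{j-1}(S)$ up to the highest level of the top flipped block, of total amortized length $O(\epsilon^{-2})$ per increment of $C_S$. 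Summing over the up-to-$f$ sets, one element level change costs $O(\epsilon^{-2} f)$ amortized, exactly as needed.

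Two accounting points then close the argument. First, by Observation~\ref{upper} (which rests on Observation~\ref{clm3}), the work charged to \emph{set} level changes — rescanning only the old and new $Cov$ sets of the set whose level changed — is dominated termwise by the work charged to element level changes, so it adds nothing asymptotically. Second, the work performed inside a partial reset is already paid for: every element entering $\tilde{\mathcal{U}}$ is counted as an element level change (see the remark after Observation~\ref{obs2}), so scanning those elements, updating their $\mathcal{F}_e$ lists, and re-forming the new $Cov$ sets costs $O(f)$ per counted level change, while the selection of minimum-ratio sets and the computation of a half-critical level are handled by the efficient reset implementation described in this section and fit in the same budget.

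I expect the only genuinely delicate ingredient to be the one already dispatched earlier: that maintaining the $N_j(S)$ counters lazily, with additive error up to $\epsilon\beta^j$, does not corrupt the analyses of Sections~\ref{amorsubsub} and~\ref{approx}. This is exactly why the blocks are cut at the thresholds $\lfloor\log_{\beta}(2^{i-1}/\epsilon)\rfloor$: for $j\in Z_i$ one has $\epsilon\beta^j > 2^{i-1}$, so refreshing $N_j(S)$ once every $2^{i-1}$ increments of $C_S$ keeps the error below $\epsilon\beta^j$, which is absorbed by widening the ``dirty'' buffer from $\beta$ to $\beta^2$ and rescaling. Beyond invoking Corollary~\ref{cor1} together with the $O(\epsilon^{-2} f)$ per-change bound, the corollary itself requires no new idea — it is a bookkeeping conclusion.
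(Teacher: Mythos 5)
Your proposal is correct and follows essentially the same approach as the paper: multiply the $O(\epsilon^{-3}\ln n)$ amortized number of element level changes from Corollary~\ref{cor1} by the $O(\epsilon^{-2} f)$ amortized cost of handling each one (established via the lazy $N_j(S)$ counter maintenance and the binary-counter amortization), while observing that set-level-change work and partial-reset work are dominated by element-level-change work. Your added detail on the counter machinery and the approximation error is a correct restatement of the preceding subsection rather than a new ingredient.
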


\subsubsection{Local Rise}
If a set $S$ is $j$-PD, we perform a local rise of $S$ to level $j+1$. 
If there are multiple levels $j$ for which $S$ is $j$-PD, recall that the highest level should be taken. 
Determining if $S$ is $j$-PD for a certain level $j$ boils down to the maintenance of the respective $N_j(S)$ counter. 
Following any update step, for any counter $N_j(S)$ that is updated due to that step (as per Algorithm \ref{alg5}), 
we can determine if $S$ is $j$-PD in constant time; if there are multiple such levels $j$, finding the highest among them for which $S$ is $j$-PD takes time linear in the number of $N_j(S)$ counters that are updated due to that update step. 
Thus the overall time incurred by all creations of Cov sets due the local rise procedure is asymptotically the same as that of maintaining all the $N_j(S)$ counters. 
Besides the creation time of the Cov sets, for each element $e$ in the newly created set $Cov(S)$ (recall that we view all of them as changing their levels), we need to scan all sets $S' \in \mathcal{F}_e$ and update their $N_j(S')$ counters; however, we have already argued that the amortized update time due to the updates of all such counters does not exceed $O_\eps(f \cdot \ln  n)$.

\subsubsection{Partial Reset}
Our goal now is to prove that the execution of partial resets is not the bottleneck of the algorithm, and that it is dominated by the update time due to element level changes (see Corollary \ref{finalrt}). 
Recall that a partial reset amounts to running the {\em static greedy} set cover algorithm on the sets $\tilde{\mathcal{U}}$ and $\tilde{\mathcal{F}}$
as input, which are induced by the set cover $\mathcal{S}$ and a half-critical level $i_{crit}$. When performing a reset, we must first find a half-critical level once the system is dirty, then compute the sets $\tilde{\mathcal{U}}$ and $\tilde{\mathcal{F}}$, and finally apply the static greedy algorithm on these sets.

\bigskip

\noindent \textbf{Computing a half-critical level $i_{crit}$:} 
Since there are $\Theta(\log_{\beta} (n \cdot C))$ possible levels, naively finding a half-critical level could take $\Omega(\log_{\beta}^2 (n \cdot C))$ time, assuming the counters $\mathcal{C}$, $\mathcal{D}$, $\mathcal{C}^j_j$ and $\mathcal{D}^j_j$ are always up-to-date; as mentioned above, these counters can be maintained efficiently following insertions and deletions of elements by our algorithm in the obvious way. 

We will perform a \emph{global reset} every once in a while, which means running the static greedy algorithm on the entire system. Say at time step $t$ we perform a global reset. At this time, there are $n_t$ active elements. The next global reset will be done at time step $t + n_t$. Again, at this time say there are $n'_t$ elements, so the next global reset will be at time step $t + n_t + n'_t$, and so on. We define an \emph{idle set} to be a set in the SC without any elements, meaning set $S$ is an idle set if $|Cov(S)| = 0$ and $l_{cov}(S) \geq 0$. Notice that following a global reset, there are no idle sets. Moreover, the number of element level changes is an upper bound to the number of idle sets (since the last global reset), since for a set to become idle at least one element must leave its Cov set, therefore changing its level. Consider a global reset at time step $t$. By Corollary \ref{cor1}, in the $n_t$ update steps after $t$, the number of element level changes is $O(\epsilon^{-3} \cdot \ln n \cdot n_t)$. Therefore, the number of idle sets at $t+n_t$ is $O(\epsilon^{-3} \cdot \ln n \cdot n_t)$. If we consider a token scheme where each update step gives $\Theta(f)$ tokens, then at the global reset at time $t+n_t$ we have enough tokens to execute the global reset. Indeed, the running time of a global reset is $O(|\tilde{\mathcal{U}}| \cdot f + \log_{\beta}n)$ (see Corollary \ref{rstime}), where $|\tilde{\mathcal{U}}| \leq 2n_t$, since at time $t$ there were $n_t$ elements and after $n_t$ update steps there could only be $n_t$ more elements (if all updates were insertions). Thus, at time step $t+n_t$ we have $\Theta(n_t \cdot f)$ tokens, which is $\Omega (|\tilde{\mathcal{U}}| \cdot f)$. Since the amortized running time will be $O_{\epsilon}(f \cdot \ln n)$, we can afford the extra addition of $\Theta(f)$ tokens per update step, and an extra $O(\log_{\beta}n)$ even every update step, without changing the total running time. We have shown therefore that these global resets have no effect on the running time, and the number of idle sets is always bounded by $O(\epsilon^{-3} \cdot \ln n \cdot n)$. The number of sets that are not idle is bounded by $n$, since each of these sets covers at least one element, and no element is covered by more than one set. Thus, the total number of sets in the SC is $O(\epsilon^{-3} \cdot \ln n \cdot n)$. We record this observation for future use:

\begin{obs} \label{global}
The total number of sets in the SC at any given time step is $O(\epsilon^{-3} \cdot \ln n \cdot n)$.
\end{obs}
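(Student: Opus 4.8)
\textbf{Proof proposal for Observation \ref{global}.}

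The plan is to split the sets currently in the set cover $\mathcal{S}$ into \emph{idle} sets (those $S$ with $l_{cov}(S)\ge 0$ but $Cov(S)=\emptyset$) and non-idle sets, and bound each class separately. Non-idle sets are immediate: every non-idle $S\in\mathcal{S}$ covers at least one active element, and each active element lies in exactly one $Cov$ set, so at any time step there are at most $n_t\le n$ non-idle sets. Hence it suffices to show that the number of idle sets is always $O(\epsilon^{-3}\ln n\cdot n)$. To control idle sets I would interleave periodic \emph{global resets} into the update sequence on a doubling-type schedule: if a global reset is performed at time $t$ with $n_t$ active elements, the next one is scheduled for time $t+n_t$ (with the ``imaginary'' reset on the empty system serving as the first one). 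A global reset is simply a partial reset taken up to the top level, so immediately afterward the static greedy algorithm has produced only sets covering at least one element --- there are \emph{no} idle sets right after a global reset --- and moreover all the $\mathcal{D}_i,\mathcal{E}_i$ counters have just been zeroed.

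Next I would fix an arbitrary time step $t'$, let $t\le t'$ be the most recent global reset, and note that the schedule forces $t'-t<n_t\le n$. For a set $S$ to be idle at $t'$ it must have been modified in the window $(t,t']$ (right after $t$ it is either out of the cover or covering with $Cov(S)\ne\emptyset$), so there is a last moment in $(t,t']$ at which an element departs $Cov(S)$, leaving it empty. Leaving a $Cov$ set is always an element level change --- this is exactly the fact established in the proof of Observation \ref{clm3}. Mapping each set idle at $t'$ to this last departure event is injective, since distinct sets have distinct $Cov$ sets; therefore the number of idle sets at $t'$ is at most the number of element level changes occurring during $(t,t']$.

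It remains to bound the element level changes inside the window $(t,t']$, and this is the one point that needs care: Corollary \ref{cor1} is an amortized bound over the whole sequence, not over an arbitrary sub-interval. The resolution is that the token/charging argument behind Lemma \ref{lem3} and Corollary \ref{cor1} depends only on the counters $\mathcal{D}_i,\mathcal{E}_i$ and on an initial ``imaginary'' reset; since the global reset at $t$ re-initializes precisely those counters, the identical argument applies verbatim to the suffix of the update sequence starting at $t$. Hence the number of element level changes during the $t'-t<n$ update steps following $t$ is $O(\epsilon^{-3}\ln n)\cdot(t'-t)=O(\epsilon^{-3}\ln n\cdot n)$. Combining with the at most $n$ non-idle sets gives $|\mathcal{S}|=O(\epsilon^{-3}\ln n\cdot n)$ at time $t'$, which is the claim.

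Finally I would check that inserting the global resets does not affect the update time: hand out $\Theta(f)$ extra tokens per update step, so between the resets at $t$ and $t+n_t$ the algorithm banks $\Theta(n_t f)$ tokens, while the global reset at $t+n_t$ runs in $O(|\tilde{\mathcal{U}}|\cdot f+\log_{\beta} n)$ time with $|\tilde{\mathcal{U}}|\le 2n_t$ (at most $n_t$ elements were present at $t$, plus at most $n_t$ insertions since), which is paid for by the banked tokens without disturbing the $O_{\epsilon}(f\ln n)$ amortized update time. The main obstacle in the whole argument is the sub-interval issue for counting element level changes, handled by the observation that a global reset resets the very counters the amortized analysis relies on.
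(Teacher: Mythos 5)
Your proof is correct and takes essentially the same route as the paper: split the set cover into idle and non-idle sets, bound non-idle sets by $n$, schedule global resets at times $t, t+n_t, \ldots$ so that idle sets between consecutive global resets are charged to element level changes since the last global reset, and pay for the global resets with $\Theta(f)$ tokens per update step. Your proof is in fact a little more careful than the paper's in explicitly flagging that Corollary~\ref{cor1} is a whole-sequence amortized bound and arguing that the token scheme behind Lemma~\ref{lem3} restarts at each global reset (since every set participates in it and the $\mathcal{D}_i,\mathcal{E}_i$ counters are zeroed), which is exactly the fact the paper uses implicitly when it applies the corollary to the sub-interval $(t,t+n_t]$.
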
 

\noindent Once the system is dirty, we want to find a half-critical level to do a partial reset up to. Let $b$ denote the lowest level in which a set in the SC is at in a certain point in time, and let $u = b + 4 \log_{\beta}n$. We search for a half-critical level between the levels $b$ and $u$ to do a reset up to, where a few challenges/questions arise:

\begin{enumerate}
\item The naive running time of this procedure could be $\Omega(\log_{\beta}^2 n)$, which is still more than what we are aiming for. We would like to find such a level more efficiently, in $O(\log_{\beta} n)$ time, which would allow us to search for a half-critical level every update step, without changing the asymptotic update time.
\item Once we find this half-critical level and perform a reset up to it, we would like to show that there is no need to immediately search for another one, meaning the system is not dirty. This means that between any two resets we will have an update step, which can ``recharge" this procedure.
\item Once the system is dirty we know that exists a half-critical level, but we need to prove that indeed exists one up to level $u$.
\end{enumerate}

\noindent We begin with the third question, by proving a few useful claims first.

\begin{cl} \label{impcd}
For any $i \leq j$, $\mathcal{D}_i^j < \beta \cdot \mathcal{C}_i^j$.
\end{cl}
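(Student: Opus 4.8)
The plan is to bound, level by level, the number of elements that leave an initial Cov set, and then to charge those departures to the sets currently in the cover at levels $i$ through $j$. Write $\mathcal{E}_q=\mathcal{D}_q\cdot\beta^q$ for the number of elements that have left the initial Cov set of a set covering at level $q$ since the level-$q$ counters were last reset (by the imaginary reset at time $0$, or by a partial reset up to some level $\geq q$). The claim is then equivalent to $\sum_{q=i}^{j}\mathcal{E}_q\cdot\beta^{-q}<\beta\,\mathcal{C}_i^j$.

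First I would record two elementary facts. (i) A departure counted by the current value of $\mathcal{D}_q$ can only be caused by the deletion of an element covered at level $q$ or by a local rise of some set: a partial reset that moves an element out of a level-$q$ Cov set has $i_{crit}\geq q$ and hence resets $\mathcal{D}_q$, so it leaves no net contribution; in particular every counted departure sends its element out of the universe or to a strictly higher level, and it leaves the initial Cov set of a set that was covering at level $q$ at that moment. (ii) Whenever $Cov(S)$ is (re)created at level $q$ — by a local rise (Observation \ref{obclean}), by a partial reset (Observation \ref{parclean}), or because $S$ joins the cover on an insertion and is placed at level $\lfloor\log_\beta(1/c(S))\rfloor$ — we have $|Cov(S)|<c(S)\,\beta^{q+1}$. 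Hence a single incarnation of a Cov set at level $q$ can shed fewer than $c(S)\,\beta^{q+1}$ elements, so it contributes less than $c(S)\,\beta^{q+1}/\beta^{q}=\beta\,c(S)$ to $\mathcal{D}_q$.

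Now I would partition $\mathcal{D}_q$ according to the fate of the set owning the relevant incarnation. Incarnations owned by sets still covering at level $q$ contribute, by fact (ii), less than $\beta\sum_{S\in\mathcal{S}_q}c(S)=\beta\,\mathcal{C}_q$ in total, and summing over $q\in\{i,\dots,j\}$ already yields $\beta\,\mathcal{C}_i^j$. It remains to absorb the contributions of \emph{stale} incarnations — those of a set $S$ that was covering at level $q$, shed some of its initial-Cov elements, and has since moved. Since a set moves to a lower level only via a partial reset (which would have reset $\mathcal{D}_q$), such an $S$ moved up via one or more local rises, each landing it at a level at least two above the previous one and replacing $Cov(S)$ by a set of size at least $c(S)\,\beta^{(\text{new level})}$, i.e. strictly more than $\beta$ times the size it had at the previous level. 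Consequently, along the chain of levels through which $S$ passed, its per-level contributions to the $\mathcal{D}$-counters form a geometric series of ratio at most $1/\beta$, whose sum is dominated by the single term at $S$'s current level; that term can be charged against $c(S)$, which is counted in $\mathcal{C}_i^j$ as long as $S$'s current level lies in $\{i,\dots,j\}$.

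The step I expect to be the real obstacle is making this last charging airtight, and there are two delicate points. First, a set that has risen \emph{above} level $j$ still leaves a stale contribution inside $\mathcal{D}_i^j$ not matched by any term of $\mathcal{C}_i^j$; to handle it I would argue that a rise from a level $\leq j$ to a level $>j$ forces a large influx of fresh low-level elements into that set and account for the stale departures against this influx (in the spirit of the token argument of Lemma \ref{first}) rather than against $\mathcal{C}_i^j$ directly. Second, one must verify that the geometric series together with the strict bound $|Cov(S)|<c(S)\,\beta^{q+1}$ combine to give exactly the factor $\beta$ and not a larger $O_\epsilon(1)$ factor; this is where the exponent $q+1$ at creation (as opposed to the $q+2$ of Observation \ref{obs1}) is essential, and where the bookkeeping must be done most carefully. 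An alternative, possibly cleaner, route is to mimic the telescoping of Claim \ref{lem1}: assume the inequality fails on some maximal interval and derive a contradiction by summing the inequality over a partition of $\{0,\dots,j\}$ into subintervals on which it does hold.
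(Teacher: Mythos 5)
Your facts (i) and (ii), together with the first part of your charging (each set currently covering at level $q$ has shed fewer than $c(S)\cdot\beta^{q+1}$ elements from its initial Cov set, hence contributes less than $\beta\cdot c(S)$ to $\mathcal{D}_q$; summing over $S\in\mathcal{S}_q$ gives $\beta\cdot\mathcal{C}_q$, and then summing over $q\in\{i,\dots,j\}$ gives $\beta\cdot\mathcal{C}_i^j$) reproduce the paper's proof in its entirety. The paper's argument considers only sets $S$ with $l_{cov}(S)=j$ and sums their contributions; it says nothing more. In particular, the stale-incarnation concern you raise --- that a set may have contributed to $\mathcal{D}_q$ while at level $q$, and since risen by a local rise (with no level-$\geq q$ reset in between), so that its dirt remains in $\mathcal{D}_q$ while $c(S)$ is now accounted at a higher level --- is not addressed by the paper's proof at all. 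The paper's proof implicitly assumes that every unit of $\mathcal{D}_q$ is owed to a set currently in $\mathcal{S}_q$, which is exactly the assumption you flag as needing justification. You are right that a set leaves a level either by a partial reset (which zeroes $\mathcal{D}_q$ for $q\leq i_{crit}$) or by a local rise; the second case is precisely the one that is not obviously covered.

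Your attempt to close the gap via a geometric-series bound does not work as stated. When $S$ passes through levels $q_1<q_2<\cdots$, the contribution of the incarnation at level $q_\ell$ to $\mathcal{D}_{q_\ell}$ is bounded by $(\text{departures at }q_\ell)/\beta^{q_\ell}<c(S)\beta^{q_\ell+1}/\beta^{q_\ell}=\beta\,c(S)$. This bound is the \emph{same} at every level: both the upper bound on the number of departures (the initial Cov size, roughly $c(S)\beta^{q_\ell+1}$) and the normalization ($1/\beta^{q_\ell}$) scale by the same geometric factor as $q_\ell$ grows, so the per-level contributions to the dirt counters do not decay geometrically; they are each $\Theta_\epsilon(c(S))$. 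Summing over the $\Theta(\log_\beta n)$ levels a set can visit between resets yields an $O(\log_\beta n)\cdot\beta\,c(S)$ bound, not $O(\beta\,c(S))$, so the telescoping/charging you envision does not recover the $\beta$ factor. The alternative telescoping route you mention (mimicking Claim~\ref{lem1}) also does not help here: Claim~\ref{lem1} telescopes the \emph{definition} of ``critical'' over subintervals, whereas here one would still have to establish the per-level inequality $\mathcal{D}_q<\beta\,\mathcal{C}_q$ on each subinterval, which is exactly the point in question.

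In short: the paper's proof is your fact (ii) applied to $\mathcal{S}_q$ and summed, nothing more; you correctly noticed that this leaves the stale-incarnation case unargued; and your proposed geometric-series charging does not close that gap.
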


\begin{proof}
Consider a set $S$ such that $l_{cov}(S) = j$. Since the initial $|Cov(S)|$ is less than $c(S) \cdot \beta^{j+1}$, the total contribution to $\mathcal{D}_j$ of elements leaving $S$ is $<\frac{c(S) \cdot \beta^{j+1}}{\beta^j} = c(S) \cdot \beta$. Summing up on all sets at level $j$, we obtain:

$$ \mathcal{D}_j < \sum_{S \in \mathcal{S}_j} c(S) \cdot \beta = \beta \cdot \mathcal{C}_j, $$

\noindent and summing up on all levels $i$ to $j$ we get:

$$ \mathcal{D}_i^j < \beta \cdot \mathcal{C}_i^j $$

\qed
\end{proof}

\begin{cl} \label{epsfive}
$\mathcal{C}_{u+1}^{\infty} < \epsilon^5 \cdot \mathcal{C}$.
\end{cl}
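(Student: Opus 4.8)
The plan is to compare $\mathcal{C}_{u+1}^{\infty}$ against $\mathcal{C}$ by pinning both quantities to the cost of a single set sitting at the bottom level $b$. Two ingredients are needed: the ``relevant levels'' characterization of Section~\ref{rel}, which forces every set $S$ placed at level $l$ of the cover to satisfy $\beta^{-l-1}\le c(S)<n\beta^{1-l}$, and the bound $|\mathcal{S}|=O(\epsilon^{-3}n\ln n)$ on the total number of sets in the cover (Observation~\ref{global}). If the cover is currently empty both sides of the inequality are $0$, so I would assume otherwise and fix a set $S_b$ with $l_{cov}(S_b)=b$.

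First I would observe that the level of a covering set is set only at three kinds of events, each respecting $l\le j_{max}^S+1$: an insertion places a fresh covering set at $j_{min}^S+1$; a local rise of $S$ reaches level $j+1$ with $j\le j_{max}^S$ (recall $S$ cannot be $j$-PD for $j>j_{max}^S$); and a partial reset places $S$ at $\lfloor\log_\beta(|Cov(S)|/c(S))\rfloor\le j_{max}^S+1$ since $|Cov(S)|\le n$. Hence any $S\in\mathcal{S}_l$ has $l-1\le j_{max}^S<\log_\beta(n/c(S))$, i.e. $c(S)<n\beta^{1-l}$. For $l\ge u+1=b+4\log_\beta n+1$ this yields $c(S)<n\beta^{-u}=n\cdot\beta^{-b}\cdot n^{-4}=\beta^{-b}n^{-3}$, and summing over all such sets while invoking Observation~\ref{global} gives $\mathcal{C}_{u+1}^{\infty}=\sum_{l>u}\sum_{S\in\mathcal{S}_l}c(S)<|\mathcal{S}|\cdot\beta^{-b}n^{-3}=O(\epsilon^{-3}\ln n/n^{2})\cdot\beta^{-b}$.

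Next I would lower-bound $\mathcal{C}$. Since $b$ is a relevant level of $S_b$, we have $b\ge j_{min}^{S_b}+1=\lfloor\log_\beta(1/c(S_b))\rfloor\ge\log_\beta(1/c(S_b))-1$, so $c(S_b)\ge\beta^{-b-1}$ and therefore $\mathcal{C}\ge\mathcal{C}_b\ge\beta^{-b-1}$. Combining the two estimates, $\mathcal{C}_{u+1}^{\infty}<O(\epsilon^{-3}\ln n/n^{2})\cdot\beta\cdot\beta^{-b-1}\le O(\epsilon^{-3}\ln n/n^{2})\cdot\mathcal{C}$ using $\beta<2$; for $n$ above a fixed polynomial in $1/\epsilon$ the leading factor drops below $\epsilon^{5}$, which is exactly the claim. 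The complementary regime $n=O_{\epsilon}(1)$ would need a minor separate argument (or a slightly larger constant in the definition of $u$), which I would set aside.

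The step I expect to be the main obstacle is the lower bound on $\mathcal{C}$: the total cost can be tiny — exponentially small in $b$ when the chosen sets are costly per element — so a comparison against any absolute constant is doomed, and the naive estimate ``a set at level $l$ covers $\ge c(S)\beta^{l}$ elements'' is unavailable because sets need not be NC (they may even be idle) at all times. Routing everything through the relevant-levels inequality for the fixed bottom set $S_b$ is what makes the $\beta^{-b}$ factor in the upper bound cancel against the $\beta^{-b-1}$ floor on $\mathcal{C}_b$, and Observation~\ref{global} is what lets me count the (possibly idle) sets sitting above level $u$.
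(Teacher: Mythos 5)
Your proof is correct and takes essentially the same route as the paper's: upper-bound the cost of any covering set above level $u$ by $\beta^{-b}n^{-3}$, lower-bound the cost of a bottom-level set by $\beta^{-b-1}$, multiply the former by the $O(\epsilon^{-3} n \ln n)$ set-count from Observation~\ref{global}, and let $n = \Omega(\mathtt{poly}(1/\epsilon))$ close the gap. The only cosmetic difference is that you route both cost bounds through the relevant-levels inequality $j_{min}^S + 1 \le l_{cov}(S) \le j_{max}^S + 1$, whereas the paper argues directly from $1 \le |Cov(S)| \le n$ at the moment the Cov set was last created (via Observations~\ref{obclean} and~\ref{parclean}); these are two phrasings of the same fact, and the paper makes the same large-$n$ assumption that you flag as a caveat.
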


\begin{proof}
Consider a set $S'$ at level $>u$. The cost of $S'$ must satisfy $c(S') < \frac{n}{\beta^u}$ (no more than $n$ elements). Plugging in $u = b + 4 \log_{\beta}n$, we get $c(S') < \frac{1}{\beta^b \cdot n^3}$. Denote by $S_b$ a set covering at level $b$ and denote by $c_b$ its cost. Denote by $x_b$ the size of $Cov(S_b)$ when it was created, and we know it was at least one. We also know that $x_b < c_b \cdot \beta^{b+1}$, otherwise $Cov(S_b)$ would have been created at a level higher than $b$. Thus, $c_b > \frac{1}{\beta^{b+1}}$. Therefore, $c(S') < \frac{\beta \cdot c_b}{n^3}$. Since the total number of sets in the SC is $O(\epsilon^{-3} \cdot \ln n \cdot n)$ by Observation \ref{global}, the total cost of all sets above level $u$ is $<\frac{\beta \cdot c_b}{n^3} \cdot O(\epsilon^{-3} \cdot \ln n \cdot n) < c_b\cdot O(\epsilon^{-3}) \cdot \frac{1}{n} \leq \mathcal{C} \cdot O(\epsilon^{-3}) \cdot \frac{1}{n}$. Since we assume $n$ is large enough such that $n = \Omega(\epsilon^{-9})$, we get that the total cost of all sets above level $u$ is $<\epsilon^5 \cdot \mathcal{C}$.
\qed
\end{proof}

\begin{cl} \label{negl1}
Once the system is dirty, exists a half-critical level up to level $u$. 
\end{cl}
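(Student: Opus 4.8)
The plan is to argue by contradiction: assume that \emph{no} level $i \le u$ is half-critical, and derive a contradiction with the hypothesis that the system is dirty. The strategy mirrors the telescoping argument in the proof of Claim~\ref{lem1}, but carried out only inside the window of levels $[0,u]$, combined with the fact --- already prepared by Claims~\ref{impcd} and~\ref{epsfive} --- that essentially all of the system's dirt sits at or below level $u$. (The degenerate case $\mathcal{C}=0$ forces $\mathcal{D}=0$ and makes level $0$ vacuously half-critical, so it can be dismissed immediately; assume $\mathcal{C}>0$.)

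For the first half, I would run the peeling of Claim~\ref{lem1} starting from level $u$ rather than from the top level. Since level $u$ is not half-critical, there is some $i_1$ with $0 \le i_1 \le u$ and $\mathcal{D}_{i_1}^{u} < \tfrac12\cdot\tfrac{\epsilon}{\beta}\cdot\mathcal{C}_{i_1}^{u}$; since level $i_1-1 \le u$ is again not half-critical, there is $i_2 \le i_1-1$ with $\mathcal{D}_{i_2}^{i_1-1} < \tfrac12\cdot\tfrac{\epsilon}{\beta}\cdot\mathcal{C}_{i_2}^{i_1-1}$, and so on until we reach $i_q = 0$. The intervals $[i_1,u],[i_2,i_1-1],\dots,[0,i_{q-1}-1]$ partition $[0,u]$, so summing the inequalities telescopes to $\mathcal{D}_0^{u} < \tfrac12\cdot\tfrac{\epsilon}{\beta}\cdot\mathcal{C}_0^{u} \le \tfrac12\cdot\tfrac{\epsilon}{\beta}\cdot\mathcal{C}$.

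For the second half I would lower-bound $\mathcal{D}_0^u$ directly. The system is dirty, so $\mathcal{D}\ge\tfrac{\epsilon}{\beta}\cdot\mathcal{C}$; by Claim~\ref{epsfive}, $\mathcal{C}_{u+1}^{\infty} < \epsilon^5\cdot\mathcal{C}$, and by Claim~\ref{impcd} (applied to the finitely many nonempty levels above $u$), $\mathcal{D}_{u+1}^{\infty} < \beta\cdot\mathcal{C}_{u+1}^{\infty} < \beta\epsilon^5\cdot\mathcal{C}$. Hence $\mathcal{D}_0^{u} = \mathcal{D} - \mathcal{D}_{u+1}^{\infty} > \bigl(\tfrac{\epsilon}{\beta} - \beta\epsilon^5\bigr)\cdot\mathcal{C}$. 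Because $0<\epsilon<0.4$ and $\beta = 1+\epsilon$, one checks $\beta^2\epsilon^4 < \tfrac12$, equivalently $\beta\epsilon^5 \le \tfrac12\cdot\tfrac{\epsilon}{\beta}$, so $\mathcal{D}_0^{u} > \tfrac12\cdot\tfrac{\epsilon}{\beta}\cdot\mathcal{C}$, contradicting the bound from the first half. Therefore some level $\le u$ must be half-critical.

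I expect no serious obstacle here: the two ingredients that do the real work, Claims~\ref{impcd} and~\ref{epsfive}, are already in hand, and the rest is bookkeeping. The one genuinely load-bearing choice is the slack --- it is precisely the factor-$2$ gap between the ``dirty'' threshold $\tfrac{\epsilon}{\beta}$ and the ``half-critical'' threshold $\tfrac12\cdot\tfrac{\epsilon}{\beta}$ that absorbs the $\epsilon^5$ truncation loss, so one should verify that $\beta^2\epsilon^4 \le \tfrac12$ really holds over the whole admissible range $0<\epsilon<0.4$. A secondary point of care is setting up the telescoping partition of $[0,u]$ correctly, i.e.\ checking that every level visited during the peeling is $\le u$ so that the ``not half-critical'' assumption genuinely applies to it.
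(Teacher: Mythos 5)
Your proof is correct and takes essentially the same route as the paper: both use Claims~\ref{impcd} and~\ref{epsfive} to show that $\mathcal{D}_0^u$ exceeds the half-critical threshold, and both invoke the telescoping/peeling argument of Claim~\ref{lem1} restricted to the window $[0,u]$. The paper phrases the second part as "apply Claim~\ref{lem1} with $u$ as the highest level" after directly establishing $\mathcal{D}_0^u \ge \tfrac12\cdot\tfrac{\epsilon}{\beta}\cdot\mathcal{C}_0^u$, whereas you unfold that citation into an explicit proof by contradiction and compare against $\mathcal{C}$ rather than $\mathcal{C}_0^u$; these are cosmetic rearrangements of the same algebra.
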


\begin{proof}
We will show that once the system is dirty, $\mathcal{D}_0^u \geq \frac{1}{2} \cdot \frac{\epsilon}{\beta} \cdot \mathcal{C}_0^u$. Then, we can use Claim \ref{lem1}, with $u$ taken as the highest level, and conclude that a half-critical level exists up to level $u$. Since the system is dirty, we can write $\mathcal{D} = \mathcal{D}_0^u + \mathcal{D}_{u+1}^{\infty} \geq \frac{\epsilon}{\beta} \cdot (\mathcal{C}_0^u + \mathcal{C}_{u+1}^{\infty})$. From here we can write:

\begin{align*}
\mathcal{D}_0^u & \geq \frac{\epsilon}{\beta} \cdot (\mathcal{C}_0^u + \mathcal{C}_{u+1}^{\infty}) - \mathcal{D}_{u+1}^{\infty} \\ & > \frac{\epsilon}{\beta} \cdot (\mathcal{C}_0^u + \mathcal{C}_{u+1}^{\infty}) - \beta \cdot \mathcal{C}_{u+1}^{\infty} \\ & > \frac{\epsilon}{\beta} \cdot (\mathcal{C}_0^u + \mathcal{C}_{u+1}^{\infty}) - \beta \cdot \epsilon^5 \cdot \mathcal{C} \\ & = \frac{\epsilon}{\beta} \cdot (\mathcal{C}_0^u + \mathcal{C}_{u+1}^{\infty}) - \beta \cdot \epsilon^5 \cdot (\mathcal{C}_0^u + \mathcal{C}_{u+1}^{\infty}) \\ & = (\frac{\epsilon}{\beta} - \beta \cdot \epsilon^5) \cdot \mathcal{C}_0^u + (\frac{\epsilon}{\beta} - \beta \cdot \epsilon^5) \cdot \mathcal{C}_{u+1}^{\infty} \\ & > (\frac{\epsilon}{\beta} - \beta \cdot \epsilon^5) \cdot \mathcal{C}_0^u \\ & > \frac{1}{2} \cdot \frac{\epsilon}{\beta} \cdot \mathcal{C}_0^u,
\end{align*}

\noindent where the second transition is by Claim \ref{impcd}, the third transition is by Claim \ref{epsfive}, the sixth for $\epsilon<0.75$, and the last for $\epsilon<0.65$.
\qed
\end{proof}

\noindent Now, we address the first question, and explain how to efficiently find a half-critical level up to level $u$, in $O(\log_{\beta}n)$ worst case running time. Notice that there could be several such levels, and we will find the \emph{highest} one (up to level $u$). Since a half-critical level must be half-dirty too by definition, we will first create a sorted list containing all half dirty levels between $b$ and $u$. Since we know $\mathcal{C}_i$ and $\mathcal{D}_i$ for any $i$, checking if level $i$ is half-dirty takes $O(1)$ time. Denote the lowest half-dirty level by $i_1$, the next by $i_2$, and so on. In the process of searching for the highest half-critical level up to $u$, we will maintain a list of \emph{dirty ranges}. Each dirty range is defined by two levels, $j$ and $i$, where $j<i$. For each range we will store four values: $j$, $i$, $\mathcal{D}_{j+1}^i$ and $\mathcal{C}_{j+1}^i$. Initially, the list of dirty ranges is empty. We begin by checking if $i_1$ is half-critical. Since we have the values of $\mathcal{D}_{i}$ and $\mathcal{C}_{i}$ for any $i$, we can add $\mathcal{D}_{i_1-1}$ and $\mathcal{C}_{i_1-1}$ to $\mathcal{D}_{i_1}$ and $\mathcal{C}_{i_1}$ respectively and check in $O(1)$ time if $\mathcal{D}_{i_1-1}^{i_1} \geq \frac{1}{2} \cdot \frac{\epsilon}{\beta} \cdot \mathcal{C}_{i_1-1}^{i_1}$. We continue in the same manner downwards while incrementing until we either reach level $b$ and then $i_1$ is half-critical, or we reach some level $j_1$ such that $\mathcal{D}_{j_1}^{i_1} < \frac{1}{2} \cdot \frac{\epsilon}{\beta} \cdot \mathcal{C}_{j_1}^{i_1}$. In the first case, we add $(b-1,i_1)$ to the list of dirty ranges. Since we just calculated $\mathcal{D}_{b}^{i_1}$ and $\mathcal{C}_{b}^{i_1}$, we have the four values we need for this dirty range. In the second case, we add $(j_1,i_1)$ to the list of dirty ranges. Since we calculated $\mathcal{D}_{j_1+1}^{i_1}$ and $\mathcal{C}_{j_1+1}^{i_1}$ just before reaching $j_1$, we have the four values we need for this dirty range. We now move on to the next half-dirty level, $i_2$. Again, we check if $i_2$ is half-critical in the same manner, by progressing downwards while incrementing, in $O(1)$ time for each progression. Eventually, we again either reach level $b$ and then $i_2$ is half-critical, or we reach some level $j_2$ such that $\mathcal{D}_{j_2}^{i_2} < \frac{1}{2} \cdot \frac{\epsilon}{\beta} \cdot \mathcal{C}_{j_2}^{i_2}$. In the first case we add $(b-1,i_2)$ to the list of dirty ranges, and in the second case we add $(j_2,i_2)$. Upon progressing downwards, if we reach the upper bound of a dirty range, then we will skip the dirty range and continue from the lower bound of that dirty range. We are able to skip a dirty range and continue our progression after it since we have stored $\mathcal{D}_{j+1}^i$ and $\mathcal{C}_{j+1}^i$ for each dirty range $(j,i)$, so we just need to add these values and continue. To justify the fact that we can skip a dirty range, say without loss of generality we are verifying if $i_2$ is half-critical. If we reach $i_1$, we already know that:

\begin{equation} \label{newcrit1}
\mathcal{D}_{i_1+1}^{i_2} \geq \frac{1}{2} \cdot \frac{\epsilon}{\beta} \cdot \mathcal{C}_{i_1+1}^{i_2}
\end{equation} 

\noindent Since $(j_1,i_1)$ is a dirty range (where $j_1$ could be $b-1$ too), for any $j_1+1 \leq j \leq i_1$ we have that:

\begin{equation} \label{newcrit2}
\mathcal{D}_{j}^{i_1} \geq \frac{1}{2} \cdot \frac{\epsilon}{\beta} \cdot \mathcal{C}_{j}^{i_1}
\end{equation}

\noindent Adding both sides of the inequalities (\ref{newcrit1}) and (\ref{newcrit2}) we get:

\begin{equation} \label{newcrit3}
\mathcal{D}_{j}^{i_2} \geq \frac{1}{2} \cdot \frac{\epsilon}{\beta} \cdot \mathcal{C}_{j}^{i_2}
\end{equation}

\noindent Therefore, there is no need to check for any $j_1+1 \leq j \leq i_1$, and we can skip from $i_1+1$ to $j_1$. Moreover, once we reach a dirty range we delete it from our list, since it will be contained in the new dirty range we are now dealing with. Thus, at all times the dirty ranges are disjoint. Since by Claim \ref{negl1} a half-critical level between $b$ and $u$ exists, and we find \emph{all} of them, we can return the highest one. See Algorithm \ref{alg6} for code and further details.

\begin{algorithm}
\caption{FindHighestHalfCriticalLevel $(b)$} \label{alg6}
\begin{algorithmic}
\STATE $u \leftarrow b + 4 \log_{\beta}n$
\STATE $k \leftarrow 0$ ($k$ will be the number of half-dirty levels)
\FOR {$j=b$ to $u$}
\IF {$\mathcal{D}_j \geq \frac{1}{2} \cdot \frac{\epsilon}{\beta} \cdot \mathcal{C}_j$}
\STATE $k \leftarrow k + 1$  
\STATE $i_{k} \leftarrow j$
\ENDIF
\ENDFOR
\FOR {$q=1$ to $k$}
\STATE $j=i_q-1$
\STATE crit $\leftarrow$ TRUE
\WHILE {$j \geq b$}
\IF {$j$ is upper bound of a dirty range $(j',j)$}
\STATE $\mathcal{D}_{j'+1}^{i_q} \leftarrow \mathcal{D}_{j'+1}^{j} + \mathcal{D}_{j+1}^{i_q}$
\STATE $\mathcal{C}_{j'+1}^{i_q} \leftarrow \mathcal{C}_{j'+1}^{j} + \mathcal{C}_{j+1}^{i_q}$
\STATE Remove dirty range $(j',j)$
\STATE $j \leftarrow j'$
\ELSE
\STATE $\mathcal{D}_j^{i_q} \leftarrow \mathcal{D}_j + \mathcal{D}_{j+1}^{i_q}$
\STATE $\mathcal{C}_j^{i_q} \leftarrow \mathcal{C}_j + \mathcal{C}_{j+1}^{i_q}$
\IF {$\mathcal{D}_j^{i_q} < \frac{1}{2} \cdot \frac{\epsilon}{\beta} \cdot \mathcal{C}_j^{i_q}$}
\STATE Add $(j,i_q)$ to set of dirty ranges
\STATE crit $\leftarrow$ FALSE
\STATE $j \leftarrow b-1$
\ELSE
\STATE $j \leftarrow j-1$
\ENDIF
\ENDIF
\ENDWHILE
\IF {crit}
\STATE HighestCrit $\leftarrow i_q$
\STATE Add $(b-1,i_q)$ to set of dirty ranges
\ENDIF
\ENDFOR
\STATE return HighestCrit
\end{algorithmic}
\end{algorithm}

For the running time of this procedure, creating the sorted list of half-dirty levels takes $O(\log_{\beta} n)$ time. Next, notice that each progression can be done in $O(1)$ time. Since we skip dirty ranges, the only levels that we can check more than once are levels that are lower bounds of a dirty range. In the example above, we check $j_1$ first when we see that $\mathcal{D}_{j_1}^{i_1} < \frac{1}{2} \cdot \frac{\epsilon}{\beta} \cdot \mathcal{C}_{j_1}^{i_1}$, and it is possible we check it for a second time when we check if $\mathcal{D}_{j_1}^{i_2} \geq \frac{1}{2} \cdot \frac{\epsilon}{\beta} \cdot \mathcal{C}_{j_1}^{i_2}$ after skipping the dirty range $(j_1,i_1)$. It is obviously possible that we check it many more times, if each time we ``fail" at level $j_1$. The observation though is that once we check a certain level which was previously checked, then a dirty range is deleted. Since there are up to $u-b = O(\log_{\beta}n)$ dirty ranges, we can check an already checked level only $O(\log_{\beta}n)$ times, which takes $O(\log_{\beta}n)$ time. Other than that, each level is checked only once, and again there are $u-b = O(\log_{\beta}n)$ levels, so this takes $O(\log_{\beta}n)$ time as well. Adding and removing dirty ranges to the list can also be done in $O(1)$ time each. Thus, the overall worst case running time of this procedure is $O(\log_{\beta}n)$. We conclude this paragraph with the following corollary:

\begin{cor}
The worst case running time of finding a half-critical level $i_{crit}$ once the system is dirty is $O(\log_{\beta}n)$.
\end{cor}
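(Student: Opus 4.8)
The plan is to read off the running time directly from Algorithm \ref{alg6}, splitting the work into a preprocessing phase and the main search loop, and bounding each by $O(\log_\beta n)$; correctness (that a half-critical level in the range $[b,u]$ exists and that the highest one is returned) is already guaranteed by Claim \ref{negl1} together with the merging identity obtained by summing \eqref{newcrit1} and \eqref{newcrit2} to get \eqref{newcrit3}, so only the time bound needs to be argued here.

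First I would handle the preprocessing: the initial \textbf{FOR} loop scans the $u-b = 4\log_\beta n$ levels between $b$ and $u$, and since the counters $\mathcal{C}_j$ and $\mathcal{D}_j$ are maintained up to date by the algorithm, testing half-dirtiness of a level is $O(1)$; hence building the sorted list $i_1 < i_2 < \cdots < i_k$ of half-dirty levels costs $O(\log_\beta n)$. Each individual ``progression'' inside the \textbf{WHILE} loop is also $O(1)$, because every quantity $\mathcal{D}^{i_q}_j$ or $\mathcal{C}^{i_q}_j$ it needs is formed by adding one already-stored value — either $\mathcal{D}_j,\mathcal{C}_j$ or the two prefix sums $\mathcal{D}^{i}_{j'+1},\mathcal{C}^{i}_{j'+1}$ cached with a dirty range $(j',i)$ — to the running total, and inserting or removing a dirty range from the list is $O(1)$.

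The crux is therefore to bound the total number of progressions over all iterations $q=1,\dots,k$ by $O(\log_\beta n)$. Here I would argue: a level that lies strictly inside the bounds of a previously recorded dirty range is never visited, since the downward scan jumps from the upper bound straight to the lower bound of that range; so the only levels that can be visited more than once are the lower endpoints $j'$ of dirty ranges. Moreover, each time such an already-visited lower endpoint is re-examined, the dirty range it bounds is deleted from the list (and absorbed into the larger range currently being formed), so at all times the stored ranges are disjoint and the number of re-examinations is bounded by the total number of dirty ranges ever created, which is at most $u-b = O(\log_\beta n)$. Adding the $O(\log_\beta n)$ first-time visits of distinct levels in $[b,u]$, the total number of progressions is $O(\log_\beta n)$, and combining this with the $O(\log_\beta n)$ preprocessing cost gives the claimed $O(\log_\beta n)$ worst-case running time. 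I expect the amortized bound on re-examined lower-endpoint levels to be the only non-routine point; everything else is direct bookkeeping already spelled out in the paragraph preceding the corollary.

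\qed
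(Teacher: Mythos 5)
Your proof is correct and follows essentially the same route as the paper's: you bound the preprocessing scan by $O(\log_\beta n)$, argue that each progression in the while loop is $O(1)$, and then charge the re-examinations of dirty-range lower endpoints to dirty-range deletions (hence to creations, of which there are at most one per half-dirty level, i.e.\ $O(\log_\beta n)$). The paper's argument is the same, with the same decomposition into first-time visits and re-examinations.
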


\noindent Finally, we address the second question, where our goal is to show that between each update step only one partial reset is necessary. If this is indeed the case, then we can conclude that in the worst case, at each update step we spend $O(\log_{\beta}n)$ time to find a a half-critical level to do a reset up to, which is not the bottleneck of the update time. Since we perform the reset up to the \emph{highest} half-critical level up to $u$, we claim that after the reset there is no half-critical level up to u. Indeed, all $\mathcal{D}_j$ counters for $j \leq i_{crit}$ were reset to $0$, thus following the reset a level $j \leq i_{crit}$ cannot be half-critical. On the other hand, since no $\mathcal{D}_j$ was raised during the reset for \emph{any} $j$, had a critical level existed above $i_{crit}$ and below $u$, it would have existed before the reset as well, a contradiction to the fact that $i_{crit}$ is the highest such level. Thus, following the reset there is no half-critical level up to $u$, which by Claim \ref{negl1} means that the system is not dirty. We conclude this paragraph with the following observation:

\begin{obs} \label{onereset}
Between each two update steps, there cannot be more than one partial reset.
\end{obs}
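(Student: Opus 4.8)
\noindent The plan is to prove the key fact that \emph{immediately after any partial reset the system is not dirty}. Granting this, the observation follows: a partial reset is performed only when the system is dirty, no update step occurs in between, and by Observation~\ref{parclean} a reset leaves no PD set (a short check using Invariant~\ref{inv1} and Claim~\ref{clm1} rules out any other set becoming PD), so no local rise is pending to reopen the cascade and no second reset can be due before the next update.

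\noindent Fix a partial reset, performed up to the level $i_{crit}$ returned by Algorithm~\ref{alg6}; recall that $i_{crit}$ is the \emph{highest} half-critical level in $[b,u]$, with $u=b+4\log_\beta n$, and that it exists by Claim~\ref{negl1} because the system was dirty. The first step is to pin down how the reset acts on the counters: (i)~it zeroes $\mathcal{D}_j$ for every $j\le i_{crit}$; (ii)~it leaves $\mathcal{D}_j$ unchanged for every $j>i_{crit}$, since by Invariant~\ref{inv3} a set $S$ with $l_{cov}(S)>i_{crit}$ contains only elements of level $>i_{crit}$, hence no element of $\tilde{\mathcal{U}}$, and so sheds no covered element during the reset (thus, in net, no dirt counter is raised); and (iii)~by Observation~\ref{ob:added} the only covering sets the reset creates above level $i_{crit}$ lie at level $i_{crit}+1$, so $\mathcal{C}_j$ is unchanged for $j>i_{crit}+1$ while $\mathcal{C}_{i_{crit}+1}$ can only grow.

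\noindent The second step is to show that after the reset no half-critical level up to $u$ remains, whence the system is not dirty by the contrapositive of Claim~\ref{negl1}. For a level $i\le i_{crit}$: the reset re-covers every element of $\tilde{\mathcal{U}}$, so (barring the degenerate case that every re-covered element lands at level $i_{crit}+1$) some covering set sits at a level $\ell\le i_{crit}$, and then $\mathcal{C}_{j}^{i}>0$ while $\mathcal{D}_{j}^{i}=0$ for $j=\min\{i,\ell\}$ by~(i), so~(\ref{eq1half}) fails and $i$ is not half-critical. For a level $i$ with $i_{crit}<i\le u$: assume for contradiction it is half-critical after the reset. By~(ii) and~(iii), inequality~(\ref{eq1half}) then holds also for the \emph{pre-reset} counters, for every $j$ with $i_{crit}+1\le j\le i$ (replacing post-reset costs by pre-reset ones can only decrease them). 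Since $i_{crit}$ was half-critical before the reset, $\mathcal{D}_j^{i_{crit}}\ge\tfrac{1}{2}\cdot\tfrac{\epsilon}{\beta}\,\mathcal{C}_j^{i_{crit}}$ (pre-reset) for every $j\le i_{crit}$; summing over $[0,i_{crit}]$ and $[i_{crit}+1,i]$ shows that~(\ref{eq1half}) held for every $j\le i$ before the reset, i.e.\ $i$ was half-critical before the reset, contradicting the maximality of $i_{crit}$ among half-critical levels in $[b,u]$.

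\noindent I expect the main obstacle to be the counter bookkeeping behind steps~(i)--(iii): confirming, via Invariant~\ref{inv3} and Observation~\ref{obs2}, that no set covering above level $i_{crit}$ sheds a covered element during the reset, and isolating level $i_{crit}+1$ as the unique place where a cost counter can increase. Two further wrinkles must be ironed out: the degenerate configuration in which all re-covered elements land at level $i_{crit}+1$ (handled directly from ``$\mathcal{D}_j=0$ for $j\le i_{crit}$'' plus Claims~\ref{impcd}--\ref{epsfive}), and the fact that invoking Claim~\ref{negl1} after the reset is relative to the \emph{new} lowest covering level, so the few levels lying between $u$ and the updated threshold must be dispatched separately, again exactly as in the proof of Claim~\ref{negl1} (the cost carried above level $u$ is only an $O(\epsilon^5)$-fraction of $\mathcal{C}$). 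Everything else reduces to summing the defining inequalities~(\ref{eq1half}).
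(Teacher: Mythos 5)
Your proposal is correct and takes essentially the same approach as the paper: show that after the reset no half-critical level remains up to $u$ (levels $\le i_{crit}$ because their dirt counters are zeroed; levels in $(i_{crit},u]$ because half-criticality post-reset would force half-criticality pre-reset, contradicting the maximality of $i_{crit}$), then invoke the contrapositive of Claim~\ref{negl1}. In fact you are somewhat more careful than the paper's one-paragraph argument --- you explicitly split the index range $j\le i_{crit}$ vs.\ $j>i_{crit}$ and track how the cost counters move (using Observation~\ref{ob:added}), and you flag the two edge cases (the vacuous half-criticality when $\mathcal{C}_j^i=0$, and the fact that $b$, hence $u$, may shift post-reset) that the paper leaves implicit.
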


\noindent \textbf{Computing the sets $\tilde{\mathcal{U}}$ and $\tilde{\mathcal{F}}$:} 
Using the lists $\mathcal{L}_j$, where $j$ ranges over all levels from $0$ to $i_{crit}$, we copy every element $e$ in these lists to the set $\tilde{\mathcal{U}}$, within time that is linear in the number of elements in the resulting $\tilde{\mathcal{U}}$. The set $\tilde{\mathcal{U}}$ too will be maintained as a doubly-linked list. Recall that the set $\tilde{\mathcal{F}}$ consists of all sets $S$ such that $l_{cov}(S) \le i_{crit}$, which contain at least one element from $\tilde{\mathcal{U}}$. To compute $\tilde{\mathcal{F}}$, we scan the list $\tilde{\mathcal{U}}$, and for each $e \in \tilde{\mathcal{U}}$, we scan all sets in $\mathcal{F}_e$ and put in $\tilde{\mathcal{F}}$ the correct ones; 
the time for computing $\tilde{\mathcal{F}}$ is at most $O(|\tilde{\mathcal{U}}| \cdot f)$, since we can naively scan $\mathcal{F}_e$ for all $e \in \tilde{\mathcal{U}}$ (each has at most $f$ frequency) and determine which among them belongs to $\tilde{\mathcal{F}}$. 
We also note that before computing the set $\tilde{\mathcal{F}}$, we should first empty the Cov sets for all sets that currently cover at any level up to $i_{crit}$. Since all these sets are contained in $\tilde{\mathcal{F}}$ that we compute now, this step can be implemented in time $O(|\tilde{\mathcal{F}}|) = O(|\tilde{\mathcal{U}}| \cdot f)$. 
Summarizing, the total runtime for computing the sets $\tilde{\mathcal{U}}$ and $\tilde{\mathcal{F}}$ is $O(|\tilde{\mathcal{U}}| \cdot f)$, thus it exceeds that number of elements that enter a partial reset in $\tilde{\mathcal{U}}$ by a factor of $f$. Since we view all elements in $\tilde{\mathcal{U}}$ as changing their level, 
the amortized update time incurred by the computation of the sets $\tilde{\mathcal{U}}$ and $\tilde{\mathcal{F}}$ throughout all partial resets, as described above, 
does not exceed the total amortized number of element level changes
due to partial resets by more than a factor of $f$,

\bigskip

\noindent \textbf{Applying the static algorithm on $\tilde{\mathcal{U}} \cup \tilde{\mathcal{F}}$:}
For each set $S \in \tilde{\mathcal{F}}$ we keep track of $|S \cap \tilde{\mathcal{U}}|$, where recall that in each iteration of the partial reset we remove from $\tilde{\mathcal{U}}$ the elements that become covered, until $\tilde{\mathcal{U}} = \emptyset$. We maintain another ``level system", where each set $S \in \tilde{\mathcal{F}}$ is placed in a level $j$ such that $j = \lfloor \log_{\beta} (\frac{|S \cap \tilde{\mathcal{U}}|}{c(S)}) \rfloor$. To initialize this level system, for each $e \in \tilde{\mathcal{U}}$ we go over all sets $S \in \mathcal{F}_e$ and increment their $|S \cap \tilde{\mathcal{U}}|$ value by one. Once we have all of these values, we place each set in its level in $O(1)$ time. Initializing this system takes $O(|\tilde{\mathcal{U}}| \cdot f)$ time, since $|\tilde{\mathcal{F}}| = O(|\tilde{\mathcal{U}}| \cdot f)$. In addition, we maintain a pointer to the highest occupied level. At each iteration, we choose an arbitrary set $S$ from the highest occupied level. $S$ then joins the set cover, we let $Cov(S) = S \cap \tilde{\mathcal{U}}$, and we remove $S \cap \tilde{\mathcal{U}}$ from $\tilde{\mathcal{U}}$. Then, we must recompute the values $|S' \cap \tilde{\mathcal{U}}|$ for all other sets $S' \in \tilde{\mathcal{F}}$. To do so we simply consider each element $e \in S \cap \tilde{\mathcal{U}}$, and for each $S' \in \mathcal{F}_e$ we reduce the $|S' \cap \tilde{\mathcal{U}}|$ value by one, and check if $S'$ needs to move down a level (all in $O(1)$ time). Since we take $O(f)$ time for each element removal from $\tilde{\mathcal{U}}$, and each element is removed only once, the total runtime of this process is $O(|\tilde{\mathcal{U}}| \cdot f)$. 

The only issue remaining is the time spent on maintaining a pointer to the highest occupied level. Notice that the pointer cannot rise as we move forward with the iterations. If the pointer points to a now unoccupied level, it must search for the highest occupied level by descending one level at a time. Thus, the total runtime of maintaining this pointer is $O(|\tilde{\mathcal{F}}| + L) = O(|\tilde{\mathcal{U}}| \cdot f + L)$, where $L$ is the number of levels in this system that the pointer will point to.

\begin{lem} \label{howmanylevels}
$L=O(\log_{\beta}n)$.
\end{lem}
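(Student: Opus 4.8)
The quantity $L$ counts how many distinct levels the "highest occupied level" pointer of the auxiliary greedy level-system visits during a single partial reset. Since the pointer never rises (the sizes $|S' \cap \tilde{\mathcal U}|$ only shrink as the static greedy algorithm proceeds, so each set only moves down), $L$ is at most one plus the difference between the top level the pointer starts at and the bottom level it ends at. So it suffices to bound the range of levels that sets of $\tilde{\mathcal F}$ can ever occupy in this auxiliary system, i.e.\ to bound $\lfloor \log_\beta(|S\cap\tilde{\mathcal U}|/c(S))\rfloor$ from above (over the initial state) and from below (over all iterations, but the minimum relevant value is $0$ since once $|S\cap\tilde{\mathcal U}| < c(S)$ we would have a negative level, and in any case a set with $|S\cap\tilde{\mathcal U}|=0$ is simply removed from consideration).

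First I would pin down the top of the range. At the start of the partial reset, every set $S\in\tilde{\mathcal F}$ satisfies $l_{cov}(S)\le i_{crit}$, and $i_{crit}\le u = b + 4\log_\beta n$ (we only ever look for a half-critical level up to $u$). For any such $S$, since $|Cov(S)|$ right before the reset was the full $Cov(S)$ and $S$ was not $j$-PD for $j = i_{crit}+1$ (that is roughly the content of Observation~\ref{obs1} / Claim~\ref{negl1}-type reasoning), the number of elements of $\tilde{\mathcal U}$ inside $S$ is at most $n$ and $c(S)\ge 1/C$; but more usefully, I would argue that $|S\cap\tilde{\mathcal U}|/c(S) < \beta^{i_{crit}+2}$ for every $S\in\tilde{\mathcal F}$: indeed $S\cap\tilde{\mathcal U}\subseteq N_{i_{crit}+1}(S)$, and Invariant~\ref{inv1} gives $|N_{i_{crit}+1}(S)|/c(S) < \beta^{i_{crit}+2}$. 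Hence the initial level of every set in the auxiliary system is at most $i_{crit}+1 \le u+1 = b + 4\log_\beta n + 1$.

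Next I would pin down the bottom: the pointer can only descend to levels that some set actually occupies, and once a set has $|S\cap\tilde{\mathcal U}|/c(S) < 1$ it would sit at a negative level — but such a set has $|S\cap\tilde{\mathcal U}| < c(S) \le 1$, i.e.\ $|S\cap\tilde{\mathcal U}| = 0$, so it has been emptied and is irrelevant. Thus the pointer never needs to go below level $b' := \lfloor \log_\beta(1/c_{\max})\rfloor$ where $c_{\max}\le 1$, i.e.\ never below level $-1$... but this is too crude, because $b$ itself can be negative only by $O(\log_\beta C)$ and $C$ is unbounded. The right statement is that the auxiliary-system levels range only over $[\,\min_S \lfloor\log_\beta(1/c(S))\rfloor,\ i_{crit}+1\,]$, and $\lfloor\log_\beta(1/c(S))\rfloor \ge \lfloor\log_\beta(1/c(S))\rfloor$; combining with the fact that $c(S)$ of any set covering at level $b$ satisfies $c(S) > \beta^{-(b+1)}$ (the bound used in Claim~\ref{epsfive}), I would show that $b$ itself already satisfies $\lfloor\log_\beta(1/c(S))\rfloor \ge b - O(1)$ for any $S\in\tilde{\mathcal F}$, so the pointer stays in $[b-O(1),\ u+1]$, a window of width $4\log_\beta n + O(1) = O(\log_\beta n)$. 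The pointer is monotone nonincreasing, so it visits at most $O(\log_\beta n)$ levels, giving $L = O(\log_\beta n)$, and since $\log_\beta n = O(\tfrac{1}{\epsilon}\ln n)$ this is $O_\epsilon(\ln n)$.

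The main obstacle I expect is the low end of the range: a naive bound on the auxiliary level of a set is $\lfloor\log_\beta(1/c(S))\rfloor$, which is $-\Theta(\log_\beta C)$ and would make $L$ depend on $C$, defeating the purpose. The resolution has to be that we never actually feed such low-cost, near-empty sets into the descending-pointer loop with positive occupancy: any set in $\tilde{\mathcal F}$ that gets placed at a level far below $b$ must have $|S\cap\tilde{\mathcal U}|$ much smaller than $c(S)\cdot\beta^{b}$, but since $b$ is the lowest occupied level in the real system and the sets covering at level $b$ have cost $> \beta^{-(b+1)}$ and hold $\ge 1$ element, a careful comparison shows that every set with at least one element of $\tilde{\mathcal U}$ has cost at least $\beta^{-(b+1)}$ as well (elements only migrate within the reset, and any set that could cover an element covered at level $\le i_{crit}$ but has tiny cost would already have been $j$-PD for small $j$, contradicting Invariant~\ref{inv1}). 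Making this "cost of any relevant set is $\ge \beta^{-(b+O(1))}$" argument airtight, rather than hand-wavy, is where the real work lies; once it is in hand, the monotonicity of the pointer finishes the proof immediately.
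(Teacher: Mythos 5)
Your overall plan — bound the top level $u'$, bound the bottom level $b'$, and use monotonicity of the pointer — matches the paper's structure, and your top bound $u' \le i_{crit}+1 \le b + 4\log_\beta n + 1$ via Invariant~\ref{inv1} (the paper uses Observation~\ref{ob:added}, which is equivalent) is correct. The gap is in the bottom bound, and it is a genuine one that your sketched mechanism cannot close. You propose to show that every $S \in \tilde{\mathcal F}$ has $c(S) \ge \beta^{-(b+O(1))}$, via a PD-ness argument. First, the best PD-ness gives you is $c(S) > \beta^{-(i_{crit}+2)}$, not $\beta^{-(b+O(1))}$ (you can only invoke Invariant~\ref{inv1} at a level just above $l(e)$ for some $e \in S\cap\tilde{\mathcal U}$, and $l(e)$ can be as large as $i_{crit}$, which exceeds $b$ by up to $4\log_\beta n$). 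Second, and more fundamentally, the direction is wrong: a \emph{lower} bound on $c(S)$ gives an \emph{upper} bound on $\log_\beta(1/c(S))$, hence an upper bound on the auxiliary level, which does nothing to keep the pointer from descending. The auxiliary level of $S$ is $\lfloor \log_\beta(|S\cap\tilde{\mathcal U}|/c(S))\rfloor$; to show it stays near $b$, you need this quantity to be \emph{large}, i.e.\ you need $c(S')$ for the last-placed set $S'$ to be \emph{small} — an \emph{upper} bound $c(S') \le n^{O(1)}/\beta^b$.

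The ingredient you are missing is the greedy min-ratio property, which is what the paper's proof is built around. When the greedy picks the last set $S'$, some element $e'$ of $\tilde{\mathcal U}$ is still uncovered, and the set $S''$ that covered $e'$ right before the reset lies in $\tilde{\mathcal F}$ and still satisfies $|S''\cap\tilde{\mathcal U}| \ge 1$ at that moment. The greedy rule gives $\frac{|S'\cap\tilde{\mathcal U}|}{c(S')} \ge \frac{|S''\cap\tilde{\mathcal U}|}{c(S'')} \ge \frac{1}{c(S'')}$, and since $S''$ was covering at some real level $\ge b$ with $|Cov(S'')| \le n$ at creation, Observations~\ref{obclean} and \ref{parclean} give $c(S'') \le n/\beta^b$. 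Combining, $b' = \lfloor\log_\beta(|S'\cap\tilde{\mathcal U}|/c(S'))\rfloor \ge b - O(\log_\beta n)$, which is the bound you need. Without relating the cost of the last-chosen set to the cost of a set that was covering at level $\ge b$ in the \emph{real} system via the min-ratio comparison, there is no way to prevent the scenario where $\tilde{\mathcal F}$ contains some cheap, nearly-empty set whose auxiliary level is near $0$ while $b$ is $\Theta(\log_\beta(nC))$; PD-ness alone does not exclude it, and indeed such a set may exist — the point is that the greedy never drags the pointer down to it, because all of $\tilde{\mathcal U}$ is exhausted first.
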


\begin{proof}
Denote by $u'$ and $b'$ the highest and lowest levels that the pointer will point to throughout the reset, respectively. This means that the first Cov set will be created at level $u'$, and the last at level $b'$. By Observation \ref{ob:added}:

\begin{equation} \label{0.25par}
u' \leq i_{crit}+1
\end{equation} 

\noindent Denote by $S'$ the last chosen set in the reset procedure, taken to level $b'$. Thus: 

\begin{equation} \label{0.5par}
b' = \Biggl\lfloor \log_{\beta} \left(\frac{|S' \cap \tilde{\mathcal{U}}|}{c(S')}\right) \Biggr\rfloor > \log_{\beta} \left(\frac{1}{c(S')}\right) - 1 = \log_{\beta} \left(\frac{1}{\beta \cdot c(S')}\right)
\end{equation}

\noindent $S'$ contains at least one element, $e'$, which was in $\tilde{\mathcal{U}}$ when $S'$ was taken to the set cover. Denote by $S''$ the set which covered $e'$ right before the reset (where perhaps $S' = S''$). $S'' \in \tilde{\mathcal{F}}$, since $e' \in \tilde{\mathcal{U}}$. If $S' \neq S''$, then we know that $S''$ was not taken to the set cover. Indeed, both $S'$ and $S''$ contain $e'$, and since $S'$ is the last taken set to the set cover, had we taken $S''$ to the set cover before, $e'$ would have already been removed from $\tilde{\mathcal{U}}$ by the time we reach the last iteration. In that case, right before the last iteration where $S'$ was taken to the set cover, we had:

$$\frac{|S' \cap \tilde{\mathcal{U}}|}{c(S')} \geq \frac{|S'' \cap \tilde{\mathcal{U}}|}{c(S'')}.$$

\noindent Notice that if $S'=S''$ this also holds. Now:

$$\frac{n}{c(S')} \geq \frac{|S' \cap \tilde{\mathcal{U}}|}{c(S')} \geq \frac{|S'' \cap \tilde{\mathcal{U}}|}{c(S'')} \geq \frac{1}{c(S'')},$$

\noindent where the last inequality holds since $e' \in S'' \cap \tilde{\mathcal{U}}$ right before the last iteration. Thus:

\begin{equation} \label{1par}
c(S'') \geq \frac{c(S')}{n}
\end{equation}

\noindent Since $S''$ was in the set cover before the reset, let $l = l_{cov}(S'')$ where $l \geq b$. Recall that $b$ is the lowest level of a set in the set cover right before the reset. When $Cov(S'')$ was created, either by local rise or partial reset, it satisfied $|Cov(S'')| \geq c(S'') \cdot \beta^l$, by Observations \ref{obclean} and \ref{parclean} respectively. Denote by $x''$ the size of $Cov(S'')$ when it was created. So:

$$n \geq x'' \geq c(S'') \cdot \beta^l \geq c(S'') \cdot \beta^b,$$

\noindent and we obtain:

\begin{equation} \label{2par}
c(S'') \leq \frac{n}{\beta^b}
\end{equation}

\noindent Combining Equations (\ref{1par}) and (\ref{2par}), and rearranging, we get:

\begin{equation} \label{3par}
b \leq \log_{\beta}\left(\frac{n^2}{c(S')}\right)
\end{equation}

\noindent Now, we can take Equations (\ref{0.5par}) and (\ref{3par}) and write:

\begin{equation} \label{4par}
b-b' < \log_{\beta}\left(\frac{n^2}{c(S')}\right) - \log_{\beta} \left(\frac{1}{\beta \cdot c(S')}\right) = 2\log_{\beta}(n) + 1
\end{equation}

\noindent Moreover, we know that $i_{crit} \leq u$, where recall that $u = b + 4 \log_{\beta}n$. Thus, by Equation (\ref{0.25par}), we know that $u' \leq u+1 = b + 4 \log_{\beta}n + 1$. Since $L = u' - b' + 1$, we can write:

$$L = u' - b' + 1 = u' - b + b - b' + 1 < 4 \log_{\beta}n + 1 + 2\log_{\beta}(n) + 1 + 1 = 6\log_{\beta}(n) + 3$$
\qed
\end{proof}

\noindent We conclude that once we find $i_{crit}$ and the sets $\tilde{\mathcal{U}}$ and $\tilde{\mathcal{F}}$, the running time of a partial reset is $O(|\tilde{\mathcal{U}}| \cdot f + \log_{\beta}n)$. Combining this with the time needed to find $i_{crit}$ and the sets $\tilde{\mathcal{U}}$ and $\tilde{\mathcal{F}}$, we derive the following corollary.

\begin{cor} \label{rstime}
The running time of a partial reset (including finding $i_{crit}$ and the sets $\tilde{\mathcal{U}}$ and $\tilde{\mathcal{F}}$) is $O(|\tilde{\mathcal{U}}| \cdot f + \log_{\beta}n)$.
\end{cor}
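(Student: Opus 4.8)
The plan is to establish Corollary \ref{rstime} by adding up the costs of the three phases of a partial reset, each already bounded in the preceding discussion: (i) locating a half-critical level $i_{crit}$; (ii) building the sets $\tilde{\mathcal{U}}$ and $\tilde{\mathcal{F}}$ (and emptying the relevant Cov sets); and (iii) running the static greedy algorithm on $\tilde{\mathcal{U}}$ and $\tilde{\mathcal{F}}$. For phase (i), I would invoke the corollary established just above (on Algorithm \ref{alg6}): once the system is dirty, Claim \ref{negl1} guarantees a half-critical level between $b$ and $u = b + 4\log_{\beta} n$, so the search is confined to $O(\log_{\beta} n)$ levels and, with the dirty-range bookkeeping, runs in $O(\log_{\beta} n)$ worst-case time.

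For phase (ii), $\tilde{\mathcal{U}}$ is assembled by walking the doubly-linked lists $\mathcal{L}_j$ for $0 \le j \le i_{crit}$ (skipping empty levels via an auxiliary list of nonempty levels), which costs $O(|\tilde{\mathcal{U}}|)$ since together these lists contain exactly the elements of $\tilde{\mathcal{U}}$. Then $\tilde{\mathcal{F}}$ is computed by scanning, for each $e \in \tilde{\mathcal{U}}$, the list $\mathcal{F}_e$ of the (at most $f$) sets containing $e$ and retaining those with $l_{cov}(S) \le i_{crit}$; this takes $O(|\tilde{\mathcal{U}}| \cdot f)$ time, and in particular $|\tilde{\mathcal{F}}| = O(|\tilde{\mathcal{U}}| \cdot f)$. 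Emptying $Cov(S)$ for every set currently covering at a level $\le i_{crit}$ touches only sets in $\tilde{\mathcal{F}}$, hence costs $O(|\tilde{\mathcal{F}}|) = O(|\tilde{\mathcal{U}}| \cdot f)$.

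For phase (iii), I would use the auxiliary level system on $\tilde{\mathcal{F}}$ described above: initializing the counters $|S \cap \tilde{\mathcal{U}}|$ and placing each $S \in \tilde{\mathcal{F}}$ at level $\lfloor \log_{\beta}(|S \cap \tilde{\mathcal{U}}|/c(S)) \rfloor$ costs $O(|\tilde{\mathcal{U}}| \cdot f)$; across all greedy iterations, the work of removing covered elements from $\tilde{\mathcal{U}}$ and decrementing the affected $|S' \cap \tilde{\mathcal{U}}|$ counters is $O(|\tilde{\mathcal{U}}| \cdot f)$, since each element is removed exactly once and each removal updates at most $f$ sets; and the pointer to the highest occupied level only descends, so it scans $O(|\tilde{\mathcal{F}}| + L)$ levels, where $L$ is the number of distinct levels it visits. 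Summing phases (i)--(iii) gives $O(|\tilde{\mathcal{U}}| \cdot f + L + \log_{\beta} n)$, so all that remains is to show $L = O(\log_{\beta} n)$.

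The bound on $L$ is the only genuinely nontrivial point, and the one I would treat most carefully: naively the auxiliary level system spans $\Theta(\log_{\beta}(nC))$ levels, which would degrade the additive term to $\log_{\beta}(nC)$. This is exactly Lemma \ref{howmanylevels}: the highest visited level is at most $i_{crit}+1 \le u+1$ by Observation \ref{ob:added}, while the lowest visited level $b'$ lies within $2\log_{\beta} n + 1$ of $b$, which one obtains by comparing (via Equations (\ref{0.5par})--(\ref{3par})) the cost of the last set chosen by greedy with the cost of whatever set covered one of its surviving elements just before the reset, and bounding both against $n$ and $\beta^b$. With $L = O(\log_{\beta} n)$ in hand, the sum collapses to the claimed $O(|\tilde{\mathcal{U}}| \cdot f + \log_{\beta} n)$.
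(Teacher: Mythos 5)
Your decomposition into three phases, the cost bound for each, and the identification of Lemma \ref{howmanylevels} (i.e., $L = O(\log_\beta n)$) as the one genuinely nontrivial ingredient all match the paper's own argument. This is essentially the same proof.
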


\noindent Recall that we aim to show that the amortized running time is $O_{\epsilon}(f \cdot \ln n)$. By Observation \ref{onereset}, up to only one reset can occur every update step. Thus, the $O(\log_{\beta}n)$ addend of the running time of the partial reset is not the bottleneck, meaning we can afford an extra $O(\log_{\beta}n)$ time each update step and that would not change the asymptotic amortized update time. Since we view each element in $\tilde{\mathcal{U}}$ as changing its level in the reset, and we spend $O(f)$ time for each element level change, we conclude that the $O(|\tilde{\mathcal{U}}| \cdot f)$ addend is not the bottleneck either. Thus, the time spent by implementing the partial resets throughout the update sequence is dominated by the number of element level changes throughout the sequence multiplied by $O(f)$. 

\bigskip

\noindent We conclude this section with the following corollary:
\begin{cor} \label{updttime}
The amortized update time of our algorithm is $O(\eps^{-5} \cdot f \cdot \ln n) = O_\eps(f \cdot \ln n)$.
\end{cor}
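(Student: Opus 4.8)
The plan is to bound the total work of the algorithm by charging every unit of it, up to a factor of $O(\epsilon^{-2}f)$, against the amortized number of \emph{element level changes}, which Corollary \ref{cor1} already bounds by $O(\epsilon^{-3}\cdot\ln n)$ per update step. I would organize the work into four categories: (i) the direct bookkeeping of an insertion or deletion; (ii) the work triggered by an element level change (updating $C_S$, the $N^j(S)$ counters, the approximate $N_j(S)$ counters, and the global counters $\mathcal{D},\mathcal{C},\mathcal{S},\mathcal{L}_j,\mathcal{E}_j$, etc.); (iii) the work done inside a local rise; and (iv) the work done inside a partial reset, including locating a half-critical level $i_{crit}$ and computing $\tilde{\mathcal{U}}$ and $\tilde{\mathcal{F}}$.

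For (i): an insertion or deletion of $e$ scans the up-to-$f$ sets of $\mathcal{F}_e$, updates $O(1)$ global counters, and causes at most one element level change (Claim \ref{clm0}); so its cost is $O(f)$ plus a single element level change, i.e.\ $O(f)$ amortized, which is absorbed. For (ii): I would simply invoke Corollary \ref{finalrt} --- the counter-maintenance scheme of Section \ref{updatetime} (using the binary counter $C_S$, the partition $Z_1,Z_2,\dots$ of the relevant levels, and properties \textbf{P1},\textbf{P2}) handles each element level change in $O(\epsilon^{-2}\cdot f)$ amortized time, so multiplying by the $O(\epsilon^{-3}\cdot\ln n)$ amortized element level changes gives $O(\epsilon^{-5}\cdot f\cdot\ln n)$ amortized. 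For (iii): a local rise of $S$ to level $j+1$ costs time linear in the number of $N_j(S)$ counters touched by that update step (already paid for under (ii)), plus the cost of scanning the new $Cov(S)=N_{j+1}(S)$ and, for each of its elements, traversing its frequency list; since every element of the new $Cov(S)$ is viewed as undergoing an element level change, this too is subsumed by (ii). By Observations \ref{clm3} and \ref{upper}, the extra cost attributable to set level changes (maintaining the old and new $Cov$ sets) is dominated by that of element level changes, so it does not add a new term.

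For (iv): by Corollary \ref{rstime} a single partial reset (including finding $i_{crit}$ and building $\tilde{\mathcal{U}},\tilde{\mathcal{F}}$) takes $O(|\tilde{\mathcal{U}}|\cdot f + \log_{\beta}n)$ time, and by Observation \ref{onereset} at most one partial reset happens per update step. Hence the $O(\log_{\beta}n)$ additive term contributes only $O(\log_{\beta}n)=O(\ln n)$ amortized per update step, which is within the target bound; and the $O(|\tilde{\mathcal{U}}|\cdot f)$ term is charged against the $|\tilde{\mathcal{U}}|$ element level changes the reset induces --- recall (remark after Observation \ref{obs2}) that every element entering $\tilde{\mathcal{U}}$ is counted as changing its level, and these are precisely the level changes already paid for by the token scheme of Lemma \ref{lem3} --- at cost $O(f)$ each, so it too is subsumed by (ii). Summing the four categories yields amortized update time $O(\epsilon^{-5}\cdot f\cdot\ln n)=O_\epsilon(f\cdot\ln n)$. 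The step I expect to need the most care is the accounting in (iv): one must verify there is no double counting between the $|\tilde{\mathcal{U}}|$ reset-induced level changes and any other contribution, and that the per-element-level-change budget of $O(f)$ established for counter maintenance also covers the reset's per-element scanning of $\mathcal{F}_e$ --- it does, since both operations amount to traversing $\mathcal{F}_e$ a constant number of times --- so no new asymptotic term appears. (For the DS problem the same accounting applies with $f$ replaced by $\Delta$.)
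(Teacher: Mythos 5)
Your proposal is correct and follows essentially the same route as the paper: Corollary \ref{updttime} is simply the culmination of Section \ref{updatetime}, obtained by combining Corollary \ref{cor1} (amortized $O(\eps^{-3}\ln n)$ element level changes) with Corollary \ref{finalrt} (amortized $O(\eps^{-2}f)$ counter-maintenance per level change), and absorbing local-rise and partial-reset work via Observations \ref{upper}, \ref{clm3}, \ref{onereset} and Corollary \ref{rstime}, exactly as you lay out. The one ingredient you leave implicit is the periodic \emph{global} reset: the paper performs a global reset every $n_t$ steps precisely so that Observation \ref{global} holds, which feeds Claims \ref{epsfive} and \ref{negl1} and thereby underlies both Observation \ref{onereset} (at most one partial reset per step) and the guarantee that a half-critical level exists within the $O(\log_\beta n)$ search window; the global reset itself is paid for by an extra $\Theta(f)$ tokens per step, also subsumed. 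Since you invoke those downstream statements as black boxes, your accounting is sound, but a complete writeup would flag the global-reset cost as a fifth (dominated) category rather than leaving it entirely silent.
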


\section{Recourse Bound} \label{rcrs}

\noindent In this section we prove that the amortized recourse is $O_{\epsilon}(\log C)$, where we assume that $C<n$, otherwise we can use Observation \ref{clm3} to bound the amortized recourse by $O_{\epsilon}(\ln n)$. Consider a local rise of some $Cov(S)$ to level $j$. The elements in this newly created $Cov(S)$ could have arrived from any level $i$ such that $i<j$. Denote by $Cov_i^j(S)$ the set of elements that joined this local rise of $Cov(S)$ (to level $j$) that were covered at level $i$ (where $i<j$) right before the local rise. Let $Z_i^j(S) = |Cov_i^j(S)|$. Notice that $\sum_{i=0}^{j-1} Z_i^j(S) = |Cov(S)|$. We begin with the following important claim.

\begin{cl} \label{second}
$Z_i^j(S) \leq \lceil c(S) \cdot \beta^{i+2}\rceil$
\end{cl}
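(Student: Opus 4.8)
The plan is to argue by contradiction, exploiting Invariant~\ref{inv1} together with the fact that a local rise always goes to the \emph{highest} level for which the set is positive dirty. Suppose for contradiction that $Z_i^j(S) \geq \lceil c(S) \cdot \beta^{i+2}\rceil + 1 > c(S)\cdot\beta^{i+2}$ for some level $i < j$. The key observation is that every element counted in $Z_i^j(S)$ was covered at level exactly $i$ immediately before the local rise; in particular, each of these elements was covered at a level strictly less than $i+1$, so they all belong to $N_{i+1}(S)$ at that moment. Hence just before the local rise we have $|N_{i+1}(S)| \geq Z_i^j(S) > c(S)\cdot\beta^{i+2} = c(S)\cdot\beta^{(i+1)+1}$, which means $S$ was $(i+1)$-PD just before the rise.

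First I would note that $i+1 \leq j$, so the level $i+1$ is a candidate level for the local rise; and since the local rise goes to level $j$ (i.e.\ $S$ was $(j-1)$-PD and $j-1$ is the highest level witnessing positive dirtiness), we must have $i+1 \leq j-1$, i.e.\ $i+1$ is a level for which $S$ being $(i+1)$-PD is consistent with the algorithm only if $i+1$ is \emph{not} higher than $j-1$ — which it is not. So this alone is not yet a contradiction. The contradiction instead comes from timing: I would track back to the last moment before this local rise at which $Cov(S)$ was (re)created at its current level, call it $l_{cov}(S)$ before the rise (or use the fact that the relevant $N$ counters were "clean" right after the previous event affecting $S$, via Observation~\ref{obclean} or Observation~\ref{parclean} depending on whether $S$ last changed via local rise or partial reset). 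Right after that previous creation, $\frac{|N_{i+1}(S)|}{c(S)} < \beta^{i+2}$ by the "not PD" part of those observations (applied with the level $i+1 > l_{cov}(S)$). For $|N_{i+1}(S)|$ to have grown from below $c(S)\beta^{i+2}$ to strictly above $c(S)\beta^{i+2}$, it must have crossed the threshold $c(S)\beta^{i+2}$; but the moment it reaches $c(S)\beta^{i+2} = c(S)\beta^{(i+1)+1}$, the set $S$ becomes $(i+1)$-PD, and by Invariant~\ref{inv1} the algorithm would have performed a local rise of $S$ (to level at least $i+2$) before the current one, draining $N_{i+1}(S)$ and in particular moving all of $Cov_i^j(S)$'s eventual members up to a level $\geq i+2$ — contradicting that these elements were at level exactly $i$ right before the current local rise.

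The main obstacle I expect is handling the bookkeeping of \emph{which} elements are in $N_{i+1}(S)$ at the critical moment versus which end up in $Cov_i^j(S)$: an element in $Cov_i^j(S)$ is at level $i$ just before the current rise, but it could have bounced up and down in the interim, so I cannot simply say "these same elements were in $N_{i+1}(S)$ when the threshold was crossed." The clean way around this is to apply Invariant~\ref{inv1} directly at the time instant just before the current local rise: at that instant, all of $Cov_i^j(S)$ sits at level $i < i+1$, hence $|N_{i+1}(S)| \geq Z_i^j(S) > c(S)\cdot\beta^{i+2}$, so $S$ is $(i+1)$-PD at that instant. But the local rise is triggered because $S$ is $(j-1)$-PD and we take the highest dirty level; if $i+1 < j$ then $S$ is PD at two distinct levels and the algorithm picks the highest, so the rise would be to level $\geq j$ — fine — but then the rise takes $N_j(S) \supseteq N_{i+1}(S)$ as the new $Cov(S)$, so \emph{all} of $Cov_i^j(S)$ rises to level $j$, which is consistent, not contradictory. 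So the real point must be that the algorithm does not let $S$ reach this state: by the time $|N_{i+1}(S)|$ first exceeds $c(S)\beta^{i+2}$, a local rise (to level $\geq i+2$) is forced \emph{at that earlier update step}, and that earlier rise would have evacuated these elements to level $\geq i+2$; for them to be back at level $i$ now they would have had to drop, which by Observation~\ref{obs3} requires a partial reset up to a level $\geq i+2$, and Claim~\ref{clm1} then guarantees $\frac{|N_{i+1}(S)|}{c(S)} < \beta^{i+1} < \beta^{i+2}$ right after that reset — so the count is small again and we can restart the argument from there. Iterating (there are finitely many events), the threshold $c(S)\beta^{i+2}$ can never be strictly exceeded, giving $Z_i^j(S) \leq \lceil c(S)\beta^{i+2}\rceil$. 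Writing this iteration cleanly — picking the last reset or rise at a level $\geq i+2$ affecting $S$ before the current local rise and arguing the count stays bounded in between — is the crux of the proof.
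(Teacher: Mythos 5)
Your approach is essentially the paper's: both reduce to the observation $Cov_i^j(S)\subseteq N_{i+1}(S)$ at the moment of the rise and then argue that $|N_{i+1}(S)|$ can never exceed $\lceil c(S)\beta^{i+2}\rceil$, because the instant $S$ hits the $(i{+}1)$-PD threshold a local rise evacuates $N_{i+1}(S)$ and partial resets keep it below (via Claims~\ref{clm1},~\ref{clm2}). However, the paper's proof is cleaner and your sketch leaves two real gaps. The paper considers the first event after which $|N_{i+1}(S)|>\lceil c(S)\beta^{i+2}\rceil$ and splits on whether the increment was by $+1$ (an insertion, disposed of via Invariants~\ref{inv1} and~\ref{inv3}) or by a larger jump (a partial reset, disposed of via Claims~\ref{clm1},~\ref{clm2} and Invariant~\ref{inv3}); your backward/iterative "restart from the last reset" framing is doing the same thing less directly and is harder to make airtight.

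Concretely, gap (i): when you invoke Claim~\ref{clm1} to say "the count is small again right after that reset," Claim~\ref{clm1} only applies if $S$ \emph{participated} in that reset, and you never justify this. The paper's Case 2 closes this gap via Invariant~\ref{inv3}: if some element $e\in S$ was in $\tilde{\mathcal{U}}$ for that reset then $l_{cov}(S)\le l(e)\le i_{crit}$, hence $S\in\tilde{\mathcal{F}}$. Gap (ii): your "crossing the threshold" argument implicitly treats $|N_{i+1}(S)|$ as moving in unit steps, so that it must pass through the threshold exactly and trigger a rise; a partial reset can move the counter by more than one in a single step, and your sketch does not separately rule out that such a jump lands strictly above $\lceil c(S)\beta^{i+2}\rceil$ — this is precisely the paper's Case 2. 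A final wording nit: the counter can momentarily equal $\lceil c(S)\beta^{i+2}\rceil$, which is strictly above $c(S)\beta^{i+2}$ when the latter is not an integer, so "the threshold $c(S)\beta^{i+2}$ can never be strictly exceeded" is not quite right; the ceiling in the statement of the claim is there exactly to absorb this one extra integer.
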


\begin{proof}
Since $Cov_i^j(S) \subseteq N_{i+1}(S)$, we have $Z_i^j(S) \leq |N_{i+1}(S)|$, thus it suffices to prove that $|N_{i+1}(S)| \leq \lceil c(S) \cdot \beta^{i+2}\rceil$. Assume by contradiction that $|N_{i+1}(S)| > \lceil c(S) \cdot \beta^{i+2}\rceil$ right after time step $t$. We split to two cases. In the first case, right before time step $t$ we had $|N_{i+1}(S)| = \lceil c(S) \cdot \beta^{i+2}\rceil$, and then $|N_{i+1}(S)|$ grew by one (or more) at $t$ to exceed $\lceil c(S) \cdot \beta^{i+2}\rceil$. In the second case, right before time step $t$ we had $|N_{i+1}(S)| < \lceil c(S) \cdot \beta^{i+2}\rceil$, and then $|N_{i+1}(S)|$ grew by more than one at $t$ to exceed $\lceil c(S) \cdot \beta^{i+2}\rceil$. 

In the first case, when $|N_{i+1}(S)| = \lceil c(S) \cdot \beta^{i+2}\rceil$, if $l_{cov}(S) \leq i$ then $S$ is $(i+1)$-PD, thus a local rise would immediately occur (to level at least $i+1$), and consequently $|N_{i+1}(S)| = 0$ (since $N_{i+1}(S)$ would be taken as $Cov(S)$ at level $i+1$), a contradiction. If $l_{cov}(S) > i$, then by Invariant \ref{inv3}, $|N_{i+1}(S)| = 0$, a contradiction. In the second case, $|N_{i+1}(S)|$ can grow by more than one instantaneously only due to a partial reset (insertion can raise it by only one). By Claim \ref{clm2}, $|N_{i+1}(S)|$ cannot grow as a result of a partial reset with a half-critical level $i_{crit} \leq i$. Thus, the half-critical level of the partial reset must satisfy $i_{crit} \geq i+1$. If $S$ participated in this partial reset, then by Claim \ref{clm1}, following the reset $|N_{i+1}(S)| < c(S) \cdot \beta^{i+1}$, a contradiction. If $S$ did not participate in the reset, then $l_{cov}(S) > i_{crit}$, since $S$ contains elements that were in the set $\tilde{\mathcal{U}}$ of the reset (all elements that following the reset are in $N_{i+1}(S)$). But $S$ cannot have elements which it can cover in the set $\tilde{\mathcal{U}}$ in the reset, because that would mean they are covered at a level $\leq i_{crit}$ right before the reset, when $l_{cov}(S)>i_{crit}$, a contradiction to Invariant \ref{inv3}. 
\qed
\end{proof}

\begin{cl} \label{fourth}
The number of element level changes due to local rises from level $i$ to any level greater than $i$, throughout the whole sequence, is $O(\epsilon^{-2} \cdot K)$, where $K$ is the length of the update sequence.
\end{cl}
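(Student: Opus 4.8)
\textbf{Proof plan for Claim \ref{fourth}.} Write $T_i$ for the total number of element level changes due to local rises from level $i$ throughout the sequence; by definition $T_i=\sum Z_i^j(S)$, the sum ranging over all local rises (of all sets $S$) to some level $j>i$. The plan is to show that each individual local rise from level $i$ to level $j$ can be charged to only a $\beta^{i+2-j}$-fraction of the element level changes it creates \emph{to} level $j$, and then sum over $j$ using Lemma \ref{first} together with a geometric series. The first step is to upgrade Claim \ref{second} to a clean, rounding-free bound: if a local rise of $S$ to level $j$ is \emph{from level $i$}, i.e.\ $Z_i^j(S)\ge 1$, then $S$ contains some element at level exactly $i$, so Invariant \ref{inv3} forces $l_{cov}(S)\le i<i+1$, hence Invariant \ref{inv1} applies to level $i+1$ and gives $|N_{i+1}(S)|<c(S)\cdot\beta^{i+2}$. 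Since $Cov_i^j(S)\subseteq N_{i+1}(S)$ (both consist of elements of $S$ at a level $\le i$), we conclude
$$Z_i^j(S)\;\le\;|N_{i+1}(S)|\;<\;c(S)\cdot\beta^{i+2}.$$
(In particular $c(S)\beta^{i+2}>1$, so no ceiling term survives.)

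Next I would bring in Observation \ref{obclean}: right after a local rise of $S$ to level $j$ we have, by Equation (\ref{cleandom}), $|Cov(S)|\ge c(S)\cdot\beta^{j}$, i.e.\ $c(S)\le |Cov(S)|\cdot\beta^{-j}$. Plugging this into the previous display,
$$Z_i^j(S)\;<\;|Cov(S)|\cdot\beta^{\,i+2-j}.$$
Now fix a destination level $j>i$ and sum over all local rises to level $j$. The quantity $\sum |Cov(S)|$, over local rises to level $j$, is exactly the number of element level changes due to local rises to level $j$, which by Lemma \ref{first} is $O(\epsilon^{-1})$ amortized, i.e.\ $O(\epsilon^{-1}\cdot K)$ in total. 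Hence the contribution of destination $j$ to $T_i$ is $O(\epsilon^{-1}K)\cdot\beta^{\,i+2-j}$. Summing over all $j>i$,
$$T_i\;=\;\sum_{j>i} O(\epsilon^{-1}K)\cdot\beta^{\,i+2-j}\;=\;O(\epsilon^{-1}K)\cdot\beta^{2}\sum_{m\ge 1}\beta^{-m}\;=\;O(\epsilon^{-1}K)\cdot\frac{\beta^{2}}{\beta-1}\;=\;O(\epsilon^{-2}K),$$
using $\beta-1=\epsilon$ and $\beta=1+\epsilon<1.4$. This is exactly the claimed bound. The DS version is identical, replacing sets by vertices, $Cov$ by $Dom$, and $N_j(S)$ by $N_j(v)$ (Invariants \ref{inv1} and \ref{inv3} and Observation \ref{obclean} all hold verbatim for vertices).

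The only real subtlety — and the step I would be most careful about — is that a crude bound such as $Z_i^j(S)\le |Cov(S)|$ is \emph{not} good enough: summed over the $\Theta(\log_\beta n)$ destination levels it would lose a $\log n$ factor and only yield $O(\epsilon^{-1}K\log n)$. What makes the proof go through is precisely the decay factor $\beta^{i+2-j}$, which comes from Claim \ref{second}/Invariant \ref{inv1} (``the elements of a newly created $Cov(S)$ cannot all come from levels far below $j$''); this turns the sum over destinations into a convergent geometric series whose value is $\Theta(1/\epsilon)$, accounting for the extra $\epsilon^{-1}$ in the final bound. A minor point to double-check is the reduction to local rises that are genuinely ``from level $i$'' ($Z_i^j(S)\ge1$), which is what licenses the application of Invariant \ref{inv3}; local rises with $Z_i^j(S)=0$ contribute nothing and can be discarded at the outset.
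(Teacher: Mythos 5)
Your plan is essentially the paper's proof of Claim \ref{fourth}: bound $Z_i^j(S)$ by roughly $c(S)\beta^{i+2}$ via Claim \ref{second}, trade $c(S)$ for $|Cov(S)|\beta^{-j}$ using Observation \ref{obclean}, invoke Lemma \ref{first} for $L_j=O(\eps^{-1}K)$, and close with the geometric series over destination levels $j>i$. Your packaging $Z_i^j(S)<|Cov(S)|\,\beta^{i+2-j}$ is a clean way of running the same chain of inequalities that the paper carries out (the paper instead manipulates $\sum_p c(S_p)$ and absorbs the ceiling from Claim \ref{second} into a constant factor of $4$); these are the same argument.

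The step that does not hold as stated is the proposed ``rounding-free upgrade'' of Claim \ref{second}. You cite Invariant \ref{inv1} at level $i+1$ at the instant the local rise to level $j$ is executed, but the invariant is not guaranteed at that instant: by definition it is already violated at level $j-1$ (being $(j-1)$-PD is exactly a violation of Invariant \ref{inv1}, and is what triggered the rise), and the same update that created the $(j-1)$-PD condition can simultaneously push $|N_{i+1}(S)|$ up to $\lceil c(S)\beta^{i+2}\rceil$ for $i+1<j$, which strictly exceeds $c(S)\beta^{i+2}$ whenever the latter is non-integral. This is exactly why the paper states Claim \ref{second} with the ceiling and proves it through a case analysis (an insertion raises the counter by at most one; partial resets cannot push it past the bound) rather than by citing Invariant \ref{inv1} at the critical moment. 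The damage is only a constant: substituting the correct bound $Z_i^j(S)\le\lceil c(S)\beta^{i+2}\rceil$ into your derivation still yields $O(\eps^{-2}K)$, but the parenthetical ``no ceiling term survives'' should be dropped and the resulting constant-factor slack accounted for.
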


\begin{proof}
\noindent We denote by $L_j$ the number of element level changes in the whole sequence due to local rises to level $j$, and by $a_j$ the number of such local rises. Denote by $S_1,S_2,\ldots,S_{a_j}$ the sets that perform local rises to level $j$ (in chronological order), where possibly $S_{i_1}$ and $S_{i_2}$ are the same set (for any $i_1$ and $i_2$). Therefore, by Observation \ref{obclean} we know that $L_j \geq \sum_{p=1}^{a_j}\beta^j \cdot c(S_p)$. For any $i<j$: 

\begin{align*}
L_j & \geq \beta^j \cdot \sum_{p=1}^{a_j}c(S_p) \\ & \geq 2 \cdot \lceil \beta^i \rceil \cdot \frac{1}{4} \cdot \beta^{j-i} \cdot \sum_{p=1}^{a_j}c(S_p) \\ & = \frac{1}{4} \cdot \beta^{j-i} \cdot \sum_{p=1}^{a_j}2 \cdot \lceil \beta^i \rceil \cdot c(S_p) \\ & \geq \frac{1}{4} \cdot \beta^{j-i} \cdot \sum_{p=1}^{a_j} \lceil \beta^{i+2} \rceil \cdot c(S_p) \\ & \geq \frac{1}{4} \cdot \beta^{j-i} \cdot \sum_{p=1}^{a_j} Z_i^j(S_p),
\end{align*}

\noindent where the second transition is due to $\lceil \beta^i \rceil \geq 2$, the fourth for $\epsilon \leq \sqrt{2}-1$, and the last due to Claim \ref{second}. Denoting by $Z_i^j$ the total number of level changes of elements that joined any local rise to level $j$ from level $i$ (i.e., $Z_i^j = \sum_{p=1}^{a_j} Z_i^j(S_p)$), we conclude that:

\begin{equation} \label{eqfirst}
Z_i^j \leq 4 \cdot L_j \cdot \frac{1}{\beta^{j-i}}~.
\end{equation}

\noindent By Lemma \ref{first},   $L_j = O(\epsilon^{-1} \cdot K)$. Let $Z_i = \sum_{j=i+1}^{\infty} Z_i^j$, i.e., $Z_i$ is the number of element level changes due to local rises from level $i$ to any level greater than $i$, throughout the whole sequence. So:

\begin{equation*} 
Z_i  ~=~ \sum_{j=i+1}^{\infty} Z_i^j ~\leq~ \sum_{j=i+1}^{\infty} 4 \cdot L_j \cdot \frac{1}{\beta^{j-i}} ~=~ O(\epsilon^{-2} \cdot K)
\end{equation*}
\qed
\end{proof}

\noindent Observe that the number of element level changes due to insertions/deletions throughout the whole sequence is $O(K)$, since only the inserted/deleted element ``changes" its level.
Combining this observation
with Claim \ref{fourth}, we derive the following corollary.  

\begin{cor} \label{sixth}
The total number of element level changes due to insertions, deletions and local rises from level $i$ to any level, throughout the whole sequence, is $O(\epsilon^{-2} \cdot K)$. 
\end{cor}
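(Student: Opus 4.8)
\textbf{Proof proposal for Corollary \ref{sixth}.}
The plan is to simply glue together two facts that are already in hand. First, I would invoke Claim \ref{fourth}, which bounds by $O(\epsilon^{-2}\cdot K)$ the number of element level changes caused by local rises that take an element \emph{out of} level $i$. Before doing so I would remark that this is exactly the quantity named in the corollary: every local rise creates $Cov(S)$ at some level $j+1$ and takes in precisely the elements of $N_{j+1}(S)=\{e\in S:\,l(e)<j+1\}$, all of which are moved to level $j+1$; so an element that was at level $i$ and participates in a local rise necessarily moves to a level $j+1\ge i+1>i$. Hence ``local rises from level $i$ to any level'' and ``local rises from level $i$ to any level greater than $i$'' describe the same set of events, and Claim \ref{fourth} applies verbatim.

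Second, I would use the elementary observation already stated in the excerpt: in a single update step (an insertion or a deletion of an element) only one element — the inserted or deleted one — is regarded as changing its level as a direct consequence of the update itself. Over the whole sequence of $K$ update steps this therefore contributes at most $K$ element level changes. (For DS the same holds: upon an edge insertion or deletion at most one vertex changes its dominated level because of the update proper, by Claim \ref{clm0} and the remark following it.)

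Finally, I would add the two estimates: the contribution of insertions and deletions is $O(K)$, the contribution of local rises out of level $i$ is $O(\epsilon^{-2}\cdot K)$ by Claim \ref{fourth}, and since $\epsilon<1$ the sum is $O(\epsilon^{-2}\cdot K)+O(K)=O(\epsilon^{-2}\cdot K)$, which is the asserted bound. I do not expect any genuine obstacle here — all the real work sits inside Claim \ref{fourth} (itself built on Lemma \ref{first} via the token scheme, and on the ``not-too-concentrated'' bound of Claim \ref{second}); the only point requiring a sentence of care is the identification of the event set in the previous paragraph, namely that a local rise always moves an element strictly upward, so nothing is double-counted or missed when we pass from Claim \ref{fourth} to the statement of the corollary.
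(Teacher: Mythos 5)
Your proposal is correct and matches the paper's proof: the paper likewise combines Claim~\ref{fourth} (local rises out of level $i$ contribute $O(\epsilon^{-2}K)$, and, as you correctly note, these are necessarily upward moves) with the observation that insertions/deletions contribute only $O(K)$, then sums. Your extra sentence verifying that a local rise always moves an element strictly upward is a sound and slightly more explicit justification of a point the paper leaves implicit.
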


\noindent Let $q$ denote the number of partial resets throughout the entire sequence, let $i_p$ denote the half-critical level of the $p$th reset, and denote by $S_p \subseteq \mathcal{F}$ the collection of sets \textbf{at level $\geq 0$} (meaning in the SC) that participate in the $p$th reset.

\begin{lem} \label{seventh}
$\sum_{p=1}^q |S_p| = O(\epsilon^{-4} \cdot \log C \cdot K)$.
\end{lem}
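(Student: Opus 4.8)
The plan is to charge the cost of each set that participates in a partial reset (as a set in the cover) to element level changes due to insertions, deletions, and local rises, whose total number is $O(\epsilon^{-2}K)$ by Corollary \ref{sixth}. The key difficulty over the unweighted case is that within a single level $j$, the Cov sets can have wildly different cardinalities — anywhere from roughly $c(S)\beta^j$ up to $\beta^{j+2}c(S)$, and since $c(S)$ ranges over $[\frac1C,1]$, two sets covering at the same level $j$ can differ in size by a factor of $\Theta(C)$. So a naive ``one leaving element per set'' charging would cost a factor of $C$; we must instead bucket sets by cost and show each bucket contributes only $O(\epsilon^{-?})$ per unit of ``dirt'', and that there are only $O(\log C)$ buckets.

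Here is the plan in order. First, I would fix a partial reset $p$ up to half-critical level $i_p$, and recall from the definition of half-critical (Equation (\ref{eq1half}) with $j=0$) that $\mathcal{D}_0^{i_p} \ge \frac12\cdot\frac\epsilon\beta\cdot\mathcal{C}_0^{i_p}$, where $\mathcal{C}_0^{i_p}$ is the total cost of the sets in $S_p$. So $\mathcal{C}_0^{i_p}$ — the total \emph{cost} of the participating sets — is bounded by $O(\epsilon^{-1})\cdot\mathcal{D}_0^{i_p}$, i.e.\ by $O(\epsilon^{-1})$ times the weighted number of elements that left their initial Cov sets at levels $\le i_p$. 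Second, I would convert this bound on total cost into a bound on the \emph{number} of participating sets, level by level: for a set $S$ covering at level $j$, $c(S) \ge \beta^{j_{\min}^S}$ roughly, but more usefully, when $Cov(S)$ was last created it had $|Cov(S)| \ge c(S)\beta^j \ge 1$, so $c(S) \le |Cov(S)|/\beta^j$; combining gives that sets of cost in a dyadic range $[2^{-k-1}, 2^{-k}]$ covering at level $j$ each have cost $\ge 2^{-k-1}$, so the number of them in $S_p$ at level $j$ is at most $2^{k+1}\cdot\mathcal{C}_j$. Since costs lie in $[\frac1C,1]$, there are only $O(\log C)$ dyadic cost-ranges, so I would partition $S_p$ into $O(\log C)$ groups by cost, and aim to charge each group separately.

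Third — and this is where the real work is — for each dyadic cost group I need a lower bound on the dirt generated \emph{per set} in that group before the set can participate in the reset, so that $|S_p|$ restricted to that group is bounded by $O(\epsilon^{-?})$ times the dirt. The mechanism: a set $S$ at level $j$ can only leave the cover via a reset up to level $\ge j$; between the creation of $Cov(S)$ and that reset, either (a) enough elements ($\ge \frac{\epsilon}{\beta}$-fraction, weighted, of its initial size, which is $\ge c(S)\beta^j$) have left $Cov(S)$, contributing $\ge \frac\epsilon\beta \cdot c(S)\beta^j / \beta^j = \frac{\epsilon}{\beta}c(S)$ to $\mathcal{D}_j$ — but this only gives the cost-weighted bound already obtained; or (b) the set's Cov set was freshly created by the \emph{current} reset or a recent local rise, in which case I charge $|Cov(S)|$ to the element level changes that created it (each such element changed its level, by Corollary \ref{sixth}). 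I expect the main obstacle to be handling sets that participate in a reset without having shed an $\epsilon$-fraction of their own elements — these must be charged to the half-critical condition globally (via $\mathcal{D}_0^{i_p}$) together with the dyadic bucketing, and making the $\log C$ rather than $C$ factor come out requires that the per-bucket count $2^{k+1}\mathcal{C}_j$ telescopes correctly against the $\mathcal{D}$-to-$\mathcal{C}$ conversion without re-introducing a $C$ factor. Finally I would sum over all $O(\log C)$ buckets, over all levels $\le i_p$, and over all resets $p$, using that $\sum_p \mathcal{D}_0^{i_p}$ (dirt consumed and reset across all partial resets) is bounded by the total dirt ever generated, which is $O(\epsilon^{-2}K)$ by Corollary \ref{sixth} (every unit of dirt at level $i$ corresponds to an element leaving a Cov set at level $i$, i.e.\ an element level change of the type counted there), yielding the claimed $O(\epsilon^{-4}\log C\cdot K)$ after accounting for the $\epsilon^{-1}$ from half-criticality, an $\epsilon^{-1}$ from local-rise counting, and the $\epsilon^{-2}$ from Corollary \ref{sixth}.
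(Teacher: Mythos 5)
Your setup is right: the half-critical condition gives $\mathcal{C}_0^{i_p} = O(\epsilon^{-1})\cdot\mathcal{D}_0^{i_p}$, and the crux is converting this bound on the total \emph{cost} of the participating sets into a bound on their \emph{number} while losing only a factor of $O(\log C)$ rather than $C$. But the step you yourself describe as ``where the real work is'' and ``the main obstacle'' is precisely the content of the lemma, and your proposal leaves it unresolved. Dyadic-cost bucketing does not get you there: the number of level-$j$ sets with cost in $[2^{-k-1},2^{-k}]$ is at most $2^{k+1}$ times the cost of those sets, but summing over the $O(\log C)$ buckets can reintroduce a factor of $2^{O(\log C)}=\Theta(C)$ unless one shows that small-cost sets only occur at high levels and then exploits the implied larger Cov-set sizes --- you flag this telescoping as the obstacle but do not supply the argument. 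Nor does your per-set ``mechanism'' apply: half-criticality is a global averaged condition, so a set can participate in a reset without having shed any of its own elements, and case~(b) charges to a quantity that is not obviously controlled. Finally, the closing accounting is off by a factor of $\epsilon^{-1}$: summing Corollary~\ref{sixth} over levels gives total dirt $O(\epsilon^{-3}K)$, not $O(\epsilon^{-2}K)$, since a level change from level $j$ contributes $1/\beta^j$ to $\mathcal{D}$ and $\sum_j \beta^{-j}=O(\epsilon^{-1})$.

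What the paper does instead is re-run the inductive, level-by-level token-redistribution argument of Lemma~\ref{lem3}, paying for the \emph{number of covering sets} at each level instead of the number of covered elements, and --- this is the device your proposal is missing --- making the token payment per departing element \emph{level-dependent}: an element leaving a Cov set at level $i\leq\lceil\log_\beta C\rceil$ pays $\Theta(\epsilon^{-1})$ tokens, whereas one leaving from level $i=\lceil\log_\beta C\rceil+i'$ with $i'\geq 1$ pays only $\Theta(\epsilon^{-1}\beta^{-i'})$. The geometric decay caps the tokens issued across all high levels at $O(\epsilon^{-4}K)$, while the $O(\log_\beta C)$ low levels each issue $O(\epsilon^{-3}K)$, giving $O(\epsilon^{-4}\log C\cdot K)$. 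Fewer tokens suffice high up because, at creation, any covering set at level $j$ has $c(S)\beta^{j+2}>|Cov(S)|\geq 1$, and for $j>\lceil\log_\beta C\rceil$ the constraint $c(S)\geq 1/C$ additionally forces $|Cov(S)|\geq c(S)\beta^j\geq\beta^{j-\lceil\log_\beta C\rceil}$; together these give $c(S)\beta^{\lceil\log_\beta C\rceil+2}>1$, hence $|\mathcal{S}_j|<\mathcal{C}_j\cdot\beta^{\lceil\log_\beta C\rceil+2}$ --- a \emph{level-independent} multiple of cost. That is exactly the structural fact needed to make the $\mathcal{D}$-to-$|\mathcal{S}|$ conversion telescope, and it is exposed by organizing by level rather than by cost bucket.
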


\begin{proof}
Consider the following token scheme where in each update step we give $O(\epsilon^{-4} \cdot \log C)$ tokens, and every time a covering set participates in a reset it pays a token. Thus, proving that we do not run out of tokens would conclude the proof. 

First, let us fix an arbitrary level $i$ in the range $[0,\ceil{\log_\beta C}]$.
Say every time an element changes its level from $i$ to any other level due to insertion, deletion or local rise, it must pay 
$2 \cdot \frac{\beta^3}{\epsilon}$ tokens to its previous covering level, $i$.
By Corollary \ref{sixth}, the total number of element level changes due to insertions, deletions and local rises from level $i$ to any level is $c_{\epsilon} \cdot K$, where $c_{\epsilon} = O(\epsilon^{-2})$. Hence, the total number of tokens paid in this way to Cov sets of elements is $O(\frac{\beta^3}{\epsilon} \cdot  \epsilon^{-2} \cdot K) = O(\eps^{-3} \cdot K)$. 
Over all levels in the range $[0,\ceil{\log_\beta C}]$, 
the total number of tokens paid to Cov sets is $O(\epsilon^{-3} \cdot K \cdot \log_\beta C) = O(\epsilon^{-4} \cdot \log C  \cdot K)$.

Next, let us fix an arbitrary level $i$ in the complementary range $[\ceil{\log_\beta C}+1, \ceil{\log_\beta (n \cdot C)}]$, where notice that $\ceil{\log_\beta (n \cdot C)}$ is the highest possible level.
Write $i = \ceil{\log_\beta C} + i'$, where $1 \le i' \le \ceil{\log_\beta (n \cdot C)} - \ceil{\log_\beta C}$.
Say every time an element changes its level from $i = \ceil{\log_\beta C} + i'$ to any other level due to insertion, deletion or local rise, it must pay 
$2 \cdot \frac{\beta^3}{\epsilon} \cdot \frac{1}{\beta^{i'}}$ tokens to its previous covering level, $i$. By Corollary \ref{sixth}, the total number of element level changes due to insertions, deletions and local rises from level $i$ to any level is $c_{\epsilon} \cdot K$, where $c_{\epsilon} = O(\epsilon^{-2})$. Hence, the total number of tokens paid in this way to Cov sets of elements is 
$O(\frac{\beta^3}{\epsilon} \cdot \frac{1}{\beta^{i'}} \cdot  \epsilon^{-2}  \cdot K)
= O(\epsilon^{-3} \cdot \frac{K}{\beta^{i'}})$. Over all levels in
the range $[\ceil{\log_\beta C}+1, \ceil{\log_\beta (n \cdot C)}]$, the total number of tokens paid to Cov sets is 
$$\sum_{i = \ceil{\log_\beta C}+1}^{\ceil{\log_\beta (n \cdot C)}} O\left(\epsilon^{-3} \cdot \frac{K}{\beta^{i'}}\right) 
~=~ \sum_{i' = 1}^{\ceil{\log_\beta (n \cdot C)} - \ceil{\log_\beta C}} O\left(\epsilon^{-3} \cdot \frac{K}{\beta^{i'}}\right) 
~<~ \sum_{i' = 1}^{\infty} O\left(\epsilon^{-3} \cdot \frac{K}{\beta^{i'}}\right) 
~=~ O(\epsilon^{-4} \cdot K).$$ 

Next, we show that these tokens can be redistributed so that every set $S$ that participates in a partial reset as a covering set gets a fresh token, which would complete the proof. Recall that once an element $e$ leaves its covering set $S'$ at a level $i$, we increment $\mathcal{D}_i^i$ by $\frac{1}{\beta^i}$. Since $i_{crit}$ is half-critical, for every $j \leq i_{crit}$ we have $\mathcal{D}_j^{i_{crit}} \geq \frac{1}{2} \cdot \frac{\epsilon}{\beta} \cdot \mathcal{C}_j^{i_{crit}}$. We argue inductively on $j$, for any $j = i_{crit}, i_{crit} - 1, \ldots, 0$ that we can redistribute one token for each set covering at a level between $j$ and $i_{crit}$ right before the reset, by using \emph{only} tokens obtained from elements that left levels between $j$ and $i_{crit}$ that were necessary to raise $\mathcal{D}_{j}^{i_{crit}}$ to \emph{exactly} $\frac{1}{2} \cdot \frac{\epsilon}{\beta} \cdot \mathcal{C}_{j}^{i_{crit}}$. 

For the basis $j=i_{crit}$, since level $i_{crit}$ is half-critical (and thus half-dirty as well), $\mathcal{D}_{i_{crit}}^{i_{crit}} \geq \frac{1}{2} \cdot \frac{\epsilon}{\beta} \cdot \mathcal{C}_{i_{crit}}^{i_{crit}}$.
Consider first level $i_{crit}$ in the range $[0,\ceil{\log_\beta C}]$. Since each leaving element from level $i_{crit}$ gives $\frac{2\beta^3}{\epsilon}$ tokens, level $i_{crit}$ has obtained $\mathcal{D}_{i_{crit}}^{i_{crit}} \cdot \beta^{i_{crit}} \cdot \frac{2\beta^3}{\epsilon}$ tokens. And since level $i_{crit}$ is half-dirty, it has obtained $\geq \frac{1}{2} \cdot \frac{\epsilon}{\beta} \cdot \mathcal{C}_{i_{crit}}^{i_{crit}} \cdot \beta^{i_{crit}} \cdot \frac{2\beta^3}{\epsilon} = \mathcal{C}_{i_{crit}}^{i_{crit}} \cdot \beta^{i_{crit}+2}$ tokens. Notice that for any covering set $S$ such that $l_{cov}(S) = i_{crit}$, we know that $c(S) \cdot \beta^{i_{crit}+2} > 1$, otherwise the set $S$ would be PD when created. Summing on all sets which cover at level $i_{crit}$, we get that $\mathcal{C}_{i_{crit}}^{i_{crit}} \cdot \beta^{i_{crit}+2} > |\mathcal{S}_{i_{crit}}^{i_{crit}}|$. Thus, we have more than $|\mathcal{S}_{i_{crit}}^{i_{crit}}|$ tokens by using the tokens obtained by \emph{only} $\frac{1}{2} \cdot \frac{\epsilon}{\beta} \cdot \mathcal{C}_{i_{crit}}^{i_{crit}} \cdot \beta^{i_{crit}}$ elements leaving the level, and we save the rest. Now consider level $i_{crit} > \ceil{\log_\beta C}$, and write $i_{crit} = \ceil{\log_\beta C} + i'$. Since each leaving element from level $i_{crit}$ gives $\frac{2\beta^3}{\epsilon} \cdot \frac{1}{\beta^{i'}}$ tokens, level $i_{crit}$ has obtained $\mathcal{D}_{i_{crit}}^{i_{crit}} \cdot \beta^{i_{crit}} \cdot \frac{2\beta^3}{\epsilon} \cdot \frac{1}{\beta^{i'}}$ tokens. And since level $i_{crit}$ is half-dirty, it has obtained $\geq \frac{1}{2} \cdot \frac{\epsilon}{\beta} \cdot \mathcal{C}_{i_{crit}}^{i_{crit}} \cdot \beta^{i_{crit}} \cdot \frac{2\beta^3}{\epsilon} \cdot \frac{1}{\beta^{i'}} = \mathcal{C}_{i_{crit}}^{i_{crit}} \cdot \beta^{i_{crit} - i' +2}$ tokens. Consider a set $S$ such that $l_{cov}(S)=i_{crit}$. By Observations \ref{obclean} and \ref{parclean}, we know that when $Cov(S)$ was created, we had:

$$|Cov(S)| \geq c(S) \cdot \beta^{i_{crit}} = c(S) \cdot \beta^{\ceil{\log_\beta C} + i'} \geq c(S) \cdot C \cdot \beta^{i'} \geq \beta^{i'}$$

\noindent We also know that $|Cov(S)| < c(S) \cdot \beta^{i_{crit}+2}$ (no PD sets). Thus, we get that $c(S) \cdot \beta^{i_{crit}+2} > \beta^{i'}$, meaning $c(S) \cdot \beta^{i_{crit}-i'+2} > 1$. Summing up on all $S \in \mathcal{S}_{i_{crit}}^{i_{crit}}$, we get that $\mathcal{C}_{i_{crit}}^{i_{crit}} \cdot \beta^{i_{crit}-i'+2} > |\mathcal{S}_{i_{crit}}^{i_{crit}}|$. Therefore, we have more than $|\mathcal{S}_{i_{crit}}^{i_{crit}}|$ tokens by using the tokens obtained by \emph{only} $\frac{1}{2} \cdot \frac{\epsilon}{\beta} \cdot \mathcal{C}_{i_{crit}}^{i_{crit}} \cdot \beta^{i_{crit}}$ elements leaving the level, and we save the rest.

For the induction step, assume the claim holds inductively for some $1 \leq j \leq i_{crit}$, and  prove it for $j-1$. Since level $i_{crit}$ is half-critical, $\mathcal{D}_{j}^{i_{crit}} \geq \frac{1}{2} \cdot \frac{\epsilon}{\beta} \cdot \mathcal{C}_{j}^{i_{crit}}$. Assume that $\mathcal{D}_j^{i_{crit}} = \frac{1}{2} \cdot \frac{\epsilon}{\beta} \cdot \mathcal{C}_j^{i_{crit}} + X$, for some $X \geq 0$. By the induction hypothesis, we have not yet used the tokens from leaving elements that caused the overhead spare of $X$ for $\mathcal{D}_j^{i_{crit}}$. These elements left from levels between $j$ and $i_{crit}$, and we want to give a lower bound to the number of tokens these leaving elements provide.  To lower bound the number of tokens the ``extra" leaving elements provide, given that they contribute $X$ to $\mathcal{D}_j^{i_{crit}}$, we must find the level between $j$ and $i_{crit}$ that \emph{minimizes} the ratio of tokens per dirt unit. Notice that for each level $l > \ceil{\log_\beta C}$, the ratio of $l$ denoted by $r_l$ is:

$$r_l = \frac{2 \cdot \beta^3}{\epsilon \cdot \beta^{l - \ceil{\log_\beta C}}} \cdot \frac{\beta^l}{1} = \frac{2 \cdot \beta^{\ceil{\log_\beta C} + 3}}{\epsilon},$$

\noindent meaning the ratio is independent of the level (if it is above $\ceil{\log_\beta C}$). For each level $l$ in the range $[0,\ceil{\log_\beta C}]$, the ratio is:

$$r_l = \frac{2 \cdot \beta^3}{\epsilon} \cdot \frac{\beta^l}{1} = \frac{2 \cdot \beta^{l + 3}}{\epsilon},$$  

\noindent meaning the smaller the level the smaller the ratio. We split to two cases, where in the first $j-1$ is in the range $[0,\ceil{\log_\beta C}]$, and in the second $j-1 > \ceil{\log_\beta C}$. In the first case, level $j$ minimizes the ratio. It could be though that $j-1 = \ceil{\log_\beta C}$, in that case we know that the ratio for $j$, denoted $r_j$, satisfies:

$$ r_j \geq \frac{2 \cdot \beta^{j + 2}}{\epsilon}.$$

In the second case, since $j-1 > \ceil{\log_\beta C}$, then any level between $j$ and $i_{crit}$ is also $> \ceil{\log_\beta C}$. Therefore the ratio of any level higher than $j-1$ is the same, $\frac{2 \cdot \beta^{\ceil{\log_\beta C} + 3}}{\epsilon}$. Multiplying the ratios by $X$, we get that the minimum number of obtained tokens from the ``extra" leaving elements for the first and second case is $X \cdot \frac{2 \cdot \beta^{j + 2}}{\epsilon}$ and $X \cdot \frac{2 \cdot \beta^{\ceil{\log_\beta C} + 3}}{\epsilon}$ respectively.

In addition to the tokens not used yet from the overhead of $X$, the tokens obtained from elements leaving level $j-1$ have not been used yet. Since $\mathcal{D}_{j-1}^{i_{crit}} \geq \frac{1}{2} \cdot \frac{\epsilon}{\beta} \cdot \mathcal{C}_{j-1}^{i_{crit}}$, we can write:

$$\mathcal{D}_{j}^{i_{crit}} + \mathcal{D}_{j-1}^{j-1} = \frac{1}{2} \cdot \frac{\epsilon}{\beta} \cdot (\mathcal{C}_{j}^{i_{crit}} + \mathcal{C}_{j-1}^{j-1}) + Y,$$

\noindent where $Y \geq 0$, and we want to prove that the tokens obtained from this overhead of $Y$ are not needed for level $j-1$ (and we can save them for later). Since $\mathcal{D}_j^{i_{crit}} = \frac{1}{2} \cdot \frac{\epsilon}{\beta} \cdot \mathcal{C}_j^{i_{crit}} + X$, we get that:

$$\frac{1}{2} \cdot \frac{\epsilon}{\beta} \cdot \mathcal{C}_j^{i_{crit}} + X + \mathcal{D}_{j-1}^{j-1} = \frac{1}{2} \cdot \frac{\epsilon}{\beta} \cdot (\mathcal{C}_{j}^{i_{crit}} + \mathcal{C}_{j-1}^{j-1}) + Y,$$

\noindent and thus:

\begin{equation} \label{dj}
\mathcal{D}_{j-1}^{j-1} = \frac{1}{2} \cdot \frac{\epsilon}{\beta} \cdot \mathcal{C}_{j-1}^{j-1} - X + Y
\end{equation}

In the first case, where $j-1$ is in the range $[0,\ceil{\log_\beta C}]$, level $j-1$ has obtained $\mathcal{D}_{j-1}^{j-1} \cdot \frac{2\beta^{j+2}}{\epsilon}$ tokens. Thus, by Equation (\ref{dj}), it obtained $(\frac{1}{2} \cdot \frac{\epsilon}{\beta} \cdot \mathcal{C}_{j-1}^{j-1} - X + Y) \cdot \frac{2\beta^{j+2}}{\epsilon}$ tokens, meaning $\beta^{j+1} \cdot \mathcal{C}_{j-1}^{j-1} - \frac{2\beta^{j+2}}{\epsilon} \cdot X + \frac{2\beta^{j+2}}{\epsilon} \cdot Y$ tokens. Since the overhead of $X$ gave us $\geq \frac{2\beta^{j+2}}{\epsilon} \cdot X$ tokens, we have a total of $\geq \beta^{j+1} \cdot \mathcal{C}_{j-1}^{j-1} + \frac{2\beta^{j+2}}{\epsilon} \cdot Y$ tokens. Notice that for any covering set $S$ such that $l_{cov}(S) = j-1$, we know that $c(S) \cdot \beta^{j+1} > 1$, otherwise the set $S$ would be PD when created. Summing on all sets which cover at level $j-1$, we get that $\mathcal{C}_{j-1}^{j-1} \cdot \beta^{j+1} > |\mathcal{S}_{j-1}^{j-1}|$. Thus, we have more than $|\mathcal{S}_{j-1}^{j-1}|$ tokens from which we can redistribute one token for each set covering at level $j-1$ right before the reset, \emph{without} using the tokens from leaving elements which caused the overhead of $Y$, and so the induction step holds for the first case.

In the second case, where $j-1 > \ceil{\log_\beta C}$, level $j-1$ has obtained $\mathcal{D}_{j-1}^{j-1} \cdot \frac{2 \cdot \beta^{\ceil{\log_\beta C} + 3}}{\epsilon}$ tokens. Thus, by Equation (\ref{dj}), it obtained $(\frac{1}{2} \cdot \frac{\epsilon}{\beta} \cdot \mathcal{C}_{j-1}^{j-1} - X + Y) \cdot \frac{2 \cdot \beta^{\ceil{\log_\beta C} + 3}}{\epsilon}$ tokens, meaning $\beta^{\ceil{\log_\beta C} + 2} \cdot \mathcal{C}_{j-1}^{j-1} - \frac{2 \cdot \beta^{\ceil{\log_\beta C} + 3}}{\epsilon} \cdot X + \frac{2 \cdot \beta^{\ceil{\log_\beta C} + 3}}{\epsilon} \cdot Y$ tokens. Since the overhead of $X$ gave us $\geq \frac{2 \cdot \beta^{\ceil{\log_\beta C} + 3}}{\epsilon} \cdot X$ tokens, we have a total of $\beta^{\ceil{\log_\beta C} + 2} \cdot \mathcal{C}_{j-1}^{j-1} + \frac{2 \cdot \beta^{\ceil{\log_\beta C} + 3}}{\epsilon} \cdot Y$ tokens. Now, write $j-1 = \ceil{\log_\beta C} + i'$, and consider a set $S$ such that $l_{cov}(S)=j-1$. By Observations \ref{obclean} and \ref{parclean}, we know that when $Cov(S)$ was created, we had:

$$|Cov(S)| \geq c(S) \cdot \beta^{j-1} = c(S) \cdot \beta^{\ceil{\log_\beta C} + i'} \geq c(S) \cdot C \cdot \beta^{i'} \geq \beta^{i'}$$

\noindent We also know that $|Cov(S)| < c(S) \cdot \beta^{j+1}$ (no PD sets). Thus, we get that $c(S) \cdot \beta^{j+1} > \beta^{i'}$, meaning $c(S) \cdot \beta^{\ceil{\log_\beta C}+2} > 1$. Summing up on all $S \in \mathcal{S}_{j-1}^{j-1}$, we get that $\mathcal{C}_{j-1}^{j-1} \cdot \beta^{\ceil{\log_\beta C}+2} > |\mathcal{S}_{j-1}^{j-1}|$. Thus, we have more than $|\mathcal{S}_{j-1}^{j-1}|$ tokens from which we can redistribute one token for each set covering at level $j-1$ right before the reset, \emph{without} using the tokens from leaving elements which caused the overhead of $Y$, and so the induction step holds for the second case as well.

Applying the induction claim in the particular case $j = 0$, we conclude that we can pay all sets covering up to level $i_{crit}$ (one token each) by using tokens from elements that left levels between $0$ and $i_{crit}$ that were necessary to raise $\mathcal{D}_{0}^{i_{crit}}$ to $\frac{1}{2} \cdot \frac{\epsilon}{\beta} \cdot \mathcal{C}_{0}^{i_{crit}}$.
By Equation (\ref{eq1half}), since level $i_{crit}$ is half-critical, $\mathcal{D}_{0}^{i_{crit}} \geq \frac{1}{2} \cdot \frac{\epsilon}{\beta} \cdot \mathcal{C}_{0}^{i_{crit}}$, thus we have enough tokens, and the lemma holds. 
\qed
\end{proof}

\noindent Let $t(S)$ be the total number of times that the set $S$ joins or leaves the SC. Thus, the total number of changes to the SC throughout the entire update sequence, i.e., the total recourse, is $\sum_{S \in \mathcal{F}} t(S)$. Let $r(S)$ be the number of partial resets in which $S$ participated in as a covering set (in $\tilde{\mathcal{F}}$ and $0 \leq l_{cov}(S)$), meaning $S$ was part of the SC right before the partial reset, and it participated in it. Denote by $\mathcal{F}'$ the collection of sets that enter the set cover only once, and never participate in a reset as a covering set. Meaning, every $S' \in \mathcal{F}'$ sometime throughout the sequence enters the set cover, and $r(S') = 0$. Let $m' = |\mathcal{F}'|$.

\begin{obs} \label{eighth}
For any $S \in \mathcal{F}$, $t(S) \leq 2r(S) + 1$.
\end{obs}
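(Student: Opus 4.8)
The plan is to track, over the course of the update sequence, the $\{0,1\}$-valued indicator of whether $S$ currently belongs to the maintained set cover $\mathcal{S}$. By construction $\mathcal{S}_0 = \emptyset$, so this indicator starts at $0$. Every time $S$ \emph{joins} $\mathcal{S}$ it flips from $0$ to $1$, and every time $S$ \emph{leaves} $\mathcal{S}$ it flips from $1$ to $0$; hence the joins and the leaves strictly alternate, beginning with a join, and the number of joins exceeds the number of leaves by at most one (equality holds exactly when $S$ is outside $\mathcal{S}$ at the end of the sequence). Consequently $t(S) = (\#\text{joins}) + (\#\text{leaves}) \le 2\cdot(\#\text{leaves}) + 1$, and everything reduces to showing that the number of times $S$ leaves $\mathcal{S}$ equals $r(S)$.

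For the ``$\le$'' direction of that claim I would invoke the structural fact recorded earlier in the paper: a covering set leaves $\mathcal{S}$ \emph{only} due to a partial reset, never due to an element insertion, an element deletion, or a local rise. Thus every leave of $S$ is attributable to some partial reset in which $S$ was covering just before the reset and got emptied out, i.e., a reset in which $S$ participates as a covering set. Conversely, in \emph{each} such reset, by the PartialReset procedure the algorithm sets $Cov(S) \leftarrow \emptyset$ and $\mathcal{S} \leftarrow \mathcal{S} \setminus \{S\}$ for every $S \in \tilde{\mathcal{F}}$, and among these the ones whose membership in $\mathcal{S}$ actually changes are precisely those with $0 \le l_{cov}(S) \le i_{crit}$ — which is exactly the condition defining $r(S)$ (sets in $\tilde{\mathcal{F}}$ with $l_{cov}(S) = -1$ cause no change to $\mathcal{S}$). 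So each reset counted by $r(S)$ contributes exactly one leave, giving $\#\text{leaves} = r(S)$.

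Putting the two pieces together yields $t(S) \le 2r(S)+1$, as desired.

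I do not anticipate a genuine obstacle here, but the place to be careful is the bookkeeping of what counts as a ``join'' versus a ``leave'': one must check that a local rise or an insertion that involves a set already in $\mathcal{S}$ does not register as a spurious $0\to1$ transition (it merely changes $l_{cov}(S)$, not membership), and that within a single partial reset the re-selection of $S$ by the static greedy procedure necessarily happens \emph{after} $S$ has been emptied in the first phase — so that such a reset produces at most one leave immediately followed by at most one join of $S$, rather than some interleaving. One should also note the edge cases that a reset in which $S$ lies in $\tilde{\mathcal{F}}$ with $l_{cov}(S)=-1$ produces no leave, and that the initial ``imaginary'' reset on the empty system is harmless. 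None of these affect the alternation argument, which is the heart of the proof.
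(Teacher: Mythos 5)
Your proof is correct and takes essentially the same approach as the paper's: both rest on the alternation argument (joins and leaves strictly alternate, starting with a join since $\mathcal{S}_0 = \emptyset$, so $\#\text{joins} \le \#\text{leaves} + 1$) combined with the structural fact that a covering set can leave $\mathcal{S}$ only by participating in a partial reset as a covering set. The equality $\#\text{leaves} = r(S)$ you take the trouble to establish is a bit stronger than needed — only the inequality $\#\text{leaves} \le r(S)$ is used to conclude $t(S) \le 2r(S)+1$ — but the reasoning is sound.
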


\begin{proof}
A covering set may leave the SC only if it participates in a partial reset as a covering set. Clearly, for a set to join the SC it must have previously left it. Thus the number of times a set joins or leaves the SC is $\leq$ twice the number of times it participates in a reset as a covering set, plus $1$ since initially the SC is empty (therefore it began not in the SC). 
\qed
\end{proof}

\begin{obs} \label{eighthh}
The total recourse is $\sum_{S \in \mathcal{F}} t(S) \leq \sum_{S \in \mathcal{F}} (3 \cdot r(S)) + m'$.
\end{obs}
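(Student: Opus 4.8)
The plan is to decompose the total recourse $\sum_{S \in \mathcal{F}} t(S)$ by splitting $\mathcal{F}$ into the sets with $r(S) \ge 1$ and those with $r(S) = 0$, and to bound each part separately.

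First I would handle the sets with $r(S) \ge 1$. For these, Observation \ref{eighth} gives $t(S) \le 2 r(S) + 1$, and since $r(S) \ge 1$ this is at most $3 r(S)$; summing over these sets gives a contribution of at most $\sum_{S \,:\, r(S) \ge 1} 3 r(S)$, which equals $\sum_{S \in \mathcal{F}} 3 r(S)$ because $r(S) = 0$ for every other set.

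Next I would handle the sets with $r(S) = 0$. The key point is that such a set never leaves the SC: a set can leave the SC only by being emptied during a partial reset, and any covering set emptied in a reset lies in $\tilde{\mathcal{F}}$ with $l_{cov}(S) \le i_{crit}$, hence participates in that reset as a covering set and is counted by $r(S)$. Therefore a set $S$ with $r(S) = 0$ satisfies $t(S) \in \{0,1\}$ --- it either never enters the SC, or it enters exactly once and remains there forever. The sets falling in the latter case are precisely those making up $\mathcal{F}'$, so the total contribution of all sets with $r(S) = 0$ is exactly $|\mathcal{F}'| = m'$. Adding the two contributions yields $\sum_{S \in \mathcal{F}} t(S) \le \sum_{S \in \mathcal{F}} 3 r(S) + m'$, as desired.

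The whole argument is essentially bookkeeping layered on top of Observation \ref{eighth}; the only step needing a moment of care is the claim that a set with $r(S) = 0$ never leaves the SC, which I would justify directly from the structural fact (already used in the proof of Observation \ref{eighth}) that a set leaves the SC only through a partial reset in which it is, by definition, a participating covering set. I do not expect any genuine obstacle here beyond making that fact explicit.
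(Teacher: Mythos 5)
Your proof is correct and takes essentially the same approach as the paper: both decompose $\mathcal{F}$ according to whether $r(S) \ge 1$ (bounding $t(S) \le 3 r(S)$ via Observation \ref{eighth}) or $r(S) = 0$ (where $t(S) \in \{0,1\}$ and the $t(S)=1$ cases are exactly $\mathcal{F}'$), differing only in superficial bookkeeping.
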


\begin{proof}
For any set $S \notin \mathcal{F'}$, we have that $t(S) \leq 3r(S)$, because either $r(S) \geq 1$ and then by Observation \ref{eighth} $t(S) \leq 3r(S)$, or $S$ never enters the set cover and then $t(S) = 0$. For any set $S' \in \mathcal{F}'$, we have that $t(S') = 1$. Summing up, we get $\sum_{S \in \mathcal{F}} t(S) \leq \sum_{S \in \mathcal{F}} (3 \cdot r(S)) + m'$.
\qed
\end{proof}

\noindent Our goal now is to prove that $m' = O(K)$, where $K$ is the length of the sequence. We partition $\mathcal{F}'$ into three collections of sets, denoted by $\mathcal{F}'_1$, $\mathcal{F}'_2$ and $\mathcal{F}'_3$. $\mathcal{F}'_1$ contains all the sets $S'_1 \in\mathcal{F'}$ such that the one time that $S'_1$ enters the set cover, it is via insertion. $\mathcal{F}'_2$ contains all the sets $S'_2 \in\mathcal{F'}$ such that the one time that $S'_2$ enters the set cover, it is via local rise. $\mathcal{F}'_3$ contains all the sets $S'_3 \in\mathcal{F'}$ such that the one time that $S'_3$ enters the set cover, it is via partial reset. Let $m'_i = |\mathcal{F}'_i|$ for $i=1,2,3$.

\begin{cl} \label{insK}
$m'_1 \leq K$.
\end{cl}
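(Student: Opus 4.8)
The plan is to bound $m'_1$ by charging each set of $\mathcal{F}'_1$ to a distinct insertion step. First I would recall the structure of the insertion algorithm (Algorithm~\ref{alg1}): upon inserting an element $e$ we set $S=\arg\max_{S'}\{l_{cov}(S'):S'\in\mathcal{F}_e\}$ and add $e$ to $Cov(S)$. If some set of $\mathcal{F}_e$ is already covering, then $S$ is such a set and no new set joins the cover as a direct consequence of this assignment; only when \emph{no} set of $\mathcal{F}_e$ is currently in the cover do we bring a brand-new set into it, namely the (deterministically, uniquely determined) minimum-weight set of $\mathcal{F}_e$. Hence each insertion step causes at most one set to enter the cover ``via insertion''.

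Next I would observe that, by the definition of $\mathcal{F}'_1$, each $S'_1\in\mathcal{F}'_1$ enters the set cover exactly once throughout the whole update sequence, and that single entry is via an insertion. Mapping each such $S'_1$ to the insertion step at which it enters the cover therefore gives a well-defined map from $\mathcal{F}'_1$ to the set of insertion steps; by the previous paragraph this map is injective, since two distinct sets of $\mathcal{F}'_1$ cannot both be introduced at the same insertion step. As the number of insertion steps is at most $K$, this yields $m'_1=|\mathcal{F}'_1|\le K$.

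There is essentially no genuine obstacle here; the one subtlety worth spelling out is the classification convention. Within an insertion step, after the direct assignment, \emph{ChangeLevel} may trigger a local rise and a partial reset may subsequently be performed; a set that first enters the cover during one of these sub-procedures is counted under $\mathcal{F}'_2$ or $\mathcal{F}'_3$ (or, if it later participates in a reset as a covering set, has $r(S)\ge 1$ and lies outside $\mathcal{F}'$ altogether), and hence is \emph{not} in $\mathcal{F}'_1$. With this convention, ``at most one set enters the cover via insertion per insertion step'' is precisely the statement that the minimum-weight branch of Algorithm~\ref{alg1} introduces at most one new set, which is immediate from the pseudocode.
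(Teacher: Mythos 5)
Your proof is correct and follows essentially the same route as the paper: both arguments observe that at most one set can enter the cover directly via the insertion branch of Algorithm~\ref{alg1} per update step (when no set of $\mathcal{F}_e$ is already covering), and then bound $m'_1$ by the number of insertion steps, which is at most $K$. Your write-up is more explicit about the injectivity of the charging map and about why sets entering via local rise or partial reset within the same step are excluded from $\mathcal{F}'_1$, but this is a matter of exposition rather than a different argument.
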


\begin{proof}
Since we only have up to one insertion per update step, the number of sets that joined the set cover due to insertion, meaning when element $e$ is inserted and all sets in $\mathcal{F}_e$ are not in the SC thus one must join it, is clearly $\leq K$. Therefore, $m'_1 \leq K$.
\qed
\end{proof}

\begin{cl} \label{lrK}
$m'_2 \leq K$.
\end{cl}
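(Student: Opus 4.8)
The plan is to follow the template of Claim~\ref{insK} and charge each set in $\mathcal F'_2$ to a distinct insertion. Two facts drive the argument. First, a set can enter the SC through a local rise only in response to an insertion: in the SC setting a deletion of $e$ merely removes $e$ from the counters $N_j(\cdot)$ and so can never turn a set into a PD set, while after a partial reset there are no PD sets at all by Invariant~\ref{inv1} (concretely, if $S$ were $j$-PD right after a reset up to level $i_{crit}$, then for $j\le i_{crit}$ the set $S$ -- being non-covering and containing an element of $\tilde{\mathcal U}$, hence in $\tilde{\mathcal F}$ -- would satisfy $|N_j(S)|/c(S)<\beta^{j}$ by Claim~\ref{clm1}, whereas for $j>i_{crit}$ Claim~\ref{clm2} forbids $N_j(S)$ from having grown, so $S$ was already $j$-PD before the reset, a contradiction). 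Since a local rise is invoked only from within \emph{ChangeLevel}, and \emph{ChangeLevel} is only ever called from an insertion, a deletion, a partial reset, or an earlier local rise, the whole cascade of local rises after a given update step is triggered by the insertion at that step, if any.

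Second -- and this is the crux -- I would show that each insertion triggers at most one local rise. When $e$ is inserted it is placed into the highest-level covering set of $\mathcal F_e$ (or a freshly created covering set) at level $l(e)$, so the only counters that change are $N_j(S)$ for $S\in\mathcal F_e$ and $j>l(e)$, each increasing by exactly one; by Invariant~\ref{inv3} each such $j$ exceeds $l_{cov}(S)$, so these are precisely the newly created PD conditions. Let $(S_1,j_1)$ be one with maximal $j_1$; the algorithm performs a local rise of $S_1$ to level $j_1+1$, and by Observation~\ref{obclean} $S_1$ is no longer PD. For any \emph{other} set $S'\in\mathcal F_e$ that the insertion made $j'$-PD we have $l(e)<j'\le j_1$, hence $e\in N_{j'}(S')$ beforehand while $e$ sits at level $j_1+1>j'$ afterward, so $|N_{j'}(S')|$ drops by at least one; since the insertion had pushed it up to exactly $\lceil c(S')\beta^{j'+1}\rceil$, it now falls below $c(S')\beta^{j'+1}$, and the rise of $S_1$ moves elements strictly upward and so cannot increase any $N$-counter of $S'$ at any level (and maximality of $j_1$ already precludes a PD condition at a level above $j_1$), so $S'$ is not PD at all after the rise. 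No set outside $\mathcal F_e$ was affected by the insertion, and the rise of $S_1$ only lowers their counters, so no such set becomes PD either; hence the cascade ends after the single rise of $S_1$.

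Putting the two facts together: every set that ever enters the SC via a local rise does so during the at-most-one local rise triggered by some insertion, and distinct such sets correspond to distinct insertions; since there are at most $K$ insertions, $m'_2\le K$. The only delicate point I expect is the second fact -- establishing that ``elements only move up, so $N$-counters cannot rise'' together with the unit-margin estimate genuinely caps the post-insertion cascade at length one; everything else is bookkeeping already provided by Observations~\ref{obclean}--\ref{obs2} and Claims~\ref{clm1}--\ref{clm2}.
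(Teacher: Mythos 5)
Your proof is correct and takes essentially the same approach as the paper's: both reduce the claim to showing that each update step triggers at most one local rise, and both establish this by combining the maximality of the chosen PD level $j_1$ with the fact that a local rise moves elements strictly upward (so no $N_j(\cdot)$ counter can increase after the rise). The paper phrases this as a proof by contradiction (assume two rises of $S_1$ and $S_2$, split on $S_1 = S_2$ vs.\ $S_1 \neq S_2$) and tracks the net change of $N_{j_2}(S_2)$ across the insertion and the first rise, while you argue forward by verifying directly that no PD condition survives the first rise via the unit-margin observation; these are the same argument, and your extra remarks that deletions and partial resets cannot create PD sets are correct details that the paper leaves implicit.
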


\begin{proof}
We will prove that in each update step there could be only up to one local rise. If this is indeed the case, then clearly the number of sets that joined the set cover due to local rise is $\leq K$, and so we would get that $m'_2 \leq K$. Assume by contradiction that following the update step at time $t'$ there are two local rises, of sets $S_1$ and $S_2$ which are $j_1$-PD and $j_2$-PD, respectively. Without loss of generality, assume that the local rise of $S_1$ occurs first. Recall that if we have multiple PD sets, or a set which is PD with respect to multiple levels, we perform the local rise to the \emph{highest} level. Moreover, a local rise can only raise the level of elements, thus a local rise cannot trigger another one. If $S_1 = S_2$, then following the first local rise we have $|N_{j_1+1}(S_1)| = 0$. Now, if $S_1$ is $j_2$-PD, this means that $j_2$ is higher than $j_1+1$, because $|N_{j}(S_1)| = 0$ for any $j \leq j_1+1$. But that would mean that $S_1$ was $j_2$-PD before the first local rise as well, since the first local rise can only raise the level of elements. This is a contradiction to the fact that we perform a local rise to the highest PD level. If $S_1 \neq S_2$, then the inserted element, $e$, which triggers the local rises, is clearly contained in $S_1$ and $S_2$, and in $N_{j_1}(S_1)$ and $N_{j_2}(S_2)$. Since the local rise of $S_1$ occurs first, we know that $j_1 \geq j_2$. After the first local rise, $e$ goes to level $j_1+1$. Thus, it is no longer in $N_{j_2}(S_2)$, meaning overall $|N_{j_2}(S_2)|$ did not grow from the insertion and first local rise. Since $S_2$ was not PD right before the insertion of $e$ (by Invariant \ref{inv1}), it is not PD after the first local rise. We conclude that only one local rise can occur per update step, and the claim holds.
\qed 
\end{proof}

\begin{cl} \label{prK}
$m'_3 \leq K$.
\end{cl}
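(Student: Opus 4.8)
\medskip
\noindent\textbf{Proof plan.} The plan is to bound $m'_3$ in the same spirit as Claims~\ref{insK} and~\ref{lrK}: by counting the events that can \emph{create} a member of $\mathcal{F}'_3$. A set of $\mathcal{F}'_3$ enters the set cover exactly once, and that entry is via a partial reset, so each such set is ``charged'' to a single partial reset. By Observation~\ref{onereset} there is at most one partial reset per update step, and the initial imaginary reset operates on an empty instance and therefore introduces no sets; hence the total number of (non-trivial) partial resets is at most $K$. Consequently it suffices to prove the key sub-claim: \emph{each partial reset contributes at most one new set to $\mathcal{F}'_3$}, i.e.\ no two distinct sets can both have their unique, never-reversed, set-cover entry occur during the same reset.

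To attack the sub-claim I would fix a partial reset with half-critical level $i_{crit}$ and assume for contradiction that two distinct sets $S_1,S_2$ are introduced by it and both lie in $\mathcal{F}'_3$. Since the reset runs the static greedy, which processes sets in non-increasing order of the ratio $\tfrac{|S\cap\tilde{\mathcal U}|}{c(S)}$, we may assume $S_1$ is chosen first; then $S_1$ is placed at a level $l_1$ and $S_2$ at a level $l_2$ with $l_1\ge l_2$, and by Observation~\ref{ob:added} both satisfy $l_1,l_2\le i_{crit}+1$. Because $S_1,S_2\in\mathcal{F}'$ we have $r(S_1)=r(S_2)=0$, so neither ever participates in a later reset as a covering set; combined with the fact that a covering level only rises thereafter (via local rises; Observations~\ref{obclean} and~\ref{obs1}) and can drop only in a reset the set participates in, this forces every subsequent partial reset to have critical level strictly below $l_2$. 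Thus the portion of the hierarchy at levels $\ge l_2$ is ``frozen'': it is never again cleaned by a reset.

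The hard part — which I expect to be the main obstacle — is turning this frozen coexistence into a contradiction. The idea is that, while the frozen levels are never cleaned, the counters $\mathcal{D}_l$ for $l\ge l_2$ keep being incremented whenever elements leave the initial Cov sets created there (including those of $S_1$ and $S_2$), so one should be able to invoke Claim~\ref{lem1}, applied to the sub-hierarchy of levels $\ge l_2$, to conclude that eventually a half-critical level $\ge l_2$ emerges, forcing a reset with $i_{crit}\ge l_2$ and hence $r(S_2)\ge 1$, a contradiction. The delicate point is that a half-critical level requires the accumulated dirt to dominate the cost \emph{simultaneously over all lower levels}, so one must ensure the dirt generated in the frozen region is not ``diluted'' by clean low levels; localizing the dirtiness argument to the frozen levels, together with an accounting of which sets there can still absorb incoming elements without generating dirt, is where the real work of the claim lies. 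Once the one-per-reset sub-claim is established, combining it with the bound of $K$ on the number of partial resets gives $m'_3\le K$.
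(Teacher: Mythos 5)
Your key sub-claim -- that each partial reset contributes at most one new set to $\mathcal{F}'_3$ -- is false, and the proof plan therefore does not go through. A single partial reset runs the static greedy on $\tilde{\mathcal{U}}$, which will typically add \emph{many} sets to the set cover at once. If the update sequence stops shortly afterwards (or, more generally, never again triggers a reset that touches those sets), then every one of those sets entered the set cover exactly once and has $r(\cdot)=0$, so all of them lie in $\mathcal{F}'_3$ and are all charged to the same reset. Your attempt to derive a contradiction from the ``frozen'' levels $\ge l_2$ is precisely where this breaks down: nothing forces dirt to ever accumulate there. The counters $\mathcal{D}_l$ at those levels increase only when elements leave those Cov sets, and an adversary is under no obligation to cause that. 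So the ``frozen coexistence'' is perfectly consistent, and Observation~\ref{onereset} (one reset per update step) cannot be leveraged to bound $m'_3$ by the number of resets.

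The paper instead charges to \emph{elements}, not to resets: for each $S\in\mathcal{F}'_3$, pick any element $e$ in the initial $Cov(S)$ at the reset where $S$ enters the set cover (such an $e$ exists because Cov sets created by a reset are nonempty). This assignment is injective. If $e$ were assigned to both $S_1$ (at an earlier reset) and $S_2$ (at a later reset), then at the second reset $e\in\tilde{\mathcal{U}}$, and $e\in S_1$ with $S_1$ still covering (it entered once and, having $r(S_1)=0$, never leaves). Since $S_1\notin\tilde{\mathcal{F}}$ at the second reset but $S_1$ contains the element $e\in\tilde{\mathcal{U}}$, it must be that $l_{cov}(S_1)>i_{crit}\ge l(e)$, contradicting Invariant~\ref{inv3}. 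Injectivity gives $m'_3\le$ (number of elements ever inserted) $\le K$. If you want to salvage your plan, the correct per-reset bound is not ``one set'' but ``at most $|\tilde{\mathcal{U}}|$ sets'', and summing $|\tilde{\mathcal{U}}|$ over all resets again reduces to counting elements, i.e., to the paper's argument.
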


\begin{proof}
We will show that we can assign a separate element to each set $S \in \mathcal{F}'_3$. Since there are no more than $K$ elements (up to one inserted per update step), this will conclude the proof. The element $e$ that we assign to $S \in \mathcal{F}'_3$ is an element that joins $Cov(S)$ when it is created in a partial reset (at least one must exist). Assume by contradiction that $e$ is assigned to two sets in $\mathcal{F}'_3$, denoted by $S_1$ and $S_2$. Since $e$ joins $Cov(S_1)$ and $Cov(S_2)$ following a reset, these resets must occur at different time steps, so assume that the reset where $Cov(S_1)$ is created occurs first. After $e \in Cov(S_1)$, for it to join $Cov(S_2)$ in the second reset it must be part of $\tilde{\mathcal{U}}$ in that second reset. Since $S_1$ cannot participate in this reset in $\tilde{\mathcal{F}}$ (by definition of $\mathcal{F}'_3$), $l_{cov}(S_1)$ must be higher than $l(e)$ right before the second reset. This, however, is a contradiction to Invariant \ref{inv3}. The claim follows.
\qed
\end{proof}

\begin{cor} \label{endrc}
$m' = O(K)$.
\end{cor}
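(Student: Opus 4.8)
The plan is to observe that Corollary \ref{endrc} is an immediate consequence of the case analysis already in place, so the only remaining work is to assemble the three bounds. Recall that $\mathcal{F}'$ was split into the collections $\mathcal{F}'_1$, $\mathcal{F}'_2$ and $\mathcal{F}'_3$ according to whether the unique time a set $S'\in\mathcal{F}'$ enters the set cover happens via an element insertion, via a local rise, or via a partial reset. This is a genuine partition of $\mathcal{F}'$: by definition every $S'\in\mathcal{F}'$ enters the cover exactly once and never participates in a reset as a covering set, and a set can only enter the cover through one of those three mechanisms, so exactly one of them applies. Hence $m' = m'_1 + m'_2 + m'_3$.

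I would then simply invoke Claims \ref{insK}, \ref{lrK} and \ref{prK}, which give $m'_1\le K$, $m'_2\le K$ and $m'_3\le K$ respectively, and conclude $m' \le 3K = O(K)$. The unifying idea behind all three claims is a charging argument against the $K$ update steps: for $\mathcal{F}'_1$ each such set is charged to the insertion that forced it into the cover, and there is at most one insertion per step; for $\mathcal{F}'_2$ one first establishes (as in the proof of Claim \ref{lrK}, using Invariant \ref{inv1}) that a single update step triggers at most one local rise, so each set entering via a local rise is charged to a distinct step; and for $\mathcal{F}'_3$ each such set is charged to a distinct element — one that joins its freshly created $Cov$ set during the reset — with Invariant \ref{inv3} ruling out double-charging, and there are at most $K$ elements ever inserted.

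There is essentially no obstacle left at this stage: all the real content lives in the three preceding claims (in particular the "at most one local rise per update step" argument and the injectivity argument for $\mathcal{F}'_3$, both leaning on Invariants \ref{inv1} and \ref{inv3}). The only point worth a second look is that the three collections are indeed disjoint and exhaustive — i.e. that a set which "enters the cover exactly once and never participates in a reset as a covering set" has a single well-defined entry mechanism — and this is immediate from the definition of $\mathcal{F}'$ together with the fact that insertion, local rise, and partial reset are the only ways a set can enter $\mathcal{S}$.
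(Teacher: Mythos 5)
Your proposal matches the paper's proof exactly: the corollary is obtained by writing $m' = m'_1 + m'_2 + m'_3$ (justified by the fact that insertion, local rise, and partial reset are the only mechanisms by which a set can enter the cover) and then invoking Claims \ref{insK}, \ref{lrK}, and \ref{prK} to conclude $m' \le 3K = O(K)$. Your additional remarks about why the partition is well-defined and how the three claims charge against the $K$ update steps are correct but simply recapitulate content already established in the preceding claims.
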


\begin{proof}
We have that $m' = m'_1 + m'_2 + m'_3$, thus by combining Claims \ref{insK}, \ref{lrK} and \ref{prK}, we get that $m' = O(K)$.
\qed
\end{proof}

\begin{cor} \label{ninth}
The amortized recourse is $O(\epsilon^{-4} \cdot \log C) = O_{\epsilon}(\log C)$.
\end{cor}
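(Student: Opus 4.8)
The plan is to assemble the statement directly from the pieces already established, since the genuinely hard work has been carried out in Lemma \ref{seventh} and Corollary \ref{endrc}. Recall that the total recourse over the entire update sequence is $\sum_{S \in \mathcal{F}} t(S)$, where $t(S)$ counts how many times $S$ joins or leaves the set cover. By Observation \ref{eighthh} this is at most $\sum_{S \in \mathcal{F}} (3 \cdot r(S)) + m'$, so it suffices to bound $\sum_S r(S)$ and $m'$ separately, each by $O_\epsilon(\log C)$ times the sequence length $K$, and then divide by $K$.

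For the first term, I would invoke the (routine) double-counting identity $\sum_{S \in \mathcal{F}} r(S) = \sum_{p=1}^{q} |S_p|$: both sides count incidences between covering sets and the partial resets they participate in as covering sets. Lemma \ref{seventh} then gives $\sum_{p=1}^q |S_p| = O(\epsilon^{-4} \cdot \log C \cdot K)$, hence $\sum_S r(S) = O(\epsilon^{-4} \cdot \log C \cdot K)$. For the second term, Corollary \ref{endrc} gives $m' = O(K)$. Plugging both into Observation \ref{eighthh} yields a total recourse of $O(\epsilon^{-4} \cdot \log C \cdot K)$, and dividing by $K$ gives an amortized recourse of $O(\epsilon^{-4} \cdot \log C) = O_\epsilon(\log C)$.

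Finally, I would recall the reduction stated at the beginning of this section: the computation above assumed $C < n$; when $C \ge n$ one instead bounds the amortized number of set level changes by twice the amortized number of element level changes (Observation \ref{clm3}), which by Corollary \ref{cor1} is $O(\epsilon^{-3} \ln n) = O_\epsilon(\ln n)$. Taking the better of the two bounds gives amortized recourse $O_\epsilon(\min\{\log n, \log C\})$, matching the recourse claim in Theorem \ref{tm1}. There is no real obstacle left at this stage — all the difficulty has been absorbed into Lemma \ref{seventh} (the weighted token-redistribution argument, where the tokens handed out per element level change are scaled down geometrically for levels above $\lceil \log_\beta C \rceil$ precisely so that the blow-up factor is $\log C$ rather than $\log n$) and into Corollary \ref{endrc}; the only thing to double-check is the bookkeeping behind Observation \ref{eighth}, namely that a covering set can leave the SC \emph{only} through a partial reset in which it participates as a covering set, which is immediate from the description of the algorithm.
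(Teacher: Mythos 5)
Your proof is correct and follows essentially the same approach as the paper's: you bound the total recourse by $\sum_S (3 r(S)) + m'$ via Observation \ref{eighthh}, identify $\sum_S r(S)$ with $\sum_p |S_p|$ by double counting (which the paper uses implicitly in its chain of equalities), and then apply Lemma \ref{seventh} and Corollary \ref{endrc}. Your closing remark about the $C \ge n$ case via Observation \ref{clm3} and Corollary \ref{cor1} matches the assumption $C < n$ stated at the start of Section \ref{rcrs}.
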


\begin{proof}
By Lemma \ref{seventh}, Observation \ref{eighthh} and Corollary \ref{endrc}, the total recourse satisfies 
$$\sum_{S \in \mathcal{F}} t(S) ~\leq~ 3 \cdot \left( \sum_{S \in \mathcal{F}} r(S)\right) + m' ~=~ 3 \cdot \left( \sum_{p=1}^q |S_p| \right) + m' = O(\epsilon^{-4} \cdot \log C \cdot K) + O(K) = O(\epsilon^{-4} \cdot \log C \cdot K),$$ 

\noindent 
yielding the required result.
\qed
\end{proof}

\noindent \textbf{Remark:} Note that for DS, Observation \ref{eighth} turns into $t(v) \leq 2r(v)$ for any $v \in V$, since in DS all vertices begin \emph{in} the dominating set, as opposed to SC where the set cover begins empty. Thus, a vertex must leave the DS first in order to join it later. Due to this difference, in DS the recourse bound follows immediately from Lemma \ref{seventh} and Observation \ref{eighth}.

\section{Approximation Factor} \label{approx}

\begin{lem} \label{lem4.5sc}
Let $OPT$ be the (dynamic) minimum cost of any collection of sets that covers the (dynamic) universe of elements $\mathcal{U}$. Recall that $\mathcal{C}$ denotes the total cost of all sets in the set cover obtained by our algorithm. Then $\mathcal{C} < (1+31\epsilon) \cdot \ln n \cdot OPT$.
\end{lem}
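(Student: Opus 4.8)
The plan is to combine the classical greedy analysis with the invariants maintained by the dynamic algorithm. The key point is that our hierarchical level structure is essentially "simulating" the greedy algorithm: a clean covering set $S$ at level $l_{cov}(S)$ has inverse ratio $\frac{|Cov(S)|}{c(S)} \in [\beta^{l_{cov}(S)}, \beta^{l_{cov}(S)+2})$ (Equation~\eqref{cleandom} together with Observation~\ref{obs1}), which means $S$ covers its elements at roughly the per-element cost $\frac{c(S)}{|Cov(S)|}$ that a greedy step would incur. First I would set up the standard charging: each element $e$ pays $\frac{c(S)}{|Cov(S)|}$ where $S$ is the set covering it; then $\mathcal{C} = \sum_{S} c(S) = \sum_{e} (\text{price of } e)$. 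The crux of the greedy analysis is that if we order elements by the time/level at which they were covered, the $k$-th element (counting down from the most expensive) pays at most roughly $\frac{OPT}{k}$, yielding the harmonic-sum bound $\sum_{k=1}^{n} \frac{OPT}{k} \approx \ln n \cdot OPT$. Here "most expensive" corresponds to "lowest level," since higher inverse ratio means cheaper per-element cost.

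The complication relative to static greedy is twofold: (i) our partial resets re-run greedy only on a sub-instance $\tilde{\mathcal{U}} \subseteq \mathcal{U}$ with a restricted set family $\tilde{\mathcal{F}}$, so the "greedy optimality" of a chosen set is only with respect to that sub-instance, not all of $\mathcal{F}$; and (ii) we allow a $\beta$-buffer in the PD definition and use $(1+\epsilon)$-approximate $N_j(S)$ counters, and we tolerate an $\epsilon$-fraction of dirty weight (Invariant~\ref{inv2}) before resetting. I would handle (i) by exploiting Invariant~\ref{inv3}: any element not in $\tilde{\mathcal{U}}$ is covered at a level $> i_{crit}$, and by Invariant~\ref{inv3} all sets containing it that could "help" are also at high levels, so restricting to $\tilde{\mathcal{F}}$ loses nothing for the elements actually being re-covered — an optimal cover of the full universe, intersected with the high-level structure, still provides a feasible low-cost cover of $\tilde{\mathcal{U}}$ using only sets available to the reset (this is where Observation~\ref{obs2} and the placement-into-levels rule matter). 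I would handle (ii) by absorbing all the slack — the factor-$\beta^2$ buffer from Observation~\ref{obs1}, the $(1+\epsilon)$ counter error, and the $\frac{\epsilon}{\beta}$ dirty-weight slack from Invariant~\ref{inv2} — into the constant, tracking that each source contributes only a $(1+O(\epsilon))$ multiplicative factor; multiplying three or four such factors and using $\epsilon < 0.4$ gives $(1+31\epsilon)$ after a crude bound on the cross terms.

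Concretely, the step order I would follow: (1) Show that at any time, every covering set is "clean-like," i.e., satisfies $\beta^{l_{cov}(S)} \le \frac{|Cov(S)|}{c(S)} < \beta^{l_{cov}(S)+2}$ — this is already Observations~\ref{obclean}, \ref{parclean}, \ref{obs1}, so it is essentially free. (2) Define the price of each element and write $\mathcal{C} = \sum_e \text{price}(e)$. (3) For the elements covered by a single partial reset (or by the imaginary initial reset / local rises between resets), run the greedy harmonic argument against $OPT$ restricted appropriately, getting that their total price is at most $(1+O(\epsilon)) H_{n} \cdot OPT_{\text{local}}$ where $H_n \le \ln n + 1$. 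The subtle point is bounding $OPT_{\text{local}}$ for each reset by the global $OPT$; I expect this to be the main obstacle, since a naive bound could sum $OPT$ over many resets. The resolution is that each element is "finalized" (gets its final price) essentially once per level-change, and the dirty-weight accounting of Invariant~\ref{inv2} ensures that re-covered weight is always a small fraction, so the prices telescope rather than accumulate — i.e., I would charge the price of an element to its *current* covering set and argue that across its lifetime the sum of price-changes is geometric, dominated by its final price times $O(1)$. (4) Convert "per-element price summed in harmonic fashion" into $\ln n \cdot OPT$ via the standard argument that the $k$-th cheapest-covered element among those covered by any fixed optimal set $S^* \in OPT$ pays at most $\frac{c(S^*)}{k} \cdot (1+O(\epsilon))$, because at the moment it was covered, the set that covered it had inverse ratio within a $\beta^2(1+\epsilon)$ factor of the best available, which includes $S^*$ with its then-uncovered elements. (5) Sum over $S^* \in OPT$ and over $k = 1, \ldots, |S^*|$ to get $\sum_{S^*} c(S^*) H_{|S^*|} (1+O(\epsilon)) \le (1+O(\epsilon)) \ln n \cdot OPT$, then absorb constants to reach $(1+31\epsilon)$. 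The hardest part, to reiterate, is step (3)–(4): making the "greedy optimality within $\tilde{\mathcal{F}}$" argument rigorous across resets without double-counting, which requires carefully using Invariant~\ref{inv3} to show the restriction to $\tilde{\mathcal{F}}$ never hurts the elements being re-covered.
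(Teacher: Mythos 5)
Your high-level intuition is right (charge costs to per-element prices, aim for a harmonic-sum bound against $OPT$, absorb the $\beta$-buffer, counter error, and dirty-weight slack as $(1+O(\epsilon))$ factors), and you correctly identify the hard part of your plan in step (3)--(4): making the ``greedy optimality at the moment of covering'' argument rigorous across local rises and partial resets, without double-counting or summing $OPT_{\text{local}}$ over many resets. But your proposed fix --- charge the price to the current covering set and argue the price-changes ``telescope geometrically'' --- is a hand-wave that does not obviously close the gap, and it is precisely this complication that the paper's proof is structured to avoid.

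The paper never reasons about ``the moment $e$ was covered'' at all; the argument is entirely about the \emph{current state} of the data structure. Concretely: define $q_e = \beta^{-l_{cov}(S)}$ where $S$ is the set currently covering $e$. Then Observations~\ref{obclean} and~\ref{parclean} give that the initial Cov set of any set at level $j$ had size $\geq c(S)\beta^j$, so $\mathcal{C}_j^j \leq \mathcal{D}_j^j + \sum_{e:l(e)=j} q_e$; summing over levels and using Invariant~\ref{inv2} (system not dirty) yields $\mathcal{C} < \beta\sum_e q_e$. The harmonic bound on $\sum_e q_e$ then comes not from any history of greedy choices but from Invariant~\ref{inv1} applied to each set $S_i$ in the \emph{optimal} solution: if you sort the elements of $S_i$ by current level and the $j$-th one had $q_{e_j} > \frac{c(S_i)}{x_i-j+1}\cdot\beta^3$, then $S_i$ would have more than $c(S_i)\beta^{l(e_j)+3}$ elements at level $\leq l(e_j)$ that it can cover, i.e.\ $S_i$ would be positive-dirty --- a contradiction. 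This gives $\sum_{e\in S_i} q_e \leq \beta^3 c(S_i) H_n$ with no reference to which set actually covered $e$, when, or which sets were ``available'' during which reset. Your step (4) claim that the covering set ``had inverse ratio within a $\beta^2(1+\epsilon)$ factor of the best available, which includes $S^*$'' is exactly what fails for local rises (which trigger on a threshold, not an optimality criterion) and for partial resets restricted to $\tilde{\mathcal{F}}$; the no-PD invariant on the optimal set sidesteps both issues and is the missing ingredient in your proposal.
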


\begin{proof}
For any element $e \in \mathcal{U}$ covered by a set $S$ (meaning $e \in Cov(S)$), let: 

\begin{equation} \label{firstap}
q_e = \frac{1}{\beta^{l_{cov}(S)}}
\end{equation} 

\noindent Moreover, for any level $j$ observe that:

\begin{equation} \label{secondap}
\mathcal{C}_j^j = \sum_{S \in \mathcal{S}_j^j} c(S) \leq \mathcal{D}_j^j + \sum_{e \: : \: l(e)=j}q_e,
\end{equation}

\noindent where recall that $\mathcal{D}_j^j$ is the dirt counter of level $j$. Indeed, for any set $S \in \mathcal{S}_j^j$, by Observations \ref{obclean} and \ref{parclean} the initial $Cov(S)$ was of size at least $c(S) \cdot \beta^j$. Since every leaving element raises $\mathcal{D}_j^j$ by $\frac{1}{\beta^j}$, and every remaining element $e$ has $q_e = \frac{1}{\beta^j}$, we get that every element in the initial $Cov(S)$ contributes $\frac{1}{\beta^j}$ to the right hand side. Therefore, together they contribute at least $c(S)$. Summing up on all levels we get:

\begin{equation} \label{secondap1}
\mathcal{C} = \sum_{S \in \mathcal{S}} c(S) \leq \mathcal{D} + \sum_{e \in \mathcal{U}}q_e
\end{equation}  

\noindent By Invariant \ref{inv2}, $\mathcal{D} < \frac{\epsilon}{\beta} \cdot \mathcal{C}$. Plugging in Equation (\ref{secondap1}), we obtain:

\begin{equation} \label{secondap2}
\mathcal{C} < \beta \cdot \sum_{e \in \mathcal{U}}q_e
\end{equation} 

\noindent Denote by $S_1,S_2, \ldots S_k$ the sets in the optimal set cover, and consider one of these sets, $S_i$. Let $S_i = \{e_1, e_2, \ldots, e_{x_i}\}$ where $x_i = |S_i|$. Notice that each element $e_j \in S_i$ is not necessarily in $Cov(S_i)$. Without loss of generality, assume that $l(e_1) \geq l(e_2) \geq \ldots \geq l(e_{x_i})$. We make the following claim, which relies on Invariant \ref{inv1} (there are no PD sets).

\begin{cl} \label{importantap}
For any $e_j \in S_i$, $q_{e_j} \leq \frac{c(S_i)}{x_i - j + 1} \cdot \beta^3$.
\end{cl}

\begin{proof}
Assume by contradiction that $q_{e_j} > \frac{c(S_i)}{x_i - j + 1} \cdot \beta^3$, and denote $l=l(e_j)$. By Equation (\ref{firstap}), $q_{e_j} = \frac{1}{\beta^{l}}$. Thus, $\frac{1}{\beta^{l}} > \frac{c(S_i)}{x_i-j+1} \cdot \beta^3$, and we obtain:

\begin{equation} \label{thirdap}
x_i-j+1 > c(S_i) \cdot \beta^{l+3}
\end{equation}

\noindent Since $l(e_{j'}) \leq l$ for any $j' \geq j$, we know that all of the $(x_i-j+1)$ elements: $e_j, e_{j+1}, \ldots, e_{x_i}$ are covered at a level $\leq l$. Since all of these elements are in $S_i$, we know that $|N_{l+1}(S_i)| \geq x_i-j+1$. Therefore, from Equation (\ref{thirdap}) we get that $|N_{l+1}(S_i)| > c(S_i) \cdot \beta^{l+3}$. Since we maintain $|N_{l+1}(S_i)|$ with a possible additive error of $\epsilon \cdot \beta^{l+1}$ (see Section \ref{updatetime} for details), we get that $|N_{l+1}(S_i)| > c(S_i) \cdot \beta^{l+2}$. Hence, $S_i$ is $(l+1)$-PD by definition, which is a contradiction to Invariant \ref{inv1}. 
\qed
\end{proof}

\noindent Equipped with Claim \ref{importantap}, we continue now with the proof of Lemma \ref{lem4.5sc}. Summing up on all $e \in S_i$, we get that:

\begin{equation} \label{fourthap}
\sum_{e \in S_i} q_e \leq \beta^3 \cdot c(S_i) \cdot (1+\frac{1}{2}+\frac{1}{3}+ \ldots + \frac{1}{x_i}) \leq \beta^3 \cdot c(S_i) \cdot H_{n} \leq \beta^3 \cdot c(S_i) \cdot (\ln n +1),
\end{equation}

\noindent where $H_{n}$ is the $n$th harmonic number. Now, by summing up on all $k$ sets $S_i$ in the optimal solution, we obtain:

\begin{equation} \label{fifthap}
\sum_{i=1}^k \sum_{e \in S_i} q_e \leq \sum_{i=1}^k \beta^3 \cdot c(S_i) \cdot (\ln n +1) = OPT \cdot \beta^3 \cdot (\ln n +1)
\end{equation}

\noindent Finally, since the MSC covers all elements, we know that:

\begin{equation} \label{sixthap}
\sum_{e \in \mathcal{U}}q_e \leq \sum_{i=1}^k \sum_{e \in S_i} q_e
\end{equation}

\noindent Therefore, by Equations (\ref{secondap2}), (\ref{fifthap}) and (\ref{sixthap}), we obtain:

\begin{equation} \label{seventhap}
\mathcal{C} < OPT \cdot \beta^4 \cdot (\ln n +1)
\end{equation}

\noindent We shall assume $\ln n \geq \frac{1}{\epsilon}$, thus $\mathcal{C} < OPT \cdot \beta^5 \cdot \ln n$. For $\epsilon<1$, we get that $\beta^5 < 1+31 \epsilon$, thus we finally obtain:

$$\mathcal{C} < OPT \cdot (1+31\epsilon) \cdot \ln n$$

\qed
\end{proof}

\noindent Since $\mathcal{C}$ is the cost of the SC given by the algorithm and $OPT$ is the cost of the optimal SC, with a simple scaling argument we obtain the approximation ratio of $(1+\epsilon) \cdot \ln n$.

\bigskip

\noindent \textbf{Remark.} If the size of each set is upper bounded by $\kappa$, then in Equation (\ref{fourthap}) we get $H_{\kappa}$, and so we actually obtain $\mathcal{C} < OPT \cdot (1+31\epsilon) \cdot \ln \kappa$. In the particular case of DS, since each vertex can dominate up to $\Delta+1$ vertices, we get $\mathcal{C} < OPT \cdot (1+31\epsilon) \cdot \ln (\Delta+1)$.

\bibliographystyle{latex8}
\bibliography{randomMMbibfile}

\end{document}